\newcommand{\eat}[1]{}
\newtheorem{thm}{Theorem}
\newtheorem{defn}{Definition}
\begin{document}
\title{K-Regret Queries Using Multiplicative Utility Functions}

\author{Jianzhong Qi}
\affiliation{%
  \institution{The University of Melbourne}
  \city{Melbourne}
  \state{Victoria}
  \postcode{3010}
  \country{Australia}
  }
\email{jianzhong.qi@unimelb.edu.au}

\author{Fei Zuo}
\affiliation{%
  \institution{The University of Melbourne}
  \city{Melbourne}
  \state{Victoria}
  \postcode{3010}
  \country{Australia}
  }
\email{fzuo@student.unimelb.edu.au}

\author{Hanan Samet}
\affiliation{%
  \institution{University of Maryland}
  \city{College Park}
  \state{Maryland}
  \postcode{20742}
  \country{USA}
  }
\email{jianzhong.qi@unimelb.edu.au}

\author{Jia Cheng Yao}
\affiliation{%
  \institution{The University of Melbourne}
  \city{Melbourne}
  \state{Victoria}
  \postcode{3010}
  \country{Australia}
  }
\email{yaoj1@student.unimelb.edu.au}

\begin{abstract}
The $k$-regret query aims to return a size-$k$ subset $\mathcal{S}$ of  a database $\mathcal{D}$ such that, for any query user that selects a data object from this size-$k$ subset $\mathcal{S}$ rather than from database $\mathcal{D}$, her regret ratio is minimized. The regret ratio here is modeled by the relative difference in the optimality between the locally optimal object in $\mathcal{S}$ and the globally optimal object in $\mathcal{D}$. The optimality of a data object in turn is modeled by a utility function of the query user. 
Unlike traditional top-$k$ queries, the $k$-regret query does not minimize the regret ratio for a specific utility function. Instead, 
it considers a family of infinite utility functions $\mathcal{F}$, and aims to 
find a size-$k$ subset that minimizes the maximum regret ratio of any utility function in $\mathcal{F}$. 

Studies on $k$-regret queries have focused on the family of additive utility functions, which 
have limitations in modeling individuals' preferences and decision making processes, especially for a common observation called the diminishing marginal rate of substitution (DMRS).
We introduce $k$-regret queries with multiplicative utility functions, which are more expressive in modeling the DMRS, to overcome those limitations.
We propose a query algorithm with bounded regret ratios. 
To showcase the applicability of the algorithm, we apply it to a special family of multiplicative utility functions, the Cobb-Douglas family of utility functions, and a closely related family of  utility functions, the Constant Elasticity of Substitution family of utility functions, both of which are frequently used utility functions in microeconomics. 
After a further study of the query properties, we propose a heuristic algorithm that produces even smaller regret ratios 
in practice. Extensive experiments on the proposed algorithms confirm that they consistently achieve small maximum regret ratios. 

\end{abstract}

%
%

\begin{CCSXML}
<ccs2012>
<concept>
<concept_id>10002951.10003317.10003338.10003346</concept_id>
<concept_desc>Information systems~Top-k retrieval in databases</concept_desc>
<concept_significance>500</concept_significance>
</concept>
<concept>
<concept_id>10003752.10010070.10010111.10011711</concept_id>
<concept_desc>Theory of computation~Database query processing and optimization (theory)</concept_desc>
<concept_significance>500</concept_significance>
</concept>
<concept>
<concept_id>10003752.10010070.10010111.10011710</concept_id>
<concept_desc>Theory of computation~Data structures and algorithms for data management</concept_desc>
<concept_significance>500</concept_significance>
</concept>
</ccs2012>
\end{CCSXML}

\ccsdesc[500]{Information systems~Top-k retrieval in databases}
\ccsdesc[500]{Theory of computation~Database query processing and optimization (theory)}
\ccsdesc[500]{Theory of computation~Data structures and algorithms for data management}

%
%

\keywords{$K$-regret, multiplicative utility functions, skyline, maximum regret ratio}

\thanks{
Authors' address: 
J. Qi, F. Zuo, and J. C. Yao, 
The University of Melbourne, Australia,
Email: jianzhong.qi@unimelb.edu.au, \{fzuo, yaoj1\}@student.unimelb.edu.au;
H. Samet, 
University of Maryland, USA,
Email: hjs@cs.umd.edu.}

\maketitle


\section{Introduction}\label{sec:intro}

Ever growing data have produced various databases that are beyond any user's capability to explore them in full. 
For example, Amazon has a database of over 562 million products~\cite{amazon}; Booking.com has a database of over 1.7 million hotels~\cite{booking}. 
\emph{Window queries}~\cite{Aref97} (a.k.a. \emph{range queries}~\cite{Amir01,Ang90}), 
\emph{top-$k$ queries}~\cite{DBLP:conf/icde/ChenCCT15,DBLP:conf/cikm/HuangWQZCH11,Ilyas:2008:STK:1391729.1391730,soliman2008probabilistic,DBLP:journals/tkde/WuYCJ12}, and \emph{skyline queries}~\cite{borzsony2001skyline,DBLP:conf/icde/KhalefaML08,DBLP:conf/cikm/KhalefaML10,papadias2003optimal,tan2001efficient} have been used traditionally to 
produce a representative subset $\mathcal{S}$ when a database $\mathcal{D}$ is too large. 
Window queries return data objects with attribute values falling in certain ranges, which may not represent the distribution of the full database. 
Top-$k$ and skyline queries, on the other hand, suffer  by either requiring a predefined \emph{utility function} to 
model user preferences over the data objects,  or returning an unbounded number of data objects. 
Recent studies~\cite{kessler2015k,nanongkai2010regret,zeighami2016minimizing} aim to overcome these limitations by a new type of query,  
the \emph{$k$-regret query}, which returns 
a size-$k$ subset $\mathcal{S} \subseteq \mathcal{D}$ that minimizes the 
\emph{maximum regret ratio} of any query user. 
The set $\mathcal{S}$ is called the \emph{regret-minimizing set}.
The concept of \emph{regret} comes from microeconomics~\cite{ligett2011beating}.
Intuitively, if a query user had selected the locally optimal object in $\mathcal{S}$, and were later 
shown the globally optimal object in $\mathcal{D}$, then the query user may have some regret.
A $k$-regret query uses the \emph{regret ratio}
to model how regretful the query user may be, which 
is the relative difference in the \emph{optimality} 
between the locally optimal object and the globally optimal object.
Here, the optimality is computed by a utility function.  
The $k$-regret query considers a family of infinite utility functions such as the family of linear 
summation functions. It aims to find the subset $\mathcal{S}$
that minimizes the maximum regret ratio for any utility function in such a function family.

\begin{table}
\centering
\caption{A  Computer Database}\label{tbl:example}
\BlankLine\BlankLine
\begin{tabular}{cccccccc}
\toprule
\multicolumn{1}{c}{Computer} & CPU ($p_i.c_1$) & Brand recognition ($p_i.c_2$) & $f_1(p_i)$ & $f_2(p_i)$ & $f_3(p_i)$ & $f_4(p_i)$ \\
\midrule
$p_1$ & 2.3 & 80 & 41.15	& 3.08	& \textbf{13.56} &	2.38\\
$p_2$ & 1.7 & \textbf{90} & \textbf{45.85}	& 2.58 &	12.37 &	1.77\\
$p_3$ & 2.8 & 50 & 26.40	& 3.27 &	11.83 &	2.88\\
$p_4$ & 2.1 & 55 & 28.55	& 2.63 &	10.75 &	2.17\\
$p_5$ & 2.1 & 50 & 26.05	& 2.58 &	10.25 &	2.17\\
$p_6$ & \textbf{3.0} & 55 & 29.00	& \textbf{3.52}	& 12.85	& \textbf{3.09}\\
\bottomrule
\end{tabular}
\end{table}

To illustrate the $k$-regret query, 
consider an online computer shop with a database $\mathcal{D}$ of  computers as shown in Table~\ref{tbl:example}.
There are six computers: $\mathcal{D} = \{p_1, p_2, ..., p_6\}$. Every computer $p_i$ has two attributes: CPU clock speed and brand recognition, 
denoted as $p_i.c_1$ and $p_i.c_2$, respectively. 
Here, brand recognition represents how well a brand is recognized by the customers. A larger value means that the brand is better recognized. 
Since the database may be too large to be shown in its entirety,
the shop considers showing only a size-$k$ subset $\mathcal{S} \subseteq \mathcal{D}$ in the front page as a recommendation. 
Such a subset may be $\mathcal{S} = \{p_1, p_3, p_5\}$ (i.e., $k = 3$).
Suppose that there is a customer whose preference can be expressed as a utility function $f_1(p_i) = 0.5 \cdot p_i.c_1 + 0.5 \cdot p_i.c_2$.
The customer may purchase $p_1$ from the recommended subset $\mathcal{S}$ since $p_1$ has the largest utility value: $f_1(p_1) = 
0.5 \cdot p_1.c_1 + 0.5 \cdot p_1.c_2 = 0.5\times 2.3+0.5\times 80 = 41.15 > f_1(p_3) = 26.40 > f_1(p_5) = 26.05$.  Note that  
another computer $p_2 \in \mathcal{D}$ exists with an even larger utility value $f_1(p_2) = 0.5\times 1.7+0.5\times 90=45.85$. If the customer  
later sees $p_2$, she may have some regret. Her regret ratio is computed as $\displaystyle \frac{f_1(p_2) - f_1(p_1)}{f_1(p_2)} \approx 10.25\%$.
For another customer with a different utility function $f_2(p_i) = 0.99 \cdot p_i.c_1 + 0.01 \cdot p_i.c_2$, the computer in $\mathcal{S}$ that 
best suits her preference is $p_3$: $f_2(p_3) = 0.99 \times 2.8+0.01\times 50 \approx 3.27 > f_2(p_1) \approx 3.08 > f_2(p_5) \approx 2.58$.
For this customer, the globally optimal computer in $\mathcal{D}$ is $p_6$: $f_2(p_6) = 0.99 \times 3.0+0.01\times 55 = 3.52$. 
If the customer purchases $p_3$, her regret ratio will
be $\displaystyle \frac{f_2(p_6) - f_2(p_3)}{f_2(p_6)} \approx 7.05\%$.
Since customers have different preferences and different utility functions, 
different data objects are needed to minimize their regret ratios. 
It is unlikely that a size-$k$ subset can satisfy all the customers. 
The  $k$-regret query addresses this limitation by finding a subset $\mathcal{S}$ that 
minimizes the \emph{maximum regret ratio} for a family of infinite utility functions.

Existing studies on $k$-regret queries have focused on the families of \emph{additive utility functions} (AUFs) where the overall utility of an object is
computed as the sum of the utility in each attribute of the object. The linear summation functions $f_1$ and $f_2$ are examples. 
They can be written in a more general form: 
$$f(p_i) = \sum_{j=1}^{d}\alpha_j \cdot p_i.c_j,$$ 
where $d$ denotes the number
of attributes, and $\alpha_j \in [0,1]$ is the \emph{weight} of attribute~$j$. 
Studies~\cite{kessler2015k,nanongkai2010regret} have shown that the maximum regret ratio 
of the $k$-regret query with AUFs can be bounded. 

A significant limitation of AUFs, however, is that the overall utility of an object 
 always increases at the same rate as the utility in an attribute increases. 
For an AUF $f(p_i) = \sum_{j=1}^{d}\alpha_j \cdot p_i.c_j$, the value of $f(p_i)$ 
always increases by $\alpha_j$ units for every unit of increase in attribute $j$.
A large increase in the value of an attribute may cause a dramatic change in the value of 
the overall utility. 
Objects with the maximum values in 
certain attributes tend to be favored by AUFs, especially when the value ranges vary for different attributes.
Consider again the example above. Utility function  $f_1$ favors  
$p_2$ which has the maximum value in $c_2$ but also the minimum value in $c_1$. 
The two attributes $c_1$ and $c_2$, however, have the same weight (0.5) in the utility function,  
 indicating that the user has the same preference towards the two attributes. 
The object favored by $f_1$ does not suit this preference.  Thus,
utility functions like $f_1$ 
do not model individuals' preferences and decision making processes well. 

Intuitively, as the value of an attribute gets larger, adding an extra unit to its value should 
contribute a smaller increment to the overall utility. 
For example, adding a 4GB RAM to an old home computer with a 512MB RAM would make a major  
difference in the user experience; adding a 4GB RAM to a server with 256GB RAM would probably go 
without notice. 
This is in fact a common observation in individuals' decision making process called  
the \emph{diminishing marginal rate of substitution} (DMRS)~\cite{cass1965optimum,diamond1965national,uzawa1962production,varian1992microeconomic}.
The DMRS refers to the principle that, as the utility in an attribute $j$ gets larger, 
the extent to which this utility can make up (substitute) for the utility in any other attribute $j'$ decreases (diminishes). 
Thus, as the utility of attribute $j$ gets larger, 
the increment of the overall utility when adding an extra unit to attribute $j$ decreases.

To overcome the limitation of AUFs, 
we introduce a new type of $k$-regret queries with utility functions  that are more expressive in modeling the DMRS 
-- the $k$-regret query with \emph{multiplicative utility functions} (MUFs).  
An MUF computes  the overall utility of an object as the product of the utility in each attribute: 
$$f(p_i) = \prod_{j=1}^{d} p_i.c_j^{\alpha_j}$$
An MUF helps deal with exponential-like utility functions and is more expressive in modeling the DMRS for the following reason. Since the attribute value $p_i.c_j$ 
has been raised to the power equal to the weight $\alpha_j$, as $p_i.c_j$ gets larger, the increment of $f(p_i)$ when adding an extra unit to $p_i.c_j$ decreases. 
This is because the function $g(p_i.c_j) = (p_i.c_j+1)^{\alpha_j} - p_i.c_j^{\alpha_j}$ is monotonically decreasing (i.e., $\forall \alpha_j \in [0,1]: g'(p_i.c_j) \le 0$).
An MUF models user preferences towards different attributes better. 
For example, an MUF $f_3(p_i) = p_i.c_1^{0.5} \cdot p_i.c_2^{0.5}$ has the same weight in the two attributes.
It favors $p_1$ in the example above, where $f_3(p_1) =  2.3^{0.5} \times 80^{0.5} \approx 13.56$ is 
larger than the function value of any other object. 
Object $p_1$ does not have the maximum value in either attribute but is reasonably good in both attributes. 
It suits the user preference.  
The MUF is also more robust to large value changes in an attribute, since a weight with a value between 0 and 1 is in the power computation. 
The varying value ranges caused by the different types of attributes 
can be easily handled by an MUF, enabling the use of the attribute values  in their natural form 
without normalization.

A related family of utility functions called the \emph{Constant Elasticity of Substitution (CES) functions} 
studied earlier~\cite{kessler2015k} can also model the DMRS.
CES functions are not MUFs, and they have limitations as discussed below.
A CES utility function $f(p_i)$ has the form of $f(p_i) = (\sum^d_{j=1}\alpha_j \cdot p_i.c_j^b)^{\frac{1}{b}}$ where $b$ is a system parameter. 
Raising $p_i.c_j$ to the power of $b$ allows a CES utility function to model the DMRS, making the function a popular utility function~\cite{uzawa1962production,varian1992microeconomic,vilcu2011some}. 
However, parameter $b$ is a constant across all attributes. This leads to less flexibility in modeling different diminishing marginal rates of substitution over different attributes~\cite{10.2307/1805230}. 
As studies in economics show~\cite{10.2307/1936009,miller2008}, finding a suitable value of $b$ to fit a CES  utility function to real data can be difficult and is sensitive to data construction. 
In comparison, an MUF raises $p_i.c_j$ to the power of $\alpha_j$ which can have different values for different attributes.
This allows different diminishing marginal rates of substitution over different attributes  and can be easier to fit user preferences.

The higher expressive power of MUFs brings challenges in bounding the maximum regret ratio for them.
It is difficult to tightly bound the product of a series of exponential expressions. 
To the best of our knowledge, so far, no existing bound 
has been obtained for  the $k$-regret query with MUFs.
We overcome the challenges with a novel algorithm that we call \emph{MinVar}. 
MinVar is an adaption of  the \emph{CUBE}~\cite{nanongkai2010regret} and the \emph{MinWidth}~\cite{kessler2015k} algorithms, 
which are $k$-regret query algorithms for AUFs.
The MinVar algorithm partitions the data space into multiple buckets.
Together the buckets enclose all the data objects, and one object in each bucket is returned to form the answer set $\mathcal{S}$. 
For any utility function, the corresponding optimal object $p^*$ must be in some bucket. 
There is an object $s^*$ in this bucket that has been returned in $\mathcal{S}$. 
The distance between $p^*$ and $s^*$ is bounded by the
range of attribute values spanned by the bucket in each attribute, 
which further bounds the maximum regret ratio. 
Our contribution in MinVar is a novel space partitioning strategy based on 
data distribution, which produces tighter buckets in practice.  
More importantly, we make theoretical contributions by showing 
that the MinVar algorithm can obtain a maximum regret ratio bounded between $\displaystyle \Omega(\frac{1}{k^2})$ and $\displaystyle O(\ln(1+\frac{1}{k^{\frac{1}{d-1}}}))$ for $k$-regret queries with MUFs, where $d$ denotes the number of data attributes. 
To showcase the applicability of the MinVar algorithm in real world scenarios, we apply it on $k$-regret queries with a special family of MUFs, the  \emph{Cobb-Douglas functions}, which is used extensively in economics studies for modeling the DMRS~\cite{10.2307/1811556,DIAMOND19801,vilcu2011geometric}. 
As a by-product, we derive a new upper bound $\displaystyle O(\frac{1}{k^{\frac{1}{d-1}}})$ on the maximum regret ratio for $k$-regret queries with 
CES functions~\cite{vilcu2011geometric}. This upper bound  is tighter than a previously obtained bound~\cite{kessler2015k}, while it also applies to 
the MinWidth algorithm~\cite{kessler2015k} proposed for $k$-regret queries with CES functions.

MinVar aims to \emph{bound} the maximum regret ratios rather than to \emph{minimize} them. 
Its bucket-based answer object selection strategy is conservative. 
It works well when there are a large number of data objects lying in different buckets, each of which is optimal for a different utility function. 
However, in real data sets, many of the objects may not be optimal for any utility function. Some of the buckets created by MinVar  
may not contain any data objects that are optimal for any utility function. Returning objects in those buckets does not 
contribute to lowering the maximum regret ratios. 
We will show that, for MUFs, the set $\mathcal{P}$ of all the \emph{skyline points} (objects)~\cite{borzsony2001skyline} 
in a database $\mathcal{D}$ minimizes the maximum regret ratio, which is 0. 
This is because any non-skyline point $p_i$ must be dominated by at least one skyline point $p_j$, 
and hence its utility $f(p_i)$ does not exceed the utility $f(p_j)$ for any MUF $f$.
If $|\mathcal{P}| \le k$, the entire set of $\mathcal{P}$ should be returned as the query answer set. 
Otherwise, we need to select $k$ skyline points to form a size-$k$ answer set.
We use the regret ratio to guide the selection of skyline points such that the difference in the utilities of the selected and unselected points
is  minimized. This leads to a heuristic algorithm named \emph{MaxDif} that produces even smaller maximum regret ratios  in practice.

To summarize, our paper makes the following contributions:
\begin{itemize}
\item We introduce a novel type of  $k$-regret queries -- $k$-regret queries with multiplicative utility functions, which are 
more expressive in modeling the diminishing marginal rate of substitution in making decisions.

\item  We propose an algorithm named MinVar to process the query and to bound the maximum regret ratios. 
Based on this algorithm, we obtain bounds of the maximum regret ratio for the $k$-regret query with 
multiplicative utility functions. 

\item The MinVar algorithm is deigned for multiplicative utility functions but it can also be applied to non-multiplicative utility functions. 
We showcase such applicability via two families of utility functions used in economic studies: 
(i) the Cobb-Douglas family of utility functions, which is a special type of multiplicative utility functions that has not been studied before in the context of  $k$-regret queries, and (ii) 
the CES family of utility functions, which is a family of non-multiplicative utility functions but is closely related to the Cobb-Douglas family of utility functions.
As a by-product, we derive   
an upper bound on the maximum regret ratio for $k$-regret queries with CES utility functions
that is tighter than an existing bound~\cite{kessler2015k} under the case where the function parameter $b\in(0,1)$. 
This bound applies to our MinVar algorithm as well as the MinWidth algorithm~\cite{kessler2015k} proposed for $k$-regret queries with CES utility functions.

\item 
Since MinVar aims to bound the maximum regret ratios rather than to minimize them, 
we further propose a heuristic algorithm named MaxDif that computes a size-$k$ subset of skyline points
to minimize the maximum regret ratios. 

\item We perform extensive experiments using both real and synthetic data to verify the effectiveness and efficiency 
of the proposed algorithms. The results show that the maximum regret ratios obtained by the proposed algorithms  are consistently small. 
Meanwhile, the proposed algorithms are more efficient than the baseline algorithm MaxDom~\cite{lin2007selecting}, which is a heuristic algorithm that computes 
top-$k$ representative skyline points.
\end{itemize}

The rest of the paper is organized as follows. 
Section~\ref{sec:relatedwork} reviews related work. 
Section~\ref{sec:preliminaries} presents the basic concepts.
Sections~\ref{sec:minvar} and~\ref{sec:bounds} describe the MinVar algorithm 
and derive bounds on the maximum regret ratio for $k$-regret queries with MUFs, respectively. 
Section~\ref{sec:casestudy}  showcases the applicability of MinVar to both 
MUFs and non-MUFs.
Section~\ref{sec:maxdif} presents the heuristic algorithm MaxDif.
Section~\ref{sec:exp} examines the results of our experiments while Section~\ref{sec:conclusions} concludes the paper.

\section{Related Work}\label{sec:relatedwork}
We review two queries: skyline and $k$-regret. 

\textbf{Skyline queries.}
The skyline query~\cite{borzsony2001skyline} is an earlier attempt to generate a representative subset $\mathcal{S}$ of a database $\mathcal{D}$ without 
specifying any utility functions. 
This query is defined based on the \emph{dominance} relationship. 
It considers a database $\mathcal{D}$ of $d$-dimensional points ($d \in \mathbb{N}_+$). 
Let $p_i$ and $p_j$ be two points in $\mathcal{D}$. 
Point  $p_i$ is said to \emph{dominate} point $p_j$ if and only if $ \forall l \in [1..d], p_i.c_l \ge p_j.c_l \wedge \exists l \in [1..d], p_i.c_l > p_j.c_l$, 
where $p_i.c_l$ ($p_j.c_l$) denotes the coordinate of $p_i$ ($p_j$) in dimension $l$. Here, the ``$\ge$'' and ``$>$'' operators represent the preference relationship. 
A point with a larger coordinate in dimension $l$ is  
preferable in that dimension. 
The skyline query returns the subset $\mathcal{S} \subseteq \mathcal{D}$ where each point is \emph{not} dominated by any other point in  
$\mathcal{D}$. 
It is interesting to observe that the attributes in the domain over 
which the skyline query is executed do not have to be 
spatial~\cite{Same04}, as is the case when they are embedded in a 
spatial database (e.g.,~\cite{Espe02,Same03,Same87a}), 
nor is there a requirement for a distance
function to exist between the objects (e.g., Euclidean or 
Hausdorff~{\cite{Nuta11}).

The skyline query can be answered by a two-layer nested loop over the points in $\mathcal{D}$ and another 
layer of loop over the $d$ dimensions to filter out the points dominated. 
The remaining points are  \emph{skyline points} which are the query answer. More efficient algorithms have been 
 proposed in the literature~\cite{papadias2003optimal,tan2001efficient} but are not the focus of our study. 

While the skyline query does not require a utility function, it suffers in lacking control over the size of the answer set. 
In the worst case, the entire database may be returned. 
Studies have tried to overcome this limitation by combining the skyline query with the top-$k$ query. 
For example, Xia et al.~\cite{xia2008skylining} introduce the \emph{$\varepsilon$-skyline} which adds a  \emph{weight} to each dimension of 
the data points to reflect user preference towards the dimension. 
The weights create a built-in rank for the points which can be used to answer  the \emph{top-$k$ skyline query}.
Chan et al.~\cite{chan2006high} rank the points by the \emph{skyline frequency}, i.e., how frequently a point appears as a skyline point when different numbers 
of dimensions are considered.  
A few other studies extract a representative subset of the skyline points. 
Lin et al.~\cite{lin2007selecting} propose to return the $k$ points that together dominate the most non-skyline points as the \emph{$k$ most representative skyline subset}. 
Tao et al.~\cite{tao2009distance} select 
\emph{$k$ representative skyline points} based on the distance between the skyline points instead. These studies 
bound the size of the answer set. However, they do not bound the maximum regret ratio of the set.

\textbf{$K$-regret queries.}
Nanongkai et al.~\cite{nanongkai2010regret} introduce the concept of \emph{regret minimization} to top-$k$ query processing
and propose the \emph{$k$-regret query}. 
This query  does not require query users to specify their utility functions. Instead, 
it considers a family of infinite utility functions, and finds the subset $\mathcal{S}$ that minimizes the \emph{maximum regret ratio} 
of the entire family of utility functions. 
Nanongkai et al. propose the \emph{CUBE} algorithm to process the $k$-regret query with the family of linear summation utility functions, 
i.e., each utility function $f$ is in the form of $f(p_i) = \sum_{j = 1}^{d} \alpha_j \cdot p_i.c_j$ where $\alpha_j$ denotes the \emph{weight} of  dimension $j$. 
The CUBE algorithm is efficient, but the maximum regret ratio it obtains is quite large in practice. To obtain a smaller maximum regret ratio,
in a different paper~\cite{nanongkai2012interactive}, Nanongkai et al. propose an interactive algorithm where query users are involved in
guiding the search for answers with smaller regret ratios. 
Peng and Wong~\cite{peng2014geometry} advance  the $k$-regret query studies 
by utilizing geometric properties to improve the query efficiency. 
Asudeh et al.~\cite{Asudeh:2017:ECR:3035918.3035932} use the \emph{convex hull}  
to find the data points that minimize the maximum regret ratio for linear summation utility functions. 
They propose an algorithm that can approximate the maximum regret ratio to within a user given threshold.
Cao et al.~\cite{cao_et_al:LIPIcs:2017:7056} and Chester et al.~\cite{chester2014computing} 
also consider linear summation utility functions but compute the \emph{$k$-regret minimizing sets}, which is NP-hard.

Kessler Faulkner et al.~\cite{kessler2015k} build on top of  CUBE and propose three algorithms, \emph{MinWidth}, \emph{AreaGreedy}, and \emph{Angle}. 
These three algorithms can process $k$-regret queries with 
the ``concave'', ``convex'', and CES utility functions. Nevertheless, the  ``concave'' and ``convex'' utility functions
considered have focused on \emph{additive forms} (See Braziunas and Boutilier~\cite{braziunas2007minimax} and Keeney and Raiffa~\cite{keeney1993decisions} for 
more details on \emph{additive utilities} and \emph{additive independence}). They are summations over a set of concave and convex functions. 
The CES utility functions also sum over a set of terms. 
In this paper, we introduce the use of the family of multiplicative utility functions to overcome 
the linearity limitation of the additive utility functions.  
We present an algorithm that can produce answers with bounded
maximum regret ratios for $k$-regret queries with multiplicative utility functions. 
As a by-product, we also derive a new upper bound on the maximum 
regret ratio for $k$-regret queries with CES utility functions which is tighter than a previously obtained upper bound~\cite{kessler2015k}, while the bound also applies 
to the MinWidth algorithm. 

Zeighami and Wong~\cite{zeighami2016minimizing} propose to compute the \emph{average regret ratio}. They do not 
assume any particular type of utility functions, but use sampling to obtain a few utility functions for the computation. 
This study is less relevant to our work and is not discussed further.

Note that Chester et al.~\cite{chester2014computing} have used the term \emph{$k$-regret minimizing set} to denote a subset  $\mathcal{S}$ of size $r$ 
that minimizes the \emph{maximum $k$-regret ratio}, where the regret is measured by the utility difference between  
the optimal point in $\mathcal{S}$ and the $k^{th}$ optimal point in the database $\mathcal{D}$.
We use the term \emph{$k$-regret query} following the closest related work~\cite{kessler2015k} 
to denote a query that finds the \emph{regret-minimizing set}, which is a subset  $\mathcal{S}$ of size $k$ that minimizes 
the \emph{maximum regret ratio}, where the regret is measured by the utility difference between the optimal points in $\mathcal{S}$ and $\mathcal{D}$.

The concept of regret has also been used in a classic problem in operations research -- the \emph{multi-armed bandit} (MAB) problem~\cite{10.2307/3689689}.
The MAB problem assumes $N$ \emph{arms} each associated with an unknown reward distribution. During a multi-round process, in each round, an agent chooses an arm and collects a reward generated by the corresponding reward distribution. 
Let the reward collected at round $i$ be $r_i$ and the largest expected reward of any arm be $\mu^*$.
The regret after $T$ rounds is $ T\mu^* - \sum_{i=1}^{T} \mathbb{E}(r_i)$. 
The key question in the MAB problem is how to balance the exploitation on the arm with the largest expected reward observed so far (to maximize $r_i$ for the current round) and the exploration  to find the arm with the globally largest expected reward  $\mu^*$ (to maximize $r_i$ for future rounds). This is less relevant to our problem, and we will not discuss this question further. 

\section{Preliminaries}\label{sec:preliminaries}

\begin{table}
\centering
\caption{Frequently Used Symbols}\label{tab:symbols}
\BlankLine\BlankLine
\begin{tabular}{cl}

\toprule
Symbol & Description \\
\midrule
$\mathcal{D}$ & A database\\
$n$ & The cardinality of $\mathcal{D}$\\
$d$ & The dimensionality of $\mathcal{D}$\\
$k$ & The $k$-regret query parameter\\
$\mathcal{S}$ & A size-$k$ subset selected from $\mathcal{D}$\\
$\mathcal{P}$ & The set of all the skyline points in $\mathcal{D}$\\
$p_i$& A point in $\mathcal{D}$\\
$p_i.c_j$& The coordinate value of $p_i$ in dimension $j$\\
$t$ & The number of intervals into which the data \\
       &   domain is partitioned in a dimension\\
\bottomrule
\end{tabular}
\end{table}

We present basic concepts and a problem definition in this section. The symbols frequently 
used in the discussion are  summarized in Table~\ref{tab:symbols}.

We consider a database $\mathcal{D}$ of $n$ data objects. Every data object $p_i \in \mathcal{D}$ 
is a $d$-dimensional point in $\mathbb{R}^{d}_+$, where $d$ is a positive integer and the coordinate values of the 
points are all positive numbers.
We use $p_i.c_j$ to denote the coordinate value of $p_i$ in dimension $j$. This coordinate value represents  
the \emph{utility} of $p_i$ in dimension $j$. A larger coordinate value denotes a larger utility 
and is preferable. 
A query parameter $k$  is given. It specifies the size of the 
answer set $\mathcal{S} \  (\mathcal{S} \subseteq \mathcal{D})$ to be returned. We assume $d \le k \le n$.

\emph{Gain.} Let $f: \mathcal{D} \rightarrow \mathbb{R_+}$ be a function that models the utility of a data object, i.e., 
how preferable the data object is by a query user.  
The \emph{gain} of a query user over a set $\mathcal{S}$, denoted by $gain(\mathcal{S},f)$, is  the maximum utility of any object in $\mathcal{S}$, i.e., 
\begin{equation}
gain(\mathcal{S},f)=\max_{p_i \in \mathcal{S}}f(p_i)
\end{equation}
Continuing with the example shown in Table~\ref{tbl:example}, if $\mathcal{S} = \{p_1, p_3, p_5\}$ and $f_3(p_i) = p_i.c_1^{0.5} \cdot p_i.c_2^{0.5}$,
\begin{align*}
 gain(\mathcal{S},f_3) &= \max_{p_i \in \mathcal{S}}f_3(p_i) =  f_3(p_1) =  2.3^{0.5} \times 80^{0.5} \approx 13.56
\end{align*}

\emph{Regret.} For a subset $\mathcal{S}$ of $\mathcal{D}$, the gain over $\mathcal{S}$ may be smaller than that of $\mathcal{D}$.
The difference between $gain(\mathcal{D},f)$ and $gain(\mathcal{S}, f)$ 
is the \emph{regret} of a query user 
if she selects the locally optimal object from $\mathcal{S}$ and is later shown the globally optimal object in $\mathcal{D}$,
 denoted by $regret_{\mathcal{D}}(\mathcal{S}, f)$: 
\begin{equation}
regret_{\mathcal{D}}(\mathcal{S},f) = gain(\mathcal{D},f) - gain(\mathcal{S}, f)
\end{equation}

\emph{Regret ratio.} 
The \emph{regret ratio}, $r\_ratio_{\mathcal{D}}(\mathcal{S}, f)$, is a relative measure of the regret. 
It is computed as the regret $regret_{\mathcal{D}}(\mathcal{S},f)$ over the gain $gain(\mathcal{D},f)$, i.e., 
\begin{equation}
 r\_ratio_\mathcal{D}(\mathcal{S},f) =\frac{regret_\mathcal{D}(\mathcal{S},f)}{gain(\mathcal{D},f)}
=\frac{max_{p_i\in \mathcal{D}}f(p_i)-max_{p_j\in \mathcal{S}}f(p_j)}{max_{p_i \in \mathcal{D}}f(p_i)}
\end{equation}

Continuing with the example in Table~\ref{tbl:example}, 
given $\mathcal{S} = \{p_1, p_3, p_5\}$ and $f_3(p_i) = p_i.c_1^{0.5} \cdot p_i.c_2^{0.5}$,
$gain(\mathcal{S},f_3) = gain(\mathcal{D},f_3)  =  f_3(p_1) \approx 13.56$. We have $regret_{\mathcal{D}}(\mathcal{S},f_3) = 0$ and  
$r\_ratio_\mathcal{D}(\mathcal{S},f_3) = 0\%$. Given a different utility function $f_4(p_i)= p_i.c_1^{0.99} \cdot p_i.c_2^{0.01}$, 
$gain(\mathcal{S},f_4) = f_4(p_3) \approx 2.88$ and $gain(\mathcal{D},f_4)  =  f_4(p_6)  \approx 3.09$. Then, $regret_{\mathcal{D}}(\mathcal{S},f_4) \approx 0.21$  and 
$r\_ratio_\mathcal{D}(\mathcal{S},f_4) \approx \frac{0.21}{3.09} \approx 6.80\%$.

\emph{Maximum regret ratio.} 
Given a set $\mathcal{S}$ and a family of utility functions $\mathcal{F}$, 
the \textit{maximum regret ratio} formulates how regretful a query user can be if her utility function is in $\mathcal{F}$.
It is the supremum of the regret ratio of a query user with any utility function in $\mathcal{F}$, i.e., 
\begin{equation}
\displaystyle mr\_ratio_\mathcal{D}(\mathcal{S},\mathcal{F})=\sup_{f\in\mathcal{F}}r\_ratio_\mathcal{D}(\mathcal{S},f)
 =\sup_{f\in\mathcal{F}}\frac{max_{p_i\in \mathcal{D}}f(p_i)-max_{p_j\in \mathcal{S}}f(p_j)}{max_{p_i \in \mathcal{D}}f(p_i)}
\end{equation}
Here, the supremum is used instead of the maximum because we consider an infinite set $\mathcal{F}$. 

Continuing with the example above, if $\mathcal{F} = \{f_3, f_4\}$, 
$$mr\_ratio_\mathcal{D}(\mathcal{S},\mathcal{F}) = \max \{0\%, 6.80\%\} = 6.80\%$$

The \emph{$k$-regret} query aims to return  the size-$k$ subset $\mathcal{S} \subseteq \mathcal{D}$ that \emph{minimizes} the 
maximum regret ratio for a family of utility functions.

\begin{defn}[$K$-Regret Query]
Given a family of utility functions $\mathcal{F}$, the \emph{$k$-regret} query returns a size-$k$ subset $\mathcal{S} \subseteq \mathcal{D}$, 
such that the maximum regret ratio over $\mathcal{S}$ is smaller than or equal to that over any other size-$k$ subset $\mathcal{S}' \subseteq \mathcal{D}$.
Formally,
\begin{align*}
 \forall \mathcal{S}' \subseteq \mathcal{D} \cap |\mathcal{S}'| = k: 
 mr\_ratio_\mathcal{D}(\mathcal{S},\mathcal{F}) \le mr\_ratio_\mathcal{D}(\mathcal{S}',\mathcal{F})
\end{align*}
\end{defn}

Specific utility functions are not always available because the query users are not usually known in advance
and their utility functions may not be specified precisely. 
The $k$-regret query does not require any specific utility functions to be given. Instead, the query considers 
a family of infinite functions such as the family of linear functions~\cite{nanongkai2010regret}, i.e., $f(p_i) = \sum_{j=1}^{d} \alpha_j \cdot p_{i}.c_{j}$ where 
$\alpha_j$ is the \emph{weight} of  dimension $j$. 
The $k$-regret query minimizes the 
maximum regret ratio of any utility function in such a family of utility functions, without knowing the value of the $\alpha_j$'s.

Our contribution to the study of $k$-regret queries is the incorporation of a family of \emph{multiplicative utility functions} (MUFs). 
\begin{defn}[Multiplicative Utility Function]
A \emph{multiplicative utility function (MUF)} $f$ is defined to be a utility function of the following form:
$$f(p_i) = \prod_{j = 1}^{d} p_i.c_j^{\alpha_j},$$
where $\alpha_j \ge 0$ is a function parameter and  $\sum_{j=1}^{d}\alpha_j \le 1$.
\end{defn}

\begin{defn}[$K$-Regret Query with MUFs]
The  \emph{$k$-regret query with MUFs} takes a database $\mathcal{D}$ of $d$-dimensional points and a family of MUFs $\mathcal{F}$ 
as the input. It returns a size-$k$ subset $\mathcal{S} \subseteq \mathcal{D}$, such that  
the maximum regret ratio $mr\_ratio_\mathcal{D}(\mathcal{S},\mathcal{F})$ is minimized.
\end{defn}

We note that a $k$-regret query to find the size-$k$ subset $\mathcal{S}$ that  minimizes
the maximum regret ratio is NP-hard, as shown by Chester et al.~\cite{chester2014computing}. 
In this study, we first focus on bounding the maximum regret ratio for $k$-regret queries with MUFs. 
We compute a subset $\mathcal{S}$ with a maximum regret ratio that is bounded by a 
decreasing function of $k$. This subset, however, may not minimize the maximum regret ratio. 
We thus further design a greedy algorithm to compute a subset to (heuristically) minimize the maximum regret ratio.

\textbf{Scale invariance.}
It has been shown~\cite{kessler2015k,nanongkai2010regret} that $k$-regret queries with additive utility functions are \emph{scale invariant}, i.e., 
scaling the data domain in any dimension does not change the maximum regret ratio of a set $\mathcal{S}$.
This property also holds for $k$-regret queries with MUFs.
For an MUF $f(p_i) = \prod_{d = 1}^{n} p_i.c_j^{\alpha_j}$, we can scale each dimension by
a factor $\lambda_j  > 0$, resulting in a new MUF $f'(p_i) = \prod_{d = 1}^{n} (\lambda_j \cdot p_i.c_j)^{\alpha_j} = 
\prod_{d = 1}^{n} \lambda_j^{\alpha_j} \cdot \prod_{d = 1}^{n}p_i.c_j^{\alpha_j}  = (\prod_{d = 1}^{n} \lambda_j^{\alpha_j}) f(p_i)$.
Such scaling does not affect the regret ratio (and hence the maximum regret ratio), i.e., 
 $r\_ratio_{\mathcal{D}}(\mathcal{S},f') = r\_ratio_{\mathcal{D}}(\mathcal{S},f)$:
\begin{align*}
r\_ratio_\mathcal{D}(\mathcal{S},f')&=\frac{max_{p_i\in \mathcal{D}}f'(p_i)-max_{p_j\in \mathcal{S}}f'(p_j)}{max_{p_i \in \mathcal{D}}f'(p_i)}\\
&=\frac{(\prod_{d = 1}^{n} \lambda_j^{\alpha_j})(max_{p_i\in \mathcal{D}}f(p_i)-max_{p_j\in \mathcal{S}}f(p_j))}{(\prod_{d = 1}^{n} \lambda_j^{\alpha_j})max_{p_i \in \mathcal{D}}f(p_i)}\\
&=\frac{max_{p_i\in \mathcal{D}}f(p_i)-max_{p_j\in \mathcal{S}}f(p_j)}{max_{p_i \in \mathcal{D}}f(p_i)}\\
&=r\_ratio_{\mathcal{D}}(\mathcal{S},f)
\end{align*}

In what follows, for conciseness, we refer to the regret of a query user as the regret when the context is clear. 
The same applies to the regret ratio and the maximum regret ratio of a query user.

\section{The MinVar Algorithm}\label{sec:minvar}
We propose an algorithm named \emph{MinVar} to process $k$-regret queries with MUFs. 
MinVar shares a similar overall algorithmic approach with that of CUBE~\cite{nanongkai2010regret} and MinWidth~\cite{kessler2015k} 
which were proposed to process $k$-regret queries with additive utility functions. The core idea of the algorithm is to partition the data 
space into multiple buckets that together enclose all data points, and return one data point in each bucket to form the answer set $\mathcal{S}$. 
For any utility function, the corresponding optimal data point $p^*$ must be in  some bucket. 
There is a point $s^*$ in this bucket that has been returned in $\mathcal{S}$. 
The distance between $p^*$ and $s^*$ is bounded by the range of attribute values spanned by the bucket in each attribute, 
which further bounds the maximum regret ratio. 

Our contributions in the MinVar algorithm are a novel data space partitioning strategy that follows the data distribution and 
produces tighter buckets, and theoretical bounds on the maximum regret ratios obtained. 
We first discuss how to partition the data space and select the point 
in each bucket to be returned. We derive the bounds on the maximum regret ratio afterwards.

\IncMargin{1em}
\begin{algorithm} 
\caption{MinVar} \label{alg:minvar}
\KwIn{$\mathcal{D}=\{p_1, p_2, ... , p_n\}$: a $d$-dimensional database;  $k$: the size of the answer set.} 
\KwOut{$\mathcal{S}$: a size-$k$ subset of $\mathcal{D}$.} 
$\mathcal{S} \leftarrow \emptyset$\;
\For {$i = 1, 2,..., d-1$} {
	Find $p^*_{i}$ which has the largest utility $ p^*_{i}.c_i$ in dimension $i$\;
	$c_i^\tau \leftarrow p^*_{i}.c_i$\; 
	$\mathcal{S} \leftarrow \mathcal{S} \cup \{p^*_{i}\}$\;
}
$itr \leftarrow 0$\;
\While{$|\mathcal{S}|< k$ and  $itr < itr_{m}$}{
 $t \leftarrow \lfloor (k-d+1)^{\frac{1}{d-1}} \rfloor + itr$\;
\For{$i=1,2,...,d-1$}{
	$bps[i] \leftarrow FindBreakpoints(\mathcal{D}, t, n, i, c_i^\tau)$\;
}
\For{each $(d-1)$-integer combination $1 \le j_1 \le t,  1 \le j_2 \le t, ..., 1 \le j_{d-1} \le t$}
{
$B  \leftarrow \{p \in \mathcal{D} | \forall i \in [1..d-1]: bps[i][j_i].lo \le p.c_i \le bps[i][j_i].hi \}$\;
$s^* \leftarrow argmax_{p\in B} \ p.c_d$\;
$\mathcal{S} \leftarrow \mathcal{S} \cup \{s^*\}$\;
	\If{$|\mathcal{S}|== k$}{
		break\;
	}
}
$itr \leftarrow irt+1$\;
}
$\mathcal{S} \leftarrow \mathcal{S} \cup \{k - |\mathcal{S}|$ random points in $\mathcal{D} \setminus \mathcal{S}\}$\;
\textbf{return} $\mathcal{S}$\;
\end{algorithm}
\DecMargin{1em}

\textbf{The MinVar algorithm}. Algorithm~\ref{alg:minvar} summarizes the MinVar algorithm. 
Before partitioning the data space,  
the algorithm first finds the optimal point $p^*_i$ in each of the 
first $d-1$ dimensions, i.e., $p_i^*.c_i$ is the largest  in dimension $i$ ($i = 1, 2, ..., d-1$). 
These $d-1$ optimal points are added to $\mathcal{S}$ (Lines~1~to~5). They are intended to minimize 
the regret ratio for the first $d-1$ dimensions.  
Another $k-d+1$ points are needed to fill up $\mathcal{S}$. These points minimize the regret ratio for dimension $d$.  
The algorithm partitions each of the first $d-1$ dimensions of the data domain into 
$t$ intervals (Lines 8 to 10), where 
\begin{equation}
t = \lfloor (k-d+1)^{\frac{1}{d-1}} \rfloor
\end{equation}
These intervals together partition the data space into $t^{d-1}$ buckets:
\begin{equation}
1 \le t^{d-1} = \lfloor (k-d+1)^{\frac{1}{d-1}} \rfloor^{(d-1)} \le k-d+1
\end{equation}
The algorithm selects one point $s^*$ 
in each bucket that has the largest utility $s^*.c_d$ in dimension $d$, and adds $s^*$ to $\mathcal{S}$ (Lines~11~to~14). 
There may be less than $k-d+1$ buckets, and some of the buckets may be empty. Thus, there may be less than $k-d+1$ points added in this step. 
To ensure $k$ points in $\mathcal{S}$, we repeat the partitioning step and increase $t$ by 1 in each iteration (Lines 7, 8,  and 17). This creates 
more buckets and obtains more points to be added to $\mathcal{S}$. The loop terminates when $|\mathcal{S}| = k$ (Lines 15 and 16) or a preset number 
of iterations $itr_{m}$ has been reached (Line 7). At this point, if $|\mathcal{S}| < k$, we simply fill it up with randomly selected points and return the set (Line 18).
The set $\mathcal{S}$ is then returned (Line 19).
 
\begin{figure}[h]
\centering
\includegraphics[width=2.5in]{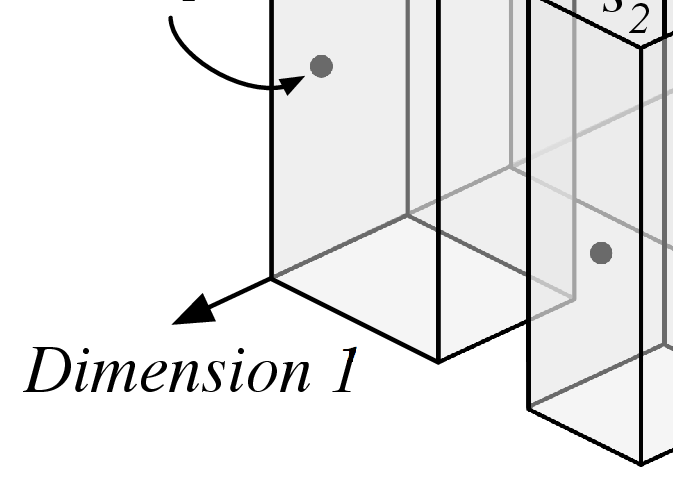}
\caption{The MinVar algorithm}\label{fig:minvar}
\end{figure}

Figure~\ref{fig:minvar} gives an example. Suppose $d = 3$ and $k = 6$. Then $t = \lfloor (6-3+1)^{\frac{1}{3-1}} \rfloor = 2$. 
We first add the two points $p_1^*$ and $p_2^*$ to $\mathcal{S}$ which have the largest utility in dimensions 1 and 2, respectively.
Then, the data domain in dimensions 1 and 2 are each partitioned into $t=2$ intervals, forming  
$t^{d-1}= 2^{3-1} = 4$ buckets. Four more points $s_1^*$, $s_2^*$, $s_3^*$, and $s_4^* $ are 
 added to $\mathcal{S}$, each has the largest utility in dimension 3 in a different bucket. 
Now there are 6 points in $\mathcal{S}$, i.e., $\mathcal{S} = \{p_1^*, p_2^*, s_1^*, s_2^*, s_3^*, s_4^* \}$. No further partitioning 
is needed, and the set $\mathcal{S}$ is returned as the query answer. 

\textbf{The FindBreakpoints algorithm.} 
The novelty of MinVar lies in the sub-algorithm \emph{FindBreakpoints} to find the breakpoints to partition a dimension 
of the data domain into $t$ intervals (Line 10). 
The intuition behind the algorithm is as follows.  
The optimal point $p^*$ for any utility function $f$ must lie in one of the buckets created.
Let this bucket be $\mathcal{B}$. The algorithm selects a point 
$s^*$ from $\mathcal{B}$ to represent this bucket and adds it to $\mathcal{S}$.
In the best case,  $p^*$ is selected as $s^*$, and the regret ratio is $0$. 
To maximize the probability of $p^*$ being selected, intuitively, 
we should partition each dimension such that every interval contains the same number 
of points. Otherwise, if the intervals are skewed and $p^*$ happens to lie in a dense interval, 
then its probability of being selected is small. 

\begin{figure}[h]
\centering
\includegraphics[width=2.5in]{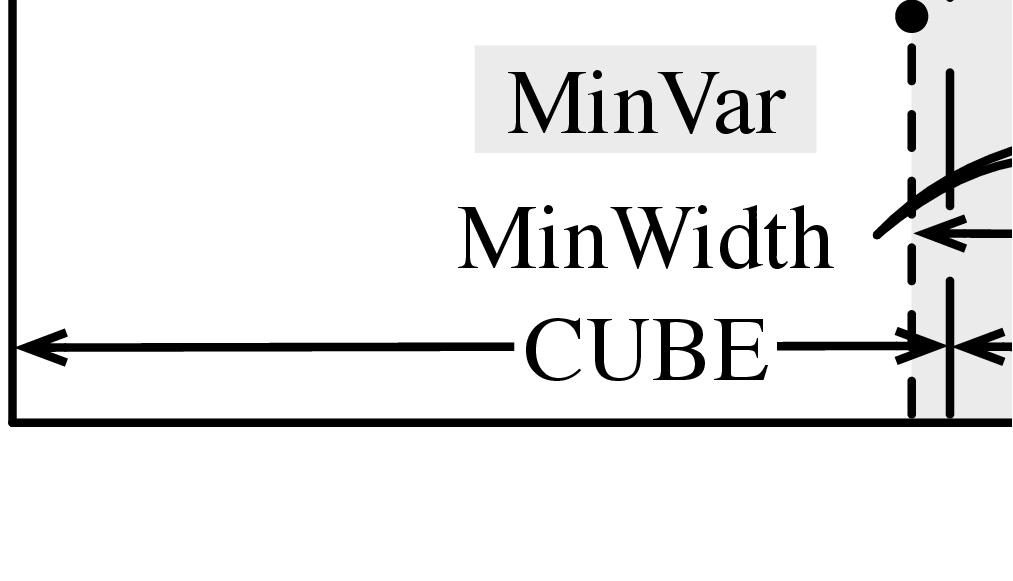}
\caption{Creating the intervals in a dimension}\label{fig:breakpoints}
\end{figure}
 
In CUBE~\cite{nanongkai2010regret}, the data domain is partitioned evenly, i.e., each interval has the same size of $\displaystyle \frac{c_i^\tau}{t}$, where $c_i^\tau$ denotes the largest utility in dimension $i$ (assuming that the data domain starts at 0).
When the data points are not uniformly distributed, 
the probability of $p^*$ being selected to represent its bucket is small. 
Figure~\ref{fig:breakpoints} gives an example where $d=2$. The two intervals created by CUBE in dimension 1 are highly unbalanced in the 
number of points in each interval. Point $p^*$ has a large utility in both dimensions and may be optimal for many MUFs. 
However, it will not be selected by CUBE, since it falls in a dense bucket, and there is 
another point $s^*$ with a larger utility in dimension 2.  
In MinWidth~\cite{kessler2015k}, the data domain is partitioned with a greedy heuristic. 
This heuristic leaves out some \emph{empty intervals} with no data points. 
It uses a binary search to determine the minimal interval width such that $t$ intervals of such width can cover the rest of the data domain.
MinWidth handles sparse data better, 
but the equi-width intervals still do not handle skewed data well. Figure~\ref{fig:breakpoints} shows two intervals created by MinWidth
which are still unbalanced (one has 5 points and the other has 3). The two points $p^*$ and $s^*$ are still in the same bucket.

\IncMargin{1em}
\begin{algorithm}
\caption{FindBreakpoints}\label{alg:bps}
\KwIn{$\mathcal{D}=\{p_1, p_2, ... , p_n\}$: a $d$-dimensional database; 
$t$: number of intervals; $n:$ size of $\mathcal{D}$; $i$: dimension number for which the breakpoints are to be computed;
$c_i^\tau$: the largest utility in dimension $i$.}
\KwOut{$bps[i]$: an array of $t$ pairs of breakpoints.}
Sort $\mathcal{D}$ ascendingly on dimension $i$; let the sorted point sequence be 
$p_1', p_2', ..., p_n'$\;
$hi \leftarrow 0,\ \delta \leftarrow 0$\;
\While{$hi \neq n$}{
$lo \leftarrow 1$\;
\For{j $=1,2, ..., t$}
{
$hi' \leftarrow \min\{lo+\lceil n/t\rceil-1+\delta, n\}$\;
Find the largest ${hi} \in [lo..hi']$ such that $p'_{hi}.c_i - p'_{lo}.c_i \displaystyle{\leq\frac{c_i^\tau}{t}}$\;
$bps[i][j].lo  \leftarrow p'_{lo}.c_i$\;
$bps[i][j].hi \leftarrow p'_{hi}.c_i$\;
$lo \leftarrow hi +1$\;
}
$\delta \leftarrow \delta + inc$\;
}
\textbf{return} $bps[i]$\;
\end{algorithm}
\DecMargin{1em} 

To overcome these limitations, our FindBreakpoints algorithm
adaptively uses $t$ variable-width intervals such that the number of points in each interval 
is as close to $\displaystyle \lceil \frac{n}{t}\rceil$ as possible, i.e., the \emph{var}iation of the number of points 
in each interval is \emph{min}imized, which motivates the name of the MinVar algorithm.  
As shown in Fig.~\ref{fig:breakpoints}, the two gray intervals created by 
MinVar contain $\displaystyle \lceil \frac{8}{2}\rceil = 4$ points each; $p^*$ will be selected to represent its bucket. 
To help derive the maximum regret ratio bounds in the following subsections, we also require that the width of each interval 
does not exceed $\displaystyle \frac{c_i^\tau}{t}$.
Under this constraint, it is not always possible to create $t$ intervals with exactly $\displaystyle \lceil \frac{n}{t}\rceil$ points in each interval.
Therefore, we allow $\displaystyle \lceil \frac{n}{t}\rceil + \delta$ data points in each interval, where $\delta$ is a parameter that will be 
adaptively chosen by the algorithm. At the start $\delta = 0$.

Algorithm~\ref{alg:bps} summarizes the FindBreakpoints algorithm. 
This algorithm first sorts the data points in ascending order of their coordinate values in dimension~$i$.
The sorted points are denoted as $p'_1, p'_2, ..., p'_n$ (Line~1).
The algorithm then creates $t$ intervals, where $lo$ and $hi$ represent the subscript lower and upper bounds of the data points to be put into one interval, respectively.
At the start, $lo = 1$ (Line~4). 
Between  $lo$ and $\displaystyle lo+ \lceil \frac{n}{t}\rceil-1 + \delta$, the algorithm finds the largest subscript $hi$ such that $p'_{hi}.c_i - p'_{lo}.c_i$ does 
not exceed  $\displaystyle{\frac{c_i^{\tau}}{t}}$. We then have obtained the two breakpoints 
of the first interval $bps[i][1].lo = p'_{lo}.c_i$ and $bps[i][1].hi = p'_{hi}.c_i$, where $bps[i]$ is an array to store the intervals in dimension $i$.
We update $lo$ to be $hi+1$, and repeat the above process to create 
the next interval (Lines 5 to 10). When $t$ intervals are created, if they cover all the $n$ points, we have successfully created the intervals for dimension~$i$. 
Otherwise, we need to allow a larger number of points in each interval. We increase $\delta$ by $inc$ which is 
a system parameter (Line 11), and repeat the above procedure to create $t$ intervals until $n$ points are covered. 
Then, we return the interval array $bps[i]$ (Line 12).
Note that the algorithm always terminates, because when $\delta$ increases to $n$, the algorithm will 
simply create intervals each with width $\displaystyle{\frac{c_i^\tau}{t}}$. The $t$ intervals must cover the entire data domain and hence cover 
all $n$ points. 

\textbf{Algorithm correctness.} 
As will be shown in the following subsections, the bounds on maximum regret ratios rely on the fact that the interval size does not exceed 
$\displaystyle \frac{c_i^\tau}{\lfloor (k-d+1)^{\frac{1}{d-1}} \rfloor}$.
In MinVar, even though FindBreakPoints creates variable-width intervals, each interval is still bounded by $\displaystyle \frac{c_i^\tau}{t}$. 
The value of $t$ starts at  $\lfloor (k-d+1)^{\frac{1}{d-1}} \rfloor$ and is kept increasing in the loop. 
Therefore, MinVar creates intervals where the size does not exceed $\displaystyle \frac{c_i^\tau}{\lfloor (k-d+1)^{\frac{1}{d-1}} \rfloor}$. This 
satisfies the requirement of the bounds and guarantees the algorithm correctness. 
In practice, MinVar can obtain maximum regret ratios smaller than the upper bound derived, since the intervals created by FindBreakPoints 
may be smaller than $\displaystyle \frac{c_i^\tau}{t}$ and $t$ may be larger than $\lfloor (k-d+1)^{\frac{1}{d-1}} \rfloor$.

\textbf{Algorithm complexity.} 
FindBreakpoints uses a database $\mathcal{D}$ of $n$ points and 
an array $bps[i]$ to store $t$ intervals. 
The space complexity is $O(n+t) = O(n)$ where $t=\lfloor (k-d+1)^{\frac{1}{d-1}} \rfloor \ll  n$. 
Leaving out the space for storing the input data, the space complexity is $O(t)$.
 Sorting the points in dimension $i$ takes $O(n\log n)$ time (Line 1).
The inner loop of FindBreakpoints (Lines 5 to 10) has $t$ iterations. In each iteration, computing $hi$ requires a binary search between $p'_{lo}$ and $p'_{hi'}$,
which takes $O(\log n)$ time. Thus, The inner loop takes $O(t\log n)$  time. The outer loop has $\displaystyle \frac{n}{inc}$ 
iterations in the worst case. Together, FindBreakpoints takes $O(\displaystyle n\log n + \frac{tn\log n}{inc})$ time.

MinVar uses a database $\mathcal{D}$ of $n$ points, an answer set $\mathcal{S}$ of size $k$, 
 a $(d-1)\times t$ two dimensional array $bps$. 
 An array of size $t^{d-1} = k-d+1$ is also needed to help  select the points $s^*$ in the $k-d+1$ buckets. 
 The space complexity is $O(n+k+dt)$. 
Leaving out the space for storing the input data, the space complexity is $O(k+dt)$.
The first for-loop of MinVar (Lines 2 to 5) takes $O(nd)$ time. 
The second for-loop (Lines 9 and 10) calls FindBreakpoints $d-1$ times, which takes $O(\displaystyle dn\log n +  \frac{tdn\log n}{inc})$ time.
The third for-loop (Lines 11 to 16) finds a point $s^*$ in each of the $k-d+1$ buckets. 
A linear scan on the database $\mathcal{D}$ is needed for this task. For each point $p$ visited, we need 
a binary search on each of the $d-1$ arrays $bps[i]$ to identify the bucket of $p$, and to update the point selected $s^*$ in that bucket 
if needed. This takes $O(nd\log t)$ time.
The second and third for-loops are enclosed in a loop to iterate through multiple values of $t$. 
The number of iterations is bounded by $itr_{m}$. 
Overall, MinVar takes $O(nd +  itr_{m}(\displaystyle dn\log n + \frac{(t+itr_{m})dn\log n}{inc} + nd\log (t+itr_{m}))) $ time. 
Here, $\displaystyle \frac{n}{inc}$ and $itr_m$ are controllable parameters of the system. 
In the experiments, we set $inc = 0.1\%n$. We observe that $itr_{max} = 11$ is sufficient in the data sets tested.
The time complexity then simplifies to $O(nd\log n + nd\log t) = O(nd\log (nt))$.

\section{Maximum Regret Ratio Bounds}\label{sec:bounds}
We derive bounds for the maximum regret ratio of $k$-regret queries with MUFs. 

\subsection{Upper Bound}\label{sec:upperbound}

We start with an upper bound of a set $\mathcal{S}$ returned by MinVar.
The intuition behind the bound is as follows. Given any MUF, its optimal point $p^*$ must be in some bucket 
created by MinVar. There is also one point $s^*$ returned by MinVar that is at the same bucket as $p^*$. The point $s^*$ has 
the largest coordinate value in dimension $d$. Meanwhile, the difference between $s^*$ and $p^*$ in dimension $i$ ($1 \le i \le d-1$) 
is bounded by the interval size $\displaystyle \frac{c_i^\tau}{t}$. Thus, the difference between the weighted products of the coordinate values 
of the two points should be bounded in a certain range. This range yields an upper bound of the maximum regret ratios. 

We assume that $p_i.c_j$ has been
normalized into the range of $(1,2]$ to simplify the derivation of the  upper bound.
This can be done by a normalization function $\displaystyle \mathcal{N}(p_i.c_j) = 1+ \frac{p_i.c_j}{\max_{p_i \in \mathcal{D}}\{p_i.c_j\}}$.
In fact, it is common to normalize the data domain in different dimensions into the same range, so that utility values of different dimensions
become more comparable. \emph{Note that our derivation of the bounds still holds without this assumption, although the bounds may become less concise.} 
This assumption does not affect the correctness of the MinVar algorithm either, although now the interval size should be bounded by 
$\displaystyle \frac{c_i^\tau - 1}{t}$ where $1$ is the lower bound of the data domain. 

Our upper bound is given by the following theorem.
\begin{thm} 
Let $\mathcal{F} = \{f| f(p_i)=\prod^d_{j=1} p_i.c_j^{\alpha_j}\}$ be a set of MUFs,  
where $\alpha_j\geq 0, \sum_{j=1}^{d}\alpha_j \le 1$, and $1 < p_i.c_j \le 2$.
The maximum regret ratio $mr\_ratio_{\mathcal{D}}(\mathcal{S}, \mathcal{F})$ of an answer set $\mathcal{S}$ 
 of MinVar satisfies 
\begin{equation}
mr\_ratio_{\mathcal{D}}(\mathcal{S}, \mathcal{F}) \le {\ln(1+\frac{1}{t})}
\end{equation}
\end{thm}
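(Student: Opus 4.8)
The plan is to bound $r\_ratio_\mathcal{D}(\mathcal{S},f)$ by $\ln(1+\frac1t)$ for every fixed $f\in\mathcal{F}$ and then pass to the supremum. So fix $f(p_i)=\prod_{j=1}^d p_i.c_j^{\alpha_j}$ and let $p^*$ be a maximizer of $f$ over $\mathcal{D}$. Since the intervals produced by FindBreakpoints cover all $n$ points in each of the first $d-1$ dimensions, $p^*$ lies in exactly one of the $t^{d-1}$ buckets, say $\mathcal{B}$; let $s^*\in\mathcal{S}$ be the representative MinVar picks from $\mathcal{B}$, namely the point of $\mathcal{B}$ with the largest coordinate in dimension $d$. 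Because $\max_{p_j\in\mathcal{S}}f(p_j)\ge f(s^*)$, the regret ratio of $f$ is at most $1-\frac{f(s^*)}{f(p^*)}$, so it suffices to show this quantity is at most $\ln(1+\frac1t)$.

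The central difficulty is that $f(s^*)/f(p^*)$ is a product of $d$ exponential terms and is hard to bound directly; the key idea is to pass to logarithms, turning the product into a sum. Writing $\Delta=\ln f(p^*)-\ln f(s^*)=\sum_{j=1}^{d}\alpha_j\ln\frac{p^*.c_j}{s^*.c_j}$, I would estimate each summand separately. In dimension $d$, $s^*$ maximizes the $d$-th coordinate inside $\mathcal{B}$ and $p^*\in\mathcal{B}$, so $s^*.c_d\ge p^*.c_d$ and the $j=d$ term is nonpositive. For each $j<d$, the points $p^*$ and $s^*$ lie in the same interval of dimension $j$, whose width is at most $\frac{c_j^\tau-1}{t}$ (the sharper bound recorded in the footnote to the $\frac{c_j^\tau}{t}$ constraint); together with $s^*.c_j\ge 1$ and $c_j^\tau\le 2$ this gives $\frac{p^*.c_j}{s^*.c_j}\le 1+\frac{c_j^\tau-1}{t}\le 1+\frac1t$, hence $\ln\frac{p^*.c_j}{s^*.c_j}\le\ln(1+\frac1t)$ (and when $p^*.c_j\le s^*.c_j$ the term is negative and the bound is trivial).

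Summing these estimates and using $\sum_{j=1}^{d-1}\alpha_j\le\sum_{j=1}^{d}\alpha_j\le 1$ yields $\Delta\le\ln(1+\frac1t)$. The last step converts back from the logarithm via the elementary inequality $1-e^{-x}\le x$: since $f(s^*)/f(p^*)=e^{-\Delta}$,
$$r\_ratio_\mathcal{D}(\mathcal{S},f)\le 1-\frac{f(s^*)}{f(p^*)}=1-e^{-\Delta}\le\Delta\le\ln\left(1+\frac1t\right).$$
As this bound does not depend on $f$, taking the supremum over $\mathcal{F}$ gives $mr\_ratio_\mathcal{D}(\mathcal{S},\mathcal{F})\le\ln(1+\frac1t)$. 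The only delicate point is the per-dimension estimate, which relies on the width guarantee of FindBreakpoints combined with the normalization $1\le p_i.c_j\le 2$; the seed points $p_1^*,\dots,p_{d-1}^*$ added in Lines~1--5 play no essential role in this bound, as they can only lower the regret ratio.
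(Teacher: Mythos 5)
Your proof is correct and takes essentially the same route as the paper's: both pass to logarithms, discard the nonpositive dimension-$d$ term (since $s^*$ maximizes $c_d$ in $p^*$'s bucket), bound each remaining log-ratio by $\ln(1+\frac{1}{t})$ using the interval-width guarantee together with the normalization $1 \le p_i.c_j \le 2$, and finish with $\sum_{j}\alpha_j \le 1$. Your closing inequality $1-e^{-\Delta}\le\Delta$ is just a rephrasing of the paper's mean-value bound $e^x-e^y\le(x-y)e^x$ (divide both sides by $e^x$), so the two arguments coincide in substance.
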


\begin{proof}
We prove the theorem by showing that for each MUF $f \in \mathcal{F}$, 
the regret ratio $r\_ratio_{\mathcal{D}}(\mathcal{S}, f)$ must be less than or equal to $\displaystyle \ln(1+\frac{1}{t})$. Thus, the maximum regret ratio of $\mathcal{F}$
must also be less than or equal to $\displaystyle \ln(1+\frac{1}{t})$.

Let $p^*$ be the point in $\mathcal{D}$ with the largest utility computed by $f$, i.e., 
$$p^*= \underset{p_i\in \mathcal{D}}{\mathrm{argmax}} f(p_i)$$ 
Let $s^*$ be the point in $\mathcal{S}$ that is selected by MinVar in the same bucket in which $p^*$ lies.
We have:
\begin{equation}\label{eq:ub1}
\begin{array}{l}
 regret_\mathcal{D}(\mathcal{S},f) = \max_{p_i\in\mathcal{D}}f(p_i) - \max_{p_i\in\mathcal{S}}f(p_i)\\
\displaystyle \quad \quad \quad \quad \quad \quad \le f(p^*)-f(s^*)\\
 \displaystyle  \quad \quad \quad \quad \quad \quad =\prod^d_{j=1} p^*.c_j^{\alpha_j}-\prod^d_{j=1} s^*.c_j^{\alpha_j}\\
 \displaystyle  \quad \quad \quad \quad \quad \quad =\exp{(\ln \prod^d_{j=1} p^*.c_j^{\alpha_j})}-\exp{(\ln \prod^d_{j=1} s^*.c_j^{\alpha_j})}
\end{array}
\end{equation}

Next, we show that $e^x-e^y\leq(x-y) e^x$, which will enable us to simplify the exponential terms in the equation above.
Let $g(x) = 1 - e^{-x} - x$. We have $g'(x) = e^{-x} -1$ and $g''(x) = -e^{-x}$. By letting $g'(x) = 0$, we have $x = 0$, while $g''(0) = -1 < 0$. Thus, the maximum of $g(x)$
is $g(0) = 0$, which means $1 - e^{-x} - x \le 0$. Therefore,  $1 - e^{-x} \le x$. Replacing $x$ by $x-y$ yields $1 - e^{y-x} \le x - y$. We multiply $e^x$ to both sides of the inequality 
and obtain $(1 - e^{y-x})e^x \le (x - y)e^x$. Thus, $e^x-e^y\leq(x-y) e^x$. Equation~\ref{eq:ub1} is then relaxed as follows.

\begin{equation}\label{eq:ub2}
\begin{array}{l}
\displaystyle regret_\mathcal{D}(\mathcal{S},f)  \le  (\ln \prod^d_{j=1} p^*.c_j^{\alpha_j} - \ln \prod^d_{j=1}  s^*.c_j^{\alpha_j})\cdot \prod_{j=1}^d  p^*.c_j^{\alpha_j}\\
\displaystyle  \quad \quad \quad \quad \quad \quad   =(\sum_{j=1}^d \alpha_j\ln  p^*.c_j-\sum_{j=1}^d \alpha_j\ln s^*.c_j)\cdot \prod_{j=1}^d p^*.c_j^{\alpha_j}\\
\displaystyle  \quad \quad \quad \quad \quad \quad  =\left[\sum_{j=1}^{d} \alpha_j(\ln  p^*.c_j-\ln s^*.c_j) \right]\cdot \prod_{j=1}^d p^*.c_j^{\alpha_j}
\end{array}
\end{equation}
Since MinVar selects the point in a bucket with the largest value in dimension $d$, 
we know that $ p^*.c_d\leq s^*.c_d$ and hence $\ln p^*.c_d\leq \ln s^*.c_d$, i.e., $\ln p^*.c_d -  \ln s^*.c_d \le 0$. Thus, we can remove 
the utility in dimension $d$ from the computation and relax the regret to be:  
\begin{equation}
\begin{array}{l}
\displaystyle regret_\mathcal{D}(\mathcal{S},f)
 \leq \left[\sum_{j=1}^{d-1} \alpha_j(\ln  p^*.c_j-\ln s^*.c_j) \right] \cdot \prod_{j=1}^d p^*.c_j^{\alpha_j} \\
\displaystyle  \quad \quad \quad \quad \quad \quad =\left[\sum_{j=1}^{d-1} \alpha_j \ln (\frac{ p^*.c_j}{s^*.c_j}) \right]\cdot \prod_{j=1}^d p^*.c_j^{\alpha_j}
\end{array}
\end{equation}
Since $s^*$ is selected from the same bucket in which $p^*$ lies, 
$p^*.c_j-s^*.c_j$ must be constrained by the bucket size in dimension $j$, which is $\displaystyle \frac{c_j^{\tau}-1}{t}$ where $c_j^{\tau}$ and 1
are the largest and smallest utility values in dimension $j$, i.e., 
\begin{equation}\label{eq:11}
\forall j \in [1..d-1], p^*.c_j-s^*.c_j\leq \frac{c_j^{\tau}-1}{t}
\end{equation}
Thus, 
\begin{equation}
\frac{p^*.c_j}{s^*.c_j}\leq1+\frac{c_j^\tau-1}{t\cdot s^*.c_j}
\end{equation}
Since $1 < s^*.c_j\leq c_j^\tau \le 2$, we have
\begin{equation}
\frac{p^*.c_j}{s^*.c_j}<1+\frac{1}{t}
\end{equation}
Therefore, 
\begin{equation}
regret_\mathcal{D}(\mathcal{S},f)< \left[\sum_{j=1}^{d-1} \alpha_j \ln (1+\frac{1}{t}) \right]\cdot \prod_{j=1}^d p^*.c_j^{\alpha_j}
\end{equation}
For the regret ratio $r\_ratio_\mathcal{D}(\mathcal{S},f)$, we have
\begin{equation}
\begin{array}{l}
\displaystyle r\_ratio_\mathcal{D}(\mathcal{S},f)
 =\frac{regret_\mathcal{D}(\mathcal{S},f)}{gain(\mathcal{D}, f)}\\
\displaystyle \quad \quad \quad \quad \quad \quad < \frac{ \left[\sum_{j=1}^{d-1} \alpha_j \ln (1+\frac{1}{t}) \right]\cdot \prod_{j=1}^d p^*.c_j^{\alpha_j}}{ \prod_{j=1}^d p^*.c_j^{\alpha_j}}\\
 \displaystyle  \quad \quad \quad \quad \quad \quad =\sum_{j=1}^{d-1} \alpha_j \ln (1+\frac{1}{t})=\ln\left((1+\frac{1}{t})^{\sum_{j=1}^{d-1}\alpha_j}\right)\\
\displaystyle  \quad \quad \quad \quad \quad \quad \leq \ln(1+\frac{1}{t})
\end{array}
\end{equation}
\end{proof}

In the theorem, $\displaystyle{t=\lfloor (k-d+1)^{\frac{1}{d-1}}\rfloor}$. 
Intuitively, when $k$ increases (i.e., returning more points), the maximum regret ratio is expected to decrease; when $d$ increases (i.e., accumulating regret over more dimensions), 
the maximum regret ratio is expected to increase. 
For simplicity, we say that the upper bound grows in the scale of $O(\ln(1+\frac{1}{k^{\frac{1}{d-1}}}))$.

To give an example, 
consider a 2-dimensional database, i.e., $d = 2$.
Let $k = 3$, which means $t = 2$. The upper bound of the maximum regret ratio is $\ln(1+\frac{1}{t}) = \ln \frac{3}{2} \approx 40.54\%$.
As $k$ increases (e.g., to $20$), this upper bound  will decrease (e.g., to $\ln \frac{20}{19} \approx 5.13\%$).

We have assumed  $p_i.c_j \in (1,2]$ in the proof above. In a more general case where $p_i.c_j$ lies in a range $(c_\bot, c_\top)$, $0 \le c_\bot < c_\top$ (positive utilities are considered), 
the upper bound derived becomes less concise. In particular, Equation~11 becomes $\displaystyle \forall j \in [1..d-1], p^*.c_j-s^*.c_j\leq \frac{c_j^{\tau}-c_\bot}{t}$, since 
the lower bound of the data space is now $c_\bot$. Dividing both sides of the inequality by $s^*.c_j$ yields:
 $$\displaystyle \frac{p^*.c_j}{s^*.c_j}\leq1+\frac{c_j^\tau-c_\bot}{t\cdot s^*.c_j} < 1+\frac{c_\top-c_\bot}{t\cdot c_\bot} $$
 The rest of the proof stays the same, except for that $\displaystyle 1+\frac{1}{t}$ needs to be replaced by $\displaystyle 1+\frac{c_\top-c_\bot}{t\cdot c_\bot}$.
 Therefore, in the more general case, the maximum regret ratio is bounded by $\displaystyle \ln(1+\frac{c_\top-c_\bot}{t\cdot c_\bot})$.
 This bound is less tight as it may be greater than 1. 

\subsection{Lower Bound}\label{sec:lowerbound}

We now derive a lower bound of the maximum regret ratio of any $k$-regret algorithm for MUFs (including but not limited to MinVar),  
assuming an infinite database $\mathcal{D}$.
We show that, given a family of MUFs $\mathcal{F}$, it is impossible 
to bound  the regret ratio to below $\displaystyle \Omega(\frac{1}{k^2})$ for a database $\mathcal{D}$ of infinite 2-dimensional points (i.e., $d=2$). 
The idea behind the lower bound is as follows. Given sufficient points, for any size-$k$ subset $\mathcal{S}$ returned, we can find an MUF $f$ such that its corresponding 
optimal point is sufficiently far away from any of the points in $\mathcal{S}$, and the regret ratio 
$r\_ratio_{\mathcal{D}}(\mathcal{S}, f)$ is at least $\displaystyle \Omega(\frac{1}{k^2})$. 

\begin{thm}
Given $k>0$, there must be a database $\mathcal{D}$ of 2-dimensional points such that 
the maximum regret ratio of any size-$k$ subset $\mathcal{S} \subseteq \mathcal{D}$
over a family of MUFs $\mathcal{F}$ is at least $\displaystyle \Omega(\frac{1}{k^2})$.
\end{thm}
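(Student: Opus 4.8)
The plan is to exhibit a single adversarial database and argue by pigeonhole that every size-$k$ subset leaves some weight direction along which a non-selected point beats the best selected one by a multiplicative factor that is $1-\Omega(1/k^2)$. The crucial simplification is to pass to logarithmic coordinates: for an MUF $f(p)=p.c_1^{\alpha_1}p.c_2^{\alpha_2}$, writing $L_\alpha(p)=\alpha_1\ln p.c_1+\alpha_2\ln p.c_2$ gives $f(p)=\exp(L_\alpha(p))$, so $L_\alpha$ is \emph{linear} in the transformed coordinates $(\ln p.c_1,\ln p.c_2)$. If $p^*$ is the optimum over $\mathcal{D}$ and $s^*$ the optimum over $\mathcal{S}$ for $f$, then
\[
r\_ratio_{\mathcal D}(\mathcal S,f)=1-\frac{f(s^*)}{f(p^*)}=1-e^{-\Delta},\qquad \Delta:=L_\alpha(p^*)-L_\alpha(s^*)\ge 0.
\]
Since all coordinates lie in $(1,2]$ we have $\Delta\le\ln 2$, a constant, so $1-e^{-\Delta}\ge c_0\,\Delta$ for a fixed $c_0>0$. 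Hence it suffices to force the \emph{linear} log-space regret $\Delta$ to be $\Omega(1/k^2)$ for some admissible weight vector.

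Building on this, I would place the database points so that their log-coordinates $(\ln p_i.c_1,\ln p_i.c_2)$ lie on a short circular arc inside the box $(0,\ln 2]^2$, chosen so that the outward normals along the arc sweep a fixed angular window $[\theta_{\min},\theta_{\max}]\subset(0,\pi/2)$ of first-quadrant directions, with $n\gg k$ points spread roughly evenly in angle along it. The arc is convex, so for a weight direction $u=(\alpha_1,\alpha_2)$ the maximizer of $L_\alpha$ over the arc is the point whose outward normal is parallel to $u$; moreover every first-quadrant normal $u$ corresponds to an admissible weight vector via $(\alpha_1,\alpha_2)=u/(u_1+u_2)$, which lies on the simplex boundary $\alpha_1+\alpha_2=1,\ \alpha_j\ge 0$ and therefore has magnitude at least $1/\sqrt 2$.

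For the combinatorial core, fix any size-$k$ subset $\mathcal S$. Its $k$ log-points cut the arc's angular range into at most $k$ gaps, so by pigeonhole some gap has angular width $\ge(\theta_{\max}-\theta_{\min})/k=\Omega(1/k)$; since $n\gg k$, this gap contains a non-selected point which I take as $p^*$, sitting at angular distance $\ge\Omega(1/k)$ from every selected point. Choosing $u$ to be the outward normal at $p^*$ (and its simplex weight vector), $p^*$ is the true arc optimum, while every selected point is at angular distance $\phi\ge\Omega(1/k)$ from the normal direction. On a circle of radius $r$ the support deficit of such a point is $r(1-\cos\phi)=\Theta(r\phi^2)=\Omega(1/k^2)$, and this deficit is exactly $\Delta$ up to the $\ge 1/\sqrt2$ weight-magnitude factor. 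Combining with $r\_ratio_{\mathcal D}(\mathcal S,f)\ge c_0\Delta$ yields $mr\_ratio_{\mathcal D}(\mathcal S,\mathcal F)=\Omega(1/k^2)$ for every $\mathcal S$, as claimed.

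The main obstacle I expect is making the geometric estimate rigorous rather than merely asymptotic: I must verify that the non-selected point genuinely sits near the \emph{middle} of the widest gap, so that its distance to both neighboring selected points is $\Omega(1/k)$ and not just to one side, which requires $n$ large enough that the inter-point arc spacing is negligible compared to $1/k$; and I must check that the arc embeds in $(0,\ln 2]^2$ with normals confined to a constant first-quadrant window while keeping the radius $r$ bounded below by a constant, so that the hidden constants in $\Omega(1/k^2)$ are genuinely independent of $k$. A secondary, easier point is confirming that restricting to the simplex boundary $\alpha_1+\alpha_2=1$ costs only a constant factor, which the $\ge 1/\sqrt2$ magnitude bound already handles.
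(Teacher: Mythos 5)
Your proposal is correct and follows essentially the same route as the paper's own proof: points placed on a circular arc in logarithmic coordinates, a pigeonhole argument on the angular gaps left by the $k$ selected points, a utility function whose exponent vector is the normal direction inside the widest gap, and the quadratic deficit $1-\cos\phi=\Theta(\phi^2)$ yielding the $\Omega(\frac{1}{k^2})$ bound. The differences are only technical: the paper takes an infinite database on the full quarter-circle $p.c_1=e^{\cos\theta},\ p.c_2=e^{\sin\theta}$ with the gap midpoint as the exact optimum and an explicit Maclaurin expansion of $1-e^{\cos\Delta-1}$, whereas you use a finite dense sub-arc plus the linearization $1-e^{-\Delta}\ge c_0\Delta$; your simplex normalization $(\alpha_1,\alpha_2)=u/(u_1+u_2)$ even has the side benefit of keeping $\alpha_1+\alpha_2=1$, so your witness function satisfies the paper's MUF constraint $\sum_j\alpha_j\le 1$, which the paper's own choice $(\cos\theta^*,\sin\theta^*)$ technically does not.
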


\begin{figure}
\centering
\includegraphics[width=2.0in]{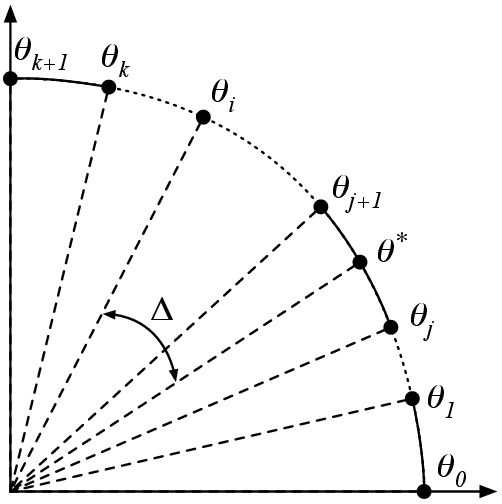}
\caption{Lower bound illustration}\label{fig:lower}
\end{figure}

\begin{proof}
We assume a data space of $(1,e] \times (1,e]$ in this proof.
Consider an infinite set $\mathcal{D}$ of 2-dimensional points, where each point $p$ corresponds to 
an angle $\theta$ in a polar coordinate system  as illustrated in Fig.~\ref{fig:lower}. 
The coordinate values of  $p$ satisfy: 
\begin{equation}
\begin{cases}
p.c_1=e^{\cos\theta}\\
p.c_2=e^{\sin\theta}
\end{cases}0 < \theta\leq\frac{\pi}{2}
\end{equation}
Given a size-$k$ subset $\mathcal{S} \subseteq \mathcal{D}$,  each point $s_i \in \mathcal{S}$ corresponds to a $\theta_i \in (0, \displaystyle \frac{\pi}{2}] $ 
 where $s_i.c_1=e^{\cos\theta_i}$ and $s_i.c_2=e^{\sin\theta_i}$.
Assume that the points $s_1, s_2, ..., s_k$ in $\mathcal{S}$ are sorted in ascending order of their corresponding $\theta_i$ values, i.e.,
$0< \theta_1 \le \theta_2 \le ... \le \theta_k \le \displaystyle \frac{\pi}{2}$.  
Further, let  $\theta_0=0$ and $\theta_{k+1}=\displaystyle \frac{\pi}{2}$. 
Then $\theta_i \ (i\in[0..k+1])$ can be represented as a point on a unit circle 
as shown in Fig.~\ref{fig:lower}. 
There are $k+2$ points in $\displaystyle [0, \frac{\pi}{2}]$.  
Every pair of adjacent points forms an angle, resulting in a total of $k+1$ angles. 
One of the adjacent pairs (i.e., for some $j\in[0..k]$) must satisfy: 
\begin{equation}
\theta_{j+1}-\theta_{j}\geq\frac{\pi}{2(k+1)}  
\end{equation}
Otherwise, if all angles are less than $\displaystyle \frac{\pi}{2(k+1)}$, their sum will be less than $\displaystyle \frac{\pi}{2}$.
Let $\theta^*$ be in the middle of $\theta_{j}$ and $\theta_{j+1}$, i.e., 
\begin{equation}
\theta^* = \displaystyle \frac{\theta_{j}+\theta_{j+1}}{2}
\end{equation}
We construct an MUF $f$ where the optimal point $p^*$ corresponds to $\theta^*$, i.e., 
$p^*.c_1=e^{\cos\theta^*}$ and $p^*.c_2=e^{\sin\theta^*}$, 
and prove the theorem based on the regret ratio of $f$.

Consider an MUF $f(p)=p.c_1^{\cos\theta^*}\cdot p.c_2^{\sin \theta^*}$.
\begin{equation}
\begin{array}{l}
\displaystyle \ln f(p)  =\ln{(p.c_1^{\cos\theta^*}\cdot p.c_2^{\sin\theta^*})}\\
\displaystyle   \quad \quad \quad  =\cos\theta^*\cdot \ln p.c_1+\sin\theta^*\cdot \ln p.c_2\\
 \displaystyle \quad \quad \quad =\cos\theta^*\cdot \cos\theta+\sin\theta^*\cdot \sin \theta
 \end{array}
\end{equation}
Let $g(\theta) = \cos\theta^*\cdot \cos\theta+\sin\theta^*\cdot \sin \theta$. 
By letting $g'(\theta) = 0$, we obtain 
$\theta = \theta^*$. Meanwhile, $g''(\theta^*) = - 1 < 0$. Thus, $\ln f(p)$ is maximized when $\theta = \theta^*$, and $f(p)$ is maximized when 
$p = p^*.$
\begin{equation}
\ln f(p^*) = \cos^2\theta^*+\sin^2\theta^* = 1; \quad \quad f(p^*) = e.
\end{equation}
Meanwhile, let $s_i$ be the optimal point for $f$ in $\mathcal{S}$. 
Since there is no other points in $\mathcal{S}$ that is between $\theta_j$ and $\theta_{j+1}$,
\begin{equation}
|\theta_i-\theta^*|=\Delta \geq \theta_{j+1} -\theta^* = \theta^* -\theta_j  \ge \frac{\pi}{4(k+1)}
\end{equation}
We consider the case where $\theta_i-\theta^*=\Delta$. 
The other case where $\theta^*-\theta_i=\Delta$ is symmetric. We omit it for conciseness.  
\begin{equation}
\begin{array}{l}
\displaystyle \ln f(s_i) =\ln{(s_i.c_1^{\cos\theta^*}\cdot s_i.c_2^{\sin\theta^*})}\\
\displaystyle \quad \quad \quad =\cos(\theta^*+\Delta)\cdot \cos\theta^*+\sin(\theta^*+\Delta)\cdot \sin\theta^*\\
\displaystyle \quad \quad \quad =(\cos\theta^*\cos\Delta-\sin\theta^*\sin\Delta)\cdot\cos\theta^*+
 (\sin\theta^*\cos\Delta+\cos\theta^*\sin\Delta)\cdot\sin\theta^*\\
\displaystyle \quad \quad \quad =\cos^2\theta^*\cdot\cos\Delta+\sin^2\theta^*\cdot\cos\Delta\\
\displaystyle \quad \quad \quad =\cos\Delta
\end{array}
\end{equation}
Here, the transformation is based on sine and cosine of sum identities. 
Now we have $f(s_i) = e^{\cos\Delta}$, and $r\_ratio_{\mathcal{D}}(\mathcal{S}, f)$ satisfies 
\begin{equation}
r\_ratio_{\mathcal{D}}(\mathcal{S}, f) = \frac{f(p^*) - f(s_i)}{f(p^*) } =  \frac{e-e^{\cos\Delta}}{e}=1-e^{\cos\Delta-1}
\end{equation}
Based on the Maclaurin series, we have
\begin{equation}
e^{\cos{\Delta}-1}=1-\frac{\Delta^2}{2}+\frac{\Delta^4}{6}-\cdots
\end{equation}
Thus,
\begin{equation}
r\_ratio_{\mathcal{D}}(\mathcal{S}, f) =\frac{\Delta^2}{2}-\frac{\Delta^4}{6}+\cdots
\end{equation}
We already know that
\begin{equation}
\Delta\geq\frac{\pi}{4(k+1)}
\end{equation}
Therefore, 
\begin{equation}
r\_ratio_{\mathcal{D}}(\mathcal{S}, f)  \ge \frac{\pi^2}{32(k+1)^2}-o(\frac{1}{k^4})
\end{equation}
This means that $r\_ratio_{\mathcal{D}}(\mathcal{S}, f)$ is at least $\displaystyle \Omega(\frac{1}{k^2})$.
\end{proof}

\section{Case Studies}\label{sec:casestudy}
 
 In this section, we showcase the applicability of the MinVar algorithm to both MUFs and non-MUFs. 
 We derive the maximum regret ratio bounds for applying MinVar on 
  $k$-regret queries with a real world example of MUFs -- the Cobb-Douglas family of utility functions,  
  and a closely related family of non-multiplicative utility functions -- 
  the Constant Elasticity of Substitution (CES) family of utility functions.

 \subsection{The K-Regret Query with Cobb-Douglas Functions}
 
The Cobb-Douglas function was first proposed as a production function to model the relationship between 
multiple inputs and the amount of output generated~\cite{10.2307/1811556}.
It was later generalized as a utility function. As a real example of  MUFs, this utility function has been 
used extensively in economics studies for modeling the diminishing marginal rate of substitution~\cite{10.2307/1811556,DIAMOND19801,vilcu2011geometric}.

\begin{defn}[Cobb-Douglas function]
A generalized \emph{Cobb-Douglas function}~\cite{vilcu2011geometric} with $d$ inputs $x_1, x_2, ..., x_d$ is a mapping $\mathcal{X}:\mathbb{R}_+^d\rightarrow\mathbb{R}_+$,
$$\mathcal{X}(x_1,x_2, ..., x_d)=A\prod^d_{j=1}x_j^{\alpha_j}$$
Here, $A>0$ and $\alpha_j\geq0$ are the function parameters. 
\end{defn}
The generalized Cobb-Douglas function is very similar to an MUF.
The $d$ inputs here can be seen as a data point of $d$ dimensions where input $x_j$ is the utility in dimension $j$. 
 MinVar applies to $k$-regret queries with Cobb-Douglas functions straightforwardly. 

To derive an upper bound 
of the maximum regret ratio for a family of  Cobb-Douglas functions 
$$\mathcal{F} = \{\mathcal{X}_1, \mathcal{X}_2, ..., \mathcal{X}_n\},$$
we transform each function $\mathcal{X}_i$ to an MUF by scaling the parameter $A$ to 1. It can be shown straightforwardly that this scaling does 
not affect the regret ratio or the maximum regret ratio.
Assume that $x_j$ has been normalized into the range of $(1,2]$. Then, the regret ratio upper bound derived in 
Section~\ref{sec:upperbound} applies, i.e., 
\begin{equation}
r\_ratio_\mathcal{D}(\mathcal{S}, \mathcal{X}_i) \le \ln\left((1+\frac{1}{t})^{\sum_{j=1}^{d-1}\alpha_j}\right)
\end{equation}
Here, each function $\mathcal{X}_i$ has a different set of parameters $\{\alpha_1,  \alpha_2, ..., \alpha_d\}$.
If $\sum^d_{j=1}\alpha_j\leq1$ holds for every $\mathcal{X}_i \in \mathcal{F}$, the maximum regret ratio is bounded by
\begin{equation}
mr\_ratio_\mathcal{D}(\mathcal{S}, \mathcal{F}) \le \ln(1+\frac{1}{t})
\end{equation}
Otherwise, the maximum regret ratio is bounded by
\begin{equation}
\begin{array}{l}
\displaystyle mr\_ratio_\mathcal{D}(\mathcal{S}, \mathcal{F}) \le \ln\left((1+\frac{1}{t})^{\alpha^\tau}\right), \\
\displaystyle \alpha^\tau =  \max \{\sum_{j=1}^{d-1}\mathcal{X}_i.\alpha_j|\mathcal{X}_i \in \mathcal{F}, \mathcal{X}_i.\alpha_j \text{ is a parameter of } \mathcal{X}_i\}.
\end{array}
\end{equation}
Similarly, the lower bound $\displaystyle \Omega(\frac{1}{k^2})$ of the maximum regret ratio derived in Section~\ref{sec:lowerbound} also applies.

\subsection{The K-Regret Query with CES Functions}
The CES function is a non-MUF that is closely related to the Cobb-Douglas function.
It is also used as a production function as well as a utility function~\cite{varian1992microeconomic,vilcu2011some}. 
The function provides an alternative model for how well the utility of an attribute makes up for that of another attribute, 
which is often used in economics studies~\cite{uzawa1962production,varian1992microeconomic,vilcu2011some} and 
has been considered previously for $k$-regret queries~\cite{kessler2015k}.

 \begin{defn}[CES function]
A generalized \emph{CES function}~\cite{vilcu2011some} with $d$ inputs $x_1, x_2, ..., x_d$ is a mapping $\mathcal{X}:\mathbb{R}_+^d\rightarrow\mathbb{R}_+$,
$$\mathcal{X}(x_1,x_2, ..., x_d)=A(\sum^d_{j=1}\alpha_j x_j^{\rho})^{\frac{\gamma}{\rho}}$$
Here, $A>0$, $\alpha_j\geq0$, $\rho<1$ ($\rho\ne 0$), and $\gamma>0$  are the function parameters. 
\end{defn}
When $\rho$ approaches 0 in the limit, the CES function will become a Cobb-Douglas function. When $\rho$ approaches 1, the CES function 
is very similar to the linear summation utility function. The case where $\rho > 1$ is not considered in the original proposal~\cite{10.2307/2232795} of the CES utility function.
We do not consider this case either, but this case could be an interesting subject for future work.

Algorithm MinVar  can also process $k$-regret queries with CES utility functions. To derive bounds 
for the maximum regret ratio, we simplify and rewrite the CES function $\mathcal{X}$ as a function $f$ as follows, assuming that 
$A = \gamma=1$. Making $A=1$ can be done by scaling, while the case where $\gamma \neq 1$ is considered as future work. 
$$f(p_i)=(\sum^d_{j=1}\alpha_j \cdot p_i.c_j^b)^{\frac{1}{b}}$$
Here, $0<b<1$ and $\alpha_j \ge 0 $.

It has been shown~\cite{kessler2015k} that the maximum regret ratio for $k$-regret queries with CES utility functions
is bounded between $\displaystyle \Omega (\frac{1}{bk^2})$ and $\displaystyle O(\frac{1}{bk^\frac{b}{d-1}})$ when $0 < b < 1$ 
(between $\displaystyle \Omega (\frac{1}{bk^2})$ and $\displaystyle O(\frac{1}{k^\frac{1}{b(d-1)}})$ when $b > 1$).
The lower bound also applies to our MinVar algorithm.
In what follows, we derive a tighter upper bond for the case where $0 < b < 1$. Note that this bound does not require the data space to be in $(1, 2]$ 
in each dimension, and it also applies to the MinWidth algorithm~\cite{kessler2015k}.

We first derive a new upper bound for the regret ratio for a single CES utility function $f$. 
Again, the intuition 
is to use the bucket size to bound the difference between $f(p^*)$ and $f(s^*)$, where $p^*$ is the optimal point for $f$ 
and $s^*$ is the point in the same bucket as $p^*$ returned by MinVar.

\begin{thm} Let $f(p_i)=(\sum^d_{j=1}\alpha_j \cdot p_i.c_j^b)^{\frac{1}{b}}$ be a CES utility function,  
where $0<b<1$ and $\alpha_j \ge 0 $.
The regret  ratio $r\_regret_{\mathcal{D}}(\mathcal{S}, f)$ of a set $\mathcal{S}$ 
 returned by MinVar satisfies 
\begin{equation}
r\_ratio_{\mathcal{D}}(\mathcal{S}, f) \le \displaystyle{\frac{d^{\frac{1}{b}}} {t+d^{\frac{1}{b}}}}
\end{equation}
\end{thm}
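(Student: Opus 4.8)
The plan is to recycle the upper-bound argument for the MUF case (Theorem~1), but to compare the two utilities through their $b$-th powers rather than through a logarithm. Fix an arbitrary $f\in\mathcal{F}$, let $p^{*}=\mathrm{argmax}_{p\in\mathcal{D}}f(p)$ be the global optimum, and let $s^{*}$ be the point MinVar retains from the bucket containing $p^{*}$. Two facts from MinVar's construction carry over unchanged: because $s^{*}$ maximizes coordinate $d$ inside its bucket, $s^{*}.c_{d}\ge p^{*}.c_{d}$; and because $p^{*}$ and $s^{*}$ lie in the same bucket, for each $j\in[1..d-1]$ we have $p^{*}.c_{j}-s^{*}.c_{j}\le \frac{c_{j}^{\tau}-1}{t}$, which with $1<s^{*}.c_{j}\le c_{j}^{\tau}\le 2$ gives the per-coordinate ratio bound $\frac{p^{*}.c_{j}}{s^{*}.c_{j}}<1+\frac{1}{t}$, exactly as derived in Theorem~1.

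The key decision is to stay \emph{multiplicative} and never form the difference $f(p^{*})-f(s^{*})$ at the level of the $\frac{1}{b}$-th power. Since $0<b<1$ the outer exponent $\frac{1}{b}$ exceeds $1$, so the map $(x_{1},\dots,x_{d})\mapsto(\sum_{j}\alpha_{j}x_{j})^{1/b}$ is convex in the variables $x_{j}=c_{j}^{b}$, and a mean-value estimate of the difference would introduce a factor $\frac{1}{b}$ --- precisely the origin of the weaker, previously known bound $O(\frac{1}{bk^{b/(d-1)}})$. Instead I would compare coordinatewise inside the sum. From $\frac{s^{*}.c_{j}}{p^{*}.c_{j}}>\frac{t}{t+1}$ for $j<d$ and $s^{*}.c_{d}\ge p^{*}.c_{d}$, it follows that $(s^{*}.c_{j})^{b}\ge\left(\frac{t}{t+1}\right)^{b}(p^{*}.c_{j})^{b}$ for \emph{every} $j$ (the dimension-$d$ case is trivial because $\left(\frac{t}{t+1}\right)^{b}\le 1$). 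Summing against the nonnegative weights $\alpha_{j}$ gives
$$f(s^{*})^{b}=\sum_{j=1}^{d}\alpha_{j}(s^{*}.c_{j})^{b}\ge\left(\frac{t}{t+1}\right)^{b}\sum_{j=1}^{d}\alpha_{j}(p^{*}.c_{j})^{b}=\left(\frac{t}{t+1}\right)^{b}f(p^{*})^{b},$$
and taking $b$-th roots yields $\frac{f(s^{*})}{f(p^{*})}\ge\frac{t}{t+1}$.

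To finish, note that $s^{*}\in\mathcal{S}$ implies $gain(\mathcal{S},f)\ge f(s^{*})$, so $r\_ratio_{\mathcal{D}}(\mathcal{S},f)=1-\frac{gain(\mathcal{S},f)}{f(p^{*})}\le 1-\frac{f(s^{*})}{f(p^{*})}\le\frac{1}{t+1}$. This establishes the theorem: since $(d-1)^{1/b}\ge 1$ and $x\mapsto\frac{x}{t+x}$ is increasing, $\frac{1}{t+1}\le\frac{(d-1)^{1/b}}{t+(d-1)^{1/b}}$, so the claimed bound holds (indeed with the sharper constant $\frac{1}{t+1}$).

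The part I expect to be the real obstacle is the second step: recognizing that the difference-based reasoning that worked for additive and CES-summation utilities must be discarded here, and that the comparison has to be carried out on the summands of the $b$-th power. A tempting shortcut --- writing $f(p^{*})\le f(s^{*})+\left(\sum_{j<d}\alpha_{j}(p^{*}.c_{j}-s^{*}.c_{j})^{b}\right)^{1/b}$ by a ``triangle inequality'' --- is \emph{invalid}, because for $0<b<1$ the $\ell^{b}$ quasinorm is superadditive (reverse Minkowski), so this inequality points the wrong way and cannot be used to upper bound the regret; the multiplicative, coordinatewise route above sidesteps the difficulty entirely.
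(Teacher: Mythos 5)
Your proof is correct, but it is not the paper's argument; it is both simpler and strictly stronger. The paper works with the difference $f(p^*)-f(s^*)$: it applies a convexity/mean-value estimate to $x^{1/b}$ (introducing a factor $\frac{1}{b}$), then a concavity estimate to $x^b$ on each coordinate (producing a factor $b$ that cancels the first), drops dimension $d$, and arrives at $regret_{\mathcal{D}}(\mathcal{S},f)\le\sigma\cdot\max_{j\in[1..d-1]}f(p_j^*)$ with $\sigma=(d-1)^{1/b}/t$. Crucially, it then needs the $d-1$ dimension-maximizing points $p_j^*$ that MinVar inserted into $\mathcal{S}$ in its first phase to lower-bound $gain(\mathcal{S},f)\ge\frac{1}{\sigma}regret_{\mathcal{D}}(\mathcal{S},f)$, and converts to a ratio via $r\_ratio_{\mathcal{D}}(\mathcal{S},f)=\frac{regret}{regret+gain}\le\frac{\sigma}{1+\sigma}=\frac{(d-1)^{1/b}}{t+(d-1)^{1/b}}$; the constant $(d-1)^{1/b}$ is exactly the price of bounding a weighted sum of per-coordinate differences. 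You instead stay multiplicative: the per-coordinate ratio bound $s^*.c_j/p^*.c_j>\frac{t}{t+1}$ (trivially $\ge 1$ for $j=d$) passes through the power $b$, the weighted sum, and the root $\frac{1}{b}$ untouched, giving $f(s^*)/f(p^*)\ge\frac{t}{t+1}$ and hence $r\_ratio_{\mathcal{D}}(\mathcal{S},f)\le\frac{1}{t+1}$, using only the bucket representative $s^*$ and never the points $p_j^*$. This is sharper than the stated bound whenever $(d-1)^{1/b}>1$ (dramatically so for small $b$, where $(d-1)^{1/b}$ explodes), yet still implies it since $x\mapsto\frac{x}{t+x}$ is increasing; it also isolates cleanly why the $\frac{1}{b}$ factor in the earlier bound of Kessler Faulkner et al.\ can be removed, which the paper achieves only through the cancellation of its two mean-value factors. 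Two small caveats, neither of which is a gap specific to your argument: like the paper's own proof, your step $\frac{p^*.c_j}{s^*.c_j}<1+\frac{1}{t}$ relies on the standing normalization $p_i.c_j\in(1,2]$ from Section 3 (the CES theorem statement omits it, but the paper assumes it throughout), and the final comparison uses $d\ge 2$, which is implicit in MinVar's construction.
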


\begin{proof}
Let $p^*$ be the point in $\mathcal{D}$ with the largest utility computed by $f$, and
$s^*$ be the point in $\mathcal{S}$ that is selected in the same bucket in which $p^*$ lies.
We have:
\begin{equation}
\begin{array}{l}
\displaystyle regret_\mathcal{D}(\mathcal{S},f)  = \max_{p_i\in\mathcal{D}}f(p_i) - \max_{p_i\in\mathcal{S}}f(p_i)\\
\displaystyle \quad \quad \quad \quad \quad \quad \le f(p^*)-f(s^*)\\
\displaystyle \quad \quad \quad \quad \quad \quad  =(\sum_{j=1}^{d}\alpha_j \cdot p^*.c_j^b)^{\frac{1}{b}}-(\sum_{j=1}^{d}\alpha_j \cdot s^*.c_j^b)^{\frac{1}{b}}
\end{array}
\end{equation}
Since $g(x) = x^\frac{1}{b}$ is convex when $0 < b < 1$, we have 
$g(x) - g(y) \le (x-y)g'(x)$. Thus, 
\begin{equation}
\begin{array}{l}
\displaystyle regret_\mathcal{D}(\mathcal{S},f)
  \leq(\sum_{j=1}^{d}\alpha_j \cdot p^*.c_j^b-\sum_{j=1}^{d}\alpha_j \cdot s^*.c_j^b) \frac{1}{b} (\sum_{j=1}^{d}\alpha_j \cdot p^*.c_j^b)^{\frac{1}{b}-1}\\
\displaystyle  \quad \quad \quad \quad \quad \quad  = \frac{1}{b}\left[\sum_{j=1}^{d}\alpha_j(p^*.c_j^b-s^*.c_j^b) \right](\sum_{j=1}^{d}\alpha_j \cdot p^*.c_j^b)^{\frac{1}{b}-1}
\end{array}
\end{equation}
Consider another function $g(x) = x^b$, which is concave when $0<b<1$, and $g'(x)$ is monotonically decreasing ($g''(x) < 0$).
According to Lagrange's Mean Value Theorem, there must exist some value $\xi$ between two values $x$ and $y$, such that 
$g(x) - g(y) = x^b-y^b= (x - y)\cdot g'(\xi) = (x - y)\cdot b\cdot \xi ^{b-1}$. Further, since $g'(x)$ is monotonically decreasing, $b\cdot \xi ^{b-1} \le b\cdot (\min\{x, y\})^{b-1} $.
Thus, we have 
\begin{equation}
\begin{array}{l}
\displaystyle p^*.c_j^b-s^*.c_j^b
 \leq |p^*.c_j-s^*.c_j|\cdot b\cdot (\min\{p^*.c_j,s^*.c_j\})^{b-1}\\
\displaystyle \quad \quad \quad \quad \quad   \le \frac{c_j^\tau}{t}\cdot b\cdot {c_j^\tau}^{b-1} = \frac{b}{t}{c_j^\tau}^b
\end{array}
\end{equation}
 Therefore, 
\begin{equation}
\begin{array}{l}
regret_\mathcal{D}(\mathcal{S},f)
\displaystyle \le \frac{1}{b}(\sum_{j=1}^{d}\alpha_j \frac{b}{t}{c_j^\tau}^b ) (\sum_{j=1}^{d}\alpha_j \cdot p^*.c_j^b)^{\frac{1}{b}-1} \\
\displaystyle \quad \quad \quad \quad \quad \quad  \leq \frac{d}{t} (\max_{j \in [1..d]}\alpha_j {c_j^\tau}^{b} ) \left[d \cdot \max_{j \in [1..d]}\alpha_j{c_j^\tau}^b\right]^{\frac{1}{b}-1} \\
\displaystyle \quad \quad \quad \quad \quad \quad  = \frac{d^{\frac{1}{b}} }{t} (\max_{j \in [1..d]} \alpha_j{c_j^\tau}^b )^{\frac{1}{b}} \\
\end{array}
\end{equation}
Recall that $p^*_j$ has the largest utility in dimension $j$, i.e., $p^*_j.c_j = c_j^\tau$. This means that 
$\alpha_j p^*_j.c_j^b =\alpha_j{c_j^\tau}^b$. Since $\forall l \in [1..d], \alpha_l \ge 0$ and $p^*_j.c_l > 0$, we have $\sum_{l=1}^{d}\alpha_l\cdot {p^*_j.c_l}^b \ge \alpha_j p^*_j.c_j^b = \alpha_j{c_j^\tau}^b$. Thus,
\begin{equation}
\begin{array}{l}
regret_\mathcal{D}(\mathcal{S},f)
\displaystyle \leq \frac{d^{\frac{1}{b}} }{t} (\max_{j \in [1..d]}\sum_{l=1}^{d}\alpha_l\cdot {p^*_j.c_l}^b)^{\frac{1}{b}}\\
\displaystyle \quad \quad \quad \quad \quad \quad  =\frac{d^{\frac{1}{b}} }{t} \max_{j \in [1..d]}(\sum_{l=1}^{d}\alpha_l \cdot {p^*_j.c_l}^b)^{\frac{1}{b}}
\end{array}
\end{equation}
Let $\sigma = \displaystyle  \frac{d^{\frac{1}{b}} }{t}$. Then, 
\begin{equation}
\begin{array}{l}
\displaystyle regret_\mathcal{D}(\mathcal{S},f) \le \sigma \max_{j \in [1..d]}(\sum_{l=1}^{d}\alpha_l \cdot {p^*_j.c_l}^b)^{\frac{1}{b}}\\
\displaystyle \frac{1}{\sigma} regret_\mathcal{D}(\mathcal{S},f) \le \max_{j \in [1..d]}(\sum_{l=1}^{d}\alpha_l \cdot {p^*_j.c_l}^b)^{\frac{1}{b}}
\end{array}
\end{equation}
By the design of the MinVar algorithm, $p_j^* \  (j \in [1..d-1])$ is in $\mathcal{S}$. Meanwhile, 
the point with the largest utility in dimension $d$ in each bucket is also in $\mathcal{S}$, which means that
$p_d^*$ is also in $\mathcal{S}$. Thus, 
\begin{align*}
\max_{p_i\in \mathcal{S}}f(p_i) &\geq \max_{j \in [1..d]}f(p_j^*)  \\
&=  \max_{j \in [1..d]} ( \sum^d_{l=1}{\alpha_l \cdot p^*_j.c_l}^b)^{\frac{1}{b}} \\
&\ge \frac{1}{\sigma}regret_\mathcal{D}(\mathcal{S},f)
\end{align*}
This means $\displaystyle \frac{\max_{p_i\in \mathcal{S}}f(p_i)}{regret_\mathcal{D}(\mathcal{S},f)} \ge \frac{1}{\sigma}$. 
The regret ratio $r\_regret_{\mathcal{D}}(\mathcal{S}, f)$ is hence bounded by
\begin{equation}
\begin{array}{l}
\displaystyle r\_regret_{\mathcal{D}}(\mathcal{S}, f)
  =\frac{regret_\mathcal{D}(\mathcal{S},f)}{regret_\mathcal{D}(\mathcal{S},f)+\max_{p_i\in \mathcal{S}}f(p_i)}\\
\displaystyle \quad \quad \quad \quad \quad \quad \quad  = \frac{1}{1+{\frac{\max_{p_i\in \mathcal{S}}f(p_i)}{regret_\mathcal{D}(\mathcal{S},f)}}}\\
\displaystyle \quad \quad \quad \quad \quad \quad \quad   \leq \frac{1}{1+\frac{1}{\sigma}}=\frac{\sigma}{1+\sigma} = \frac{d^{\frac{1}{b}}} {t+d^{\frac{1}{b}}}
\end{array}
\end{equation}
\end{proof}

Therefore, given a set $\mathcal{F}$ of CES functions, the maximum regret ratio $mr\_regret_{\mathcal{D}}(\mathcal{S}, \mathcal{F})$ satisfies:
\begin{equation}
mr\_regret_{\mathcal{D}}(\mathcal{S}, \mathcal{F}) \leq \frac{d^{\frac{1}{b}}} {t+d^{\frac{1}{b}}}
\end{equation}

We can see from this bound that, 
when $k$ decreases or $d$ increases, the maximum regret ratio is expected to increase. 
For simplicity, we say that this bound 
grows in a scale of $O(\frac{1}{k^{\frac{1}{d-1}}})$. This bound is tighter than the bound $O(\frac{1}{bk^\frac{b}{d-1}})$ obtained in 
a previous study~\cite{kessler2015k} since $0<b<1$.

To give an example, 
consider a family of CES functions where $b = 0.5$. Let $d = 2$ and $k = 3$, which means $t = 2$. The upper bound of the maximum regret ratio $\frac{d^{\frac{1}{b}}} {t+d^{\frac{1}{b}}} = 
\frac{4} {2+4} \approx 66.67\%$.
As $k$ increases (e.g., to $20$), this upper bound  will decrease (e.g., to $ \frac{4} {19+4} \approx 17.39\%$).

\section{The MaxDif Algorithm}\label{sec:maxdif}

MinVar aims to \emph{bound} the maximum regret ratios rather than to \emph{minimize} them. 
Its bucket-based answer point selection strategy is conservative. 
To minimize the maximum regret ratios, 
we propose a heuristic based second query algorithm named \emph{MaxDif} that exploits  \emph{skyline points}~\cite{borzsony2001skyline}.
We first show that the answer set to minimize the maximum regret ratio must be formed by  skyline points.
When there are more than $k$ skyline points, we need to select $k$ from them to form an answer set. 
MaxDif makes this selection following a heuristic for regret ratio minimization. 
If there are no more than $k$ skyline points, the entire set of skyline points should be returned as the answer. 
The answer set can be padded with randomly selected objects from the database to make it of size-$k$. 

\begin{figure}
\centering
\includegraphics[width=2.5in]{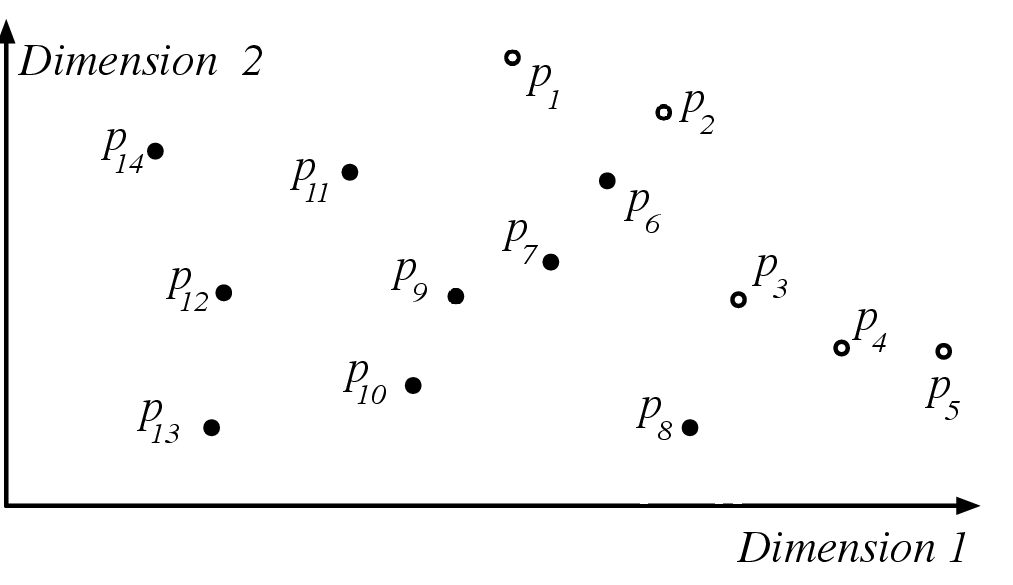}
\caption{Skyline point selection}\label{fig:selectpoints}
\end{figure}

\textbf{Regret ratio minimization with skyline points.}
Skyline points are points that are not \emph{dominated} by any other points.
 Given two points $p_i$ and $p_j$,  
  $p_i$ is said to dominate $p_j$ if and only if the coordinate values of $p_i$ are no smaller than those of $p_j$ in all dimensions, and there is at least one 
dimension where the coordinate  of $p_i$ is larger than that of $p_j$, i.e., 
\begin{equation}
\forall l \in [1..d], p_i.c_l \ge p_j.c_l \wedge \exists l \in [1..d], p_i.c_l > p_j.c_l
\end{equation}
In Fig.~\ref{fig:selectpoints}, the hollow dots $p_1$, $p_2$, $p_3$, $p_4$, and $p_5$ are skyline points as they are not dominated, while  
the black dots are \emph{non-skyline points}, i.e., points dominated by some skyline point. 

Let $\mathcal{P}$ be the set of all skyline points. Its regret  
over any MUF $f$, $regret_\mathcal{D}(\mathcal{P},f)$, must be $0$. This is because, for any non-skyline point $p_j \in \mathcal{D} \setminus \mathcal{P}$, the skyline point $p_i \in \mathcal{P}$ 
dominating $p_j$ satisfies $f(p_i) \ge f(p_j)$ for any MUF $f$ by definition, and hence $gain(\mathcal{P}, f) \ge gain(\mathcal{D} \setminus \mathcal{P}, f)$. Thus, 
\begin{equation}
\begin{array}{l}
\displaystyle regret_\mathcal{D}(\mathcal{P},f) 
 = gain(\mathcal{D},f) - gain(\mathcal{P},f)\\
\displaystyle \quad \quad \quad \quad \quad \quad = gain(\mathcal{D}\setminus \mathcal{P} \cup \mathcal{P},f) - gain(\mathcal{P},f)\\
 \displaystyle  \quad \quad \quad \quad \quad \quad= \max\{gain(\mathcal{D\setminus\mathcal{P}},f), gain(\mathcal{P},f)\} - gain(\mathcal{P},f)\\
 \displaystyle  \quad \quad \quad \quad \quad \quad=gain(\mathcal{P},f) - gain(\mathcal{P},f)\\
 \displaystyle  \quad \quad \quad \quad \quad \quad = 0
\end{array}
\end{equation}
Therefore, \emph{the regret ratio of $\mathcal{P}$ and the maximum regret ratio of $\mathcal{P}$  over a family of MUFs must be 0.}

When $|\mathcal{P}| \le k$, $\mathcal{P}$ is the optimal answer to a $k$-regret query with MUFs. 
When $|\mathcal{P}| > k$, a size-$k$ subset of it must be the optimal answer. The reason is as follows. 
For any size-$k$ subset $\mathcal{S} \in \mathcal{D}$, 
we can construct a new size-$k$ subset $\mathcal{S}'$ by replacing every non-skyline point $p_j\in \mathcal{S}$ with a skyline point $p_i \in \mathcal{P}$ 
that dominates $p_j$. The gain of  $\mathcal{S}'$ over any MUF $f$ must be larger than or equal to that of  $\mathcal{S}$. 
Thus, the maximum regret ratio of $\mathcal{S}'$ must be smaller than or equal to that of   $\mathcal{S}$.
Therefore, the optimal answer set that minimizes the maximum regret ratio must be a set that contains  skyline points only, and hence it is a subset of $\mathcal{P}$. 
This is formulated as the following theorem. 

\begin{thm}\label{thm:skyline} 
Let $\mathcal{F} = \{f| f(p_i)=\prod^d_{j=1} p_i.c_j^{\alpha_j}\}$ be a set of MUFs,  
where $\alpha_j\geq 0$ and  $\sum_{j=1}^{d}\alpha_j \le 1$. Suppose that there are $k$ or more skyline points in a database $\mathcal{D}$ of size $n$.
Then, there is a size-$k$ subset $\mathcal{S}^* \subseteq \mathcal{D}$ 
that contains only skyline points, and its maximum regret ratio is less than or equal to that of any other size-$k$ subset $\mathcal{S} \subseteq \mathcal{D}$, 
i.e., $mr\_ratio_{\mathcal{D}}(\mathcal{S}^*, \mathcal{F}) \le mr\_ratio_{\mathcal{D}}(\mathcal{S}, \mathcal{F})$.
\end{thm}

\begin{proof}
The proof is straightforward as sketched in the paragraph above. We omit the full proof for conciseness. 
\end{proof}

We note that a similar theorem has been proven~\cite{Asudeh:2017:ECR:3035918.3035932} in parallel to show that 
skyline points minimize the maximum regret ratio for AUFs. Both AUFs and MUFs are monotonically increasing 
in each dimension, and their parameters are both non-negative. This allows skyline points to 
minimize the maximum regret ratio for both types of utility functions. 

\textbf{Selection of $k$ skyline points.} 
Theorem~\ref{thm:skyline} reduces the $k$-regret query to selecting $k$ skyline points that together 
 minimize the maximum regret ratio. 
This can be done by finding the size-$k$ subset $\mathcal{S}^* \subseteq \mathcal{P}$ such that, 
 for any MUFs in $\mathcal{F}$,  the maximum regret of $\mathcal{S}^*$ over 
 the gain of $\mathcal{P}$ is minimized.
 Since $\mathcal{P} =( \mathcal{P} \setminus \mathcal{S}^*) \cup \mathcal{S}^*$, the goal translates to 
  minimizing the maximum ratio of gain difference between $\mathcal{P}\setminus \mathcal{S}^*$  and 
  $\mathcal{S}^*$  over the gain of  $\mathcal{P}\setminus \mathcal{S}^*$ for any MUFs in $\mathcal{F}$. Formally, 
\begin{equation}\label{eq:sstar}
\begin{array}{l}
\displaystyle \mathcal{S}^*  = \underset{\mathcal{S}\subseteq \mathcal{P}, |\mathcal{S}| = k}{\arg\min} \max_{f\in\mathcal{F}} \{\frac{gain(\mathcal{P} \setminus \mathcal{S}, f) - gain(\mathcal{S}, f)}{gain(\mathcal{P} \setminus \mathcal{S}, f)}\}\\
\displaystyle \quad  = \underset{\mathcal{S}\subseteq \mathcal{P}, |\mathcal{S}| = k}{\arg\min} \max_{f\in\mathcal{F}} \{\frac{\max_{p_i\in \mathcal{P} \setminus \mathcal{S}} f(p_i) - \max_{p_j\in \mathcal{S}} f(p_j)}{\max_{p_i\in \mathcal{P} \setminus \mathcal{S}} f(p_i)}\}
 \end{array}
\end{equation}
There are $\displaystyle {|\mathcal{P}| \choose k} = \frac{|\mathcal{P}|!}{k!(|\mathcal{P}|-k)!}$ size-$k$ subsets of skyline points.
Finding the optimal size-$k$ subset from them has been shown to be NP-hard~\cite{chester2014computing}.  
We use a greedy heuristic to select $k$ skyline points iteratively to form the answer set $\mathcal{S}^*$. 
In each iteration, the point $p^* \in \mathcal{P}$ is selected such that, for any MUF $f\in \mathcal{F}$, 
the maximum ratio of difference between $gain(\mathcal{P}\setminus\{p^*\}, f)$ and $f(p^*)$ over $gain(\mathcal{P}\setminus\{p^*\}, f)$ is minimized.
Point $p^*$ is added to $\mathcal{S}^*$ and removed from $\mathcal{P}$ before the next point is selected from $\mathcal{P}$. 
Formally,
\begin{equation}\label{eq:pstar}
\displaystyle p^* = \underset{p\in \mathcal{P}}{\arg\min} \max_{f\in\mathcal{F}} \{\frac{\max_{p_i\in \mathcal{P} \setminus \{p\}} f(p_i) -  f(p)}{\max_{p_i\in \mathcal{P} \setminus \{p\}} f(p_i)}\}
\end{equation}
As we can see, Equations~\ref{eq:sstar} and~\ref{eq:pstar} are very similar, except that a size-$k$ subset $\mathcal{S}$ has been replaced by a point $p$. 

To compute point $p^*$, we need to first rewrite Equation~\ref{eq:pstar}. We know that 
\begin{equation}
\max_{p_i\in \mathcal{P} \setminus \{p\}} f(p_i) -  f(p) = \max_{p_i\in \mathcal{P} \setminus \{p\}} \{f(p_i) -  f(p)\}
\end{equation}
Further, 
\begin{equation}
\displaystyle \frac{\max_{p_i\in \mathcal{P} \setminus \{p\}} \{f(p_i) -  f(p)\}}{\max_{p_i\in \mathcal{P} \setminus \{p\}} f(p_i) }
= \max_{p_i\in \mathcal{P} \setminus \{p\}}\{ \frac{f(p_i) -  f(p)}{f(p_i)} \}
\end{equation}
Thus, 
\begin{equation}
p^* = \underset{p\in \mathcal{P}}{\arg\min} \max_{f\in\mathcal{F}} \{\max_{p_i\in \mathcal{P} \setminus \{p\}} \{\frac{f(p_i) -  f(p)}{f(p_i)}\}\}
\end{equation}
The two ``$\max$'' aggregates in the equation above can be swapped without affecting $p^*$. They require checking all combinations of points in 
$\mathcal{P} \setminus \{p\}$ and functions in $\mathcal{F}$. The order of the checking has no impact on the computation result. Thus, 
\begin{equation}\label{eq:pstar1}
p^* = \underset{p\in \mathcal{P}}{\arg\min} \max_{p_i\in \mathcal{P} \setminus \{p\}} \{ \max_{f\in\mathcal{F}}\{\frac{f(p_i) -  f(p)}{f(p_i)}\}\}
\end{equation}
The two aggregates ``$\underset{p\in \mathcal{P}}{\arg\min}$'' and ``$\displaystyle \max_{p_i\in \mathcal{P} \setminus \{p\}}$'' in this equation 
can be handled simply by a two-layer loop to examine all the points in $\mathcal{P}$. 

\textbf{MaxDif computation.}  
The only problem remaining is to compute 
the term $\displaystyle \max_{f\in\mathcal{F}}\{\frac{f(p_i) -  f(p)}{f(p_i)}\}$.
This term represents the \emph{max}imum ratio of \emph{dif}ference between the utilities of $p_i$ and $p$ over the utility of $p_i$ 
for any MUFs in $\mathcal{F}$.
We call it the \emph{MaxDif} of $p_i$ over $p$, and denote it by $maxdif(p_i, p)$. 
\begin{equation}
maxdif(p_i, p) = \max_{f\in \mathcal{F}} \{\frac{f(p_i) -  f(p)}{f(p_i)}\}
\end{equation}
Without knowing the exact MUFs in  $\mathcal{F}$, however, it is 
infeasible to compute $maxdif(p_i, p)$.
We address this problem by computing an upper bound for it instead, which is denoted by $maxdif^*(p_i, p)$: 
\begin{equation}
maxdif^*(p_i, p) =  \max_{l \in [1..d]} \ln \frac{p_i.c_l}{p.c_l} 
\end{equation}
We show that $maxdif(p_i, p) \le  maxdif^*(p_i, p)$ holds by considering  
$\displaystyle \frac{f(p_i) -  f(p)}{f(p_i)}$ for any MUF $f\in \mathcal{F}$.
Equation~\ref{eq:ub2} 
in the proof of the maximum regret ratio upper bound in Section~\ref{sec:upperbound} suggests:
\begin{equation}
f(p_i)-f(p) \le \left[\sum_{l=1}^{d} \alpha_l(\ln  p_i.c_l-\ln p.c_l) \right] \prod_{l=1}^d p_i.c_l^{\alpha_l}
\end{equation}
By definition, $f(p_i) = \prod_{l=1}^d p_i.c_l^{\alpha_l} > 0 $. Thus,
\begin{equation}
\displaystyle \frac{f(p_i) -  f(p)}{f(p_i)} \le \sum_{l=1}^{d} \alpha_l(\ln  p_i.c_l-\ln p.c_l) 
\end{equation}
Since $\ln  p_i.c_l-\ln p.c_l \le \max_{l\in[1..d]} \{\ln  p_i.c_l-\ln p.c_l\}$, 
\begin{equation}
\begin{array}{l}
\displaystyle \frac{f(p_i) -  f(p)}{f(p_i)}
   \le \sum_{\gamma=1}^{d} \alpha_\gamma \max_{l \in [1..d]} \{\ln  p_i.c_l-\ln p.c_l\} \\
\displaystyle   \quad \quad \quad \quad \quad  = \sum_{\gamma=1}^{d} \alpha_\gamma \max_{l \in [1..d]} \ln  \frac{ p_i.c_l}{ p.c_l} 
\end{array}
\end{equation}
We know that $\sum_{\gamma=1}^{d} \alpha_\gamma \le 1$. Thus,
\begin{equation}
\displaystyle \frac{f(p_i) -  f(p)}{f(p_i)}  \le  \max_{l \in [1..d]} \ln  \frac{p_i.c_l}{ p.c_l}
\end{equation}
The right hand side of the inequality is independent of any MUF $f$. 
Thus, 
\begin{equation}
\begin{array}{l}
\displaystyle maxdif(p_i, p) 
= \max_{f\in \mathcal{F}} \{\frac{f(p_i) -  f(p)}{f(p_i)}\}\\
\displaystyle  \quad \quad \quad \quad \quad \quad \le \max_{l \in [1..d]} \ln  \frac{p_i.c_l}{ p.c_l} \\
\displaystyle  \quad \quad \quad \quad \quad \quad = maxdif^*(p_i, p)
\end{array}
\end{equation}
We can now replace $ \max_{f\in\mathcal{F}}\{\frac{f(p_i) -  f(p)}{f(p_i)}\}$ with $maxdif^*(p_i, p)$ in Equation~\ref{eq:pstar1} for computing $p^*$: 
\begin{equation}\label{eq:pstarub}
p^* = \underset{p\in \mathcal{P}}{\arg\min} \max_{p_i\in \mathcal{P} \setminus \{p\}} \{maxdif^*(p_i, p)\}
\end{equation}

\IncMargin{1em}
\begin{algorithm} 
\caption{MaxDif} \label{alg:maxdif}
\KwIn{$\mathcal{D}=\{p_1, p_2, ... , p_n\}$: a $d$-dimensional database;  $k$: the size of the answer set.} 
\KwOut{$\mathcal{S}^*$: a size-$k$ subset of $\mathcal{D}$.} 
Compute skyline points and store them in $\mathcal{P}$\;
$\mathcal{S}^* \leftarrow \emptyset$\;
\For {$i = 1, 2,..., d$} {
	Find $p^*_{i} \in \mathcal{P}$ which has the largest utility $ p^*_{i}.c_i$ in dimension $i$\;
	$\mathcal{S}^* \leftarrow \mathcal{S} \cup \{p^*_{i}\}$\;
	$\mathcal{P} \leftarrow \mathcal{P} \setminus \{p^*_i\}$\;	
}
\While{$|\mathcal{S}|< k$ and  $|\mathcal{P}| > 0$}{
	$p^* \leftarrow FindMinMaxDifPoint(\mathcal{P})$\;
	$\mathcal{S}^* \leftarrow \mathcal{S}^* \cup \{p^*\}$\;
	$\mathcal{P} \leftarrow \mathcal{P} \setminus \{p^*\}$\;	
}
$\mathcal{S}^* \leftarrow \mathcal{S}^* \cup \{k - |\mathcal{S}^*|$ random points in $\mathcal{D} \setminus \mathcal{S}^*\}$\;
\textbf{return} $\mathcal{S}^*$\;
\end{algorithm}
\DecMargin{1em}

\IncMargin{1em}
\begin{algorithm} 
\caption{FindMinMaxDifPoint} \label{alg:maxdifpoint}
\KwIn{$\mathcal{P}$: a skyline point set.} 
\KwOut{$p^*$: a MinMaxDif point.} 
$min \leftarrow +\infty$\;
\For {$p \in \mathcal{P}$} {
	$p.md \leftarrow -\infty$\;
	\For {$p_i \in \mathcal{P}\setminus \{p\}$} {
		\If{$maxdif^*(p_i , p$) $> p.md$ }{
			$p.md \leftarrow maxdif^*(p_i , p)$\;
		}
	}	
	\If{$p.md < min$ }{
		$p^* \leftarrow p$\;
		$min \leftarrow p.md$\;
	}
}
\textbf{return} $p^*$\;
\end{algorithm}
\DecMargin{1em}

\textbf{The algorithm.}
We name our query algorithm after the MaxDif metric, i.e., the \emph{MaxDif} algorithm.
As summarized in Algorithm~\ref{alg:maxdif},  the MaxDif algorithm first computes all the skyline points and 
stores them in a set $\mathcal{P}$  (Line~1). This can be done by an existing skyline query algorithm (e.g.,~\cite{papadias2003optimal,tan2001efficient}) 
and is not the focus of our study. A straightforward algorithm is a three-layer nested loop over all the data points 
and dimensions to look for any non-dominated points. 
Then, following MinVar, the algorithm adds the skyline point with the largest coordinate value 
in each dimension to the answer set $\mathcal{S}^*$ (Lines 3 to 6). 
This  serves to cover the extreme case where the MUFs have a weight of 1 in some dimension and 0's in all other dimensions. 
The algorithm proceeds to add point $p^*$ as defined by Equation~\ref{eq:pstarub} into $\mathcal{S^*}$ iteratively (Lines 7 to 10). 
We call point $p^*$ a \emph{MinMaxDif} point and use a sub-algorithm \emph{FindMinMaxDifPoint} to compute it. 
Each MinMaxDif point added to $\mathcal{S}^*$ is removed from $\mathcal{P}$, and the loop terminates when $\mathcal{S}^*$ has $k$ points or 
$\mathcal{P}$ becomes empty. If the loop terminates and $\mathcal{S}^*$ does not have $k$ points, we fill up $\mathcal{S}^*$ with randomly 
selected points from $\mathcal{D}$ (Line 11). Then, the set $\mathcal{S}^*$ is returned (Line 12).

The FindMinMaxPoint algorithm loops through the skyline points in $\mathcal{P}$. For every skyline point $p$, 
we compute the MaxDif value of every skyline point $p_i \in \mathcal{P}\setminus \{p\}$ over $p$. The largest 
MaxDif  value is recorded as $p.md$. The skyline point $p^*$ with the smallest $p^*.md$ value is returned 
as the MinMaxDif point. We summarize this process as Algorithm~\ref{alg:maxdifpoint}.

\textbf{Algorithm complexity.}
The MaxDif algorithm needs to compute and store the set of skyline points $\mathcal{P}$.
Leaving out the space for storing the input data, the space complexity of the algorithm is $O(|\mathcal{P}| +k)$.
In the worst case, $\mathcal{P}$ has the same size as the entire database, and the worst-case 
space complexity is $O(n +k)$.

Computing the set $\mathcal{P}$ with a straightforward three-layer nested loop takes $O(n^2d)$ time (Line~1). 
There are more advanced skyline query algorithms~\cite{papadias2003optimal,tan2001efficient} 
but these are beyond the scope of the paper. 
Computing the maximum skyline points in the $d$ dimensions (Lines 3 to 6) takes $O(|\mathcal{P}|d)$ time. 
The MaxDif algorithm then calls FindMinMaxDifPoint for $k$ times (Lines 7 to 10). 
FindMinMaxDifPoint makes a two-layer nested loop pass over $\mathcal{P}$ to compute the MinMaxDif point, where 
computing the MaxDif value between two points needs to loop through $d$ dimensions. Thus, the time complexity 
 for the $k$ function calls is $O(k|\mathcal{P}|^2d)$. The overall time complexity is 
$O(n^2d + |\mathcal{P}|d + k|\mathcal{P}|^2d) = O(n^2d+k|\mathcal{P}|^2d)$. The worst-case time complexity is $O(kn^2d)$.

\textbf{Discussion.}
The set $\mathcal{S}^*$ returned by the MaxDif algorithm is a heuristic choice to approach  
the theoretically optimal answer set defined by Equation~\ref{eq:sstar}. It is based on a bound $maxdif^*(p_i, p)$ and 
aims to minimize the maximum regret ratio over the entire set of utility functions in $\mathcal{F}$ which is infinite.
 In the experiments, we can only test a finite subset $\mathcal{F}' \subset \mathcal{F}$
of utility functions. 
The set $\mathcal{S}^*$ generated to minimize the maximum regret ratio over $\mathcal{F}$ may 
not minimize that over $\mathcal{F}'$. 
The reason is as follows. Given a finite set of utility functions $\mathcal{F}' \subset \mathcal{F}$, there may be a size-$k$ subset $\mathcal{S}$ that 
contains all the skyline points that maximize the gains over $\mathcal{F}'$, and hence minimizes the maximum regret ratio over $\mathcal{F}'$.
This subset $\mathcal{S}$, however, may not minimize the maximum regret ratio over the set in $\mathcal{F} \setminus \mathcal{F}' $. 
The MaxDif algorithm, which considers both sets of $\mathcal{F}'$ and $\mathcal{F} \setminus \mathcal{F}' $ together,  
may return a different set $\mathcal{S}^*$ . 
Since $\mathcal{S}^*$ is different from $\mathcal{S}$, and $\mathcal{S}$ minimizes the maximum regret ratio over $\mathcal{F}' $, 
$\mathcal{S}^*$ may not minimize the maximum regret ratio over $\mathcal{F}' $. 
Regardless, as the experiments in the next section show, $\mathcal{S}^*$ still has consistently small maximum regret ratios over the set $\mathcal{F}' $.
Further, a larger $\mathcal{F}' $ may cause only a small increase in the maximum regret ratio of  $\mathcal{S}^*$ as the MaxDif algorithm already considers the 
infinite set $\mathcal{F}$ when generating $\mathcal{S}^*$.

\section{Experiments}\label{sec:exp}

We evaluate the empirical performance of the two proposed algorithms MinVar  and MaxDif. 

\subsection{Settings}
The algorithms are implemented in C++, and the experiments are run on a computer running the OS X 10.12 operating system with a 64-bit 
2.7 GHz Intel\textsuperscript{\textregistered} Quad-Core\textsuperscript{(TM)} i7 CPU and 16 GB RAM. 

Both real and synthetic data sets are used in the experiments. 
The real data sets used are the \emph{NBA}\footnote{http://www.databasebasketball.com}, the \emph{Stocks}\footnote{http://pages.swcp.com/stocks}, 
and the \emph{Weather}\footnote{https://crudata.uea.ac.uk/cru/data/hrg/tmc/} data sets.
NBA and Stocks have been used in previous studies on $k$-regret queries~\cite{peng2014geometry,peng2015k}. 
After filtering out data points with null fields, we obtain 20,640 data points of 7 dimensions in the NBA data set, including 88 skyline points. 
The Stocks data set contains 122,574 data points of 5 dimensions, including 39 skyline points.  
Weather is a larger data set, which contains 566,262 data points of 13 dimensions, including 7,947 skyline points. The 13 dimensions of each data point represent   
the elevation and 12 monthly mean temperature values of a weather observation point. We use absolute values of the temperature data since we assume positive utilities. A value of  zero 
is also allowed as it does not affect the correctness of any algorithm tested.

The synthetic data sets are generated using the \emph{anti-correlated} data set generator~\cite{borzsony2001skyline}, which is a popular data generator 
used in skyline query studies~\cite{papadias2003optimal,pei2007probabilistic,tao2009distance}. 
This data generator can generate points with correlated, anti-correlated, and random coordinate values in different dimensions. Data points with 
correlated coordinate values have similar coordinate values in different dimensions. This means that a data point $p$ with the largest coordinate value in one dimension 
is likely to have large coordinate values in other dimensions as well, and $p$ tends to dominate most other points. Only a small number of points like $p$ are needed to dominate all other points in a data set. Such a data set has only a small number of skyline points. 
In contrast, data points with anti-correlated coordinate values have large coordinate values in some dimensions while small coordinate values in 
other dimensions.  They tend not to dominate or be dominated by other points, which makes them likely skyline points. More skyline points exist in such a data set. 
Data points with random coordinate values have independent
coordinate values in different dimensions. A data set of such points has a relatively moderate number of skyline points.   
We generate \emph{Correlated}, \emph{Anti-correlated}, and \emph{Random} data sets with these different type of points, respectively.

We vary the data set cardinality $n$ from 10,000 to 1,000,000, the dimensionality $d$ from 2 to 12, and the query parameter $k$ from 10 to 50. 
Table~\ref{tbl:setting} summarizes the parameters and their values.
By default, we use a Random data set with 100,000 data points of 5 dimensions ($d = 5$), and $k=20$.
Note that both a proposed algorithm MinVar and a baseline algorithm MinWidth~\cite{kessler2015k}   
divide the data space into $t^{(d-1)}$ buckets and select a single point from each bucket to be added into the query answer, where 
$\displaystyle t = \lfloor (k-d+1)^\frac{1}{d-1}\rfloor$. With a default $d$ value of 5, $t = 1$ (i.e., the entire data space is considered as a bucket) for $k$ up to $19$. 
For such small values of $k$, the performance difference between the two algorithms is minimum, which can be seen from the experimental results in Section~\ref{sec:results} where the value of $k$ is varied. 
To observe the performance difference of the two algorithms, we use a default value of $k = 20$. We argue that a representative subset of 20 data points is still manageable  
by users.

\begin{table}
\centering
\caption{Experimental Settings}\label{tbl:setting}
\BlankLine\BlankLine
\begin{tabular}{ccc}
\toprule
\multicolumn{1}{c}{Parameter} & Values & Default \\
\midrule
Utility function & AUF, CES, Cobb-Douglas  & - \\
Data set & NBA, Stocks, Weather, & Random\\
 & Anti-correlated, Correlated, Random  & \\
Number of utility functions & 10k, 50k, 100k, 500k, 1000k & 10k\\
$n$ & 10k, 50k, 100k, 500k, 1000k & 100k \\
$d$ & 2, 4, 5, 6, 8, 10, 12 & 5\\
$k$ & 10, 20, 30, 40, 50 & 20\\
\bottomrule
\end{tabular}
\end{table}

We use three families of utility functions -- the generalized Cobb-Douglas functions (denoted by ``\textbf{Cobb-Douglas}''), 
the CES functions (denoted by ``\textbf{CES}''), and the linear summation functions (denoted by ``\textbf{AUF}''). 
The involvement of AUFs in the experiments serves to showcase the applicability of the proposed MinVar and MaxDif algorithms over a wider range of utility functions. 
MinVar has the same maximum regret ratio bounds for AUFs as those derived by~Nanongkai et al.~\cite{nanongkai2010regret}, 
since MinVar also uses parameter $t$  to bound the space partitioning, the values of which are no smaller than those used 
by~Nanongkai et al. MaxDif can also handle $k$-regret queries with CES functions or AUFs, 
but may produce suboptimal query answers. This is because the optimization function of MaxDif is designed 
for MUFs which may not minimize the maximum regret ratios for CES functions or AUFs.
In each set of experiments, we randomly generate from 10,000 to 1,000,000 sets of parameters $\{\alpha_i| i\in[1..d], \alpha_i \in [0, 1]\}$  
for each family of utility functions, where $\sum_{i=1}^{d}\alpha_i = 1$. 
The CES function has an extra parameter $b$. We generate random values of $b$ in the range of $[0.1,0.9]$.
By default, we use 10,000 utility functions in each utility function family for the testing. 
We run the algorithms on the data sets, and report the running time and the maximum regret ratio (denoted by ``\textbf{MRR}'') on the generated utility functions.

Since no algorithms have been proposed in the past for $k$-regret queries with MUFs,  
for comparison purposes, we use four baseline algorithms MinWidth, Angle, AreaGreedy, and MaxDom. 
MinWidth, Angle, and AreaGreedy have been proposed for $k$-regret queries with CES functions~\cite{kessler2015k};
MaxDom has been proposed for top-$k$ representative skyline queries~\cite{lin2007selecting}. 
We compare the maximum regret ratios 
of the answer sets generated by these algorithms over Cobb-Douglas functions, CES functions, and AUFs  
with those of the answer sets generated by the two proposed algorithms. 
Together we test six algorithms in our experiments. 

\begin{itemize}
\item \emph{MinVar} is the algorithm proposed in Section~\ref{sec:minvar}. We use $inc = 0.1\%n$ 
to control the number of iterations for which the sub-algorithm FindBreakpoints is run. 
We find that $itr_{max}=11$ is sufficient to handle the data sets tested. The results obtained are based on this setting.

\item 
\emph{MaxDif} is the heuristic algorithm proposed in Section~\ref{sec:maxdif}.

\item \emph{MinWidth}~\cite{kessler2015k}  is an algorithm with bounded maximum regret ratios for $k$-regret queries 
with CES functions.

\item \emph{Angle}~\cite{kessler2015k}  is a greedy algorithm (with no bounds on the maximum regret ratios) proposed 
for $k$-regret queries with CES functions. 

\item \emph{AreaGreedy}~\cite{kessler2015k}  is another greedy algorithm (with no bounds on the maximum regret ratios) proposed 
for $k$-regret queries with CES functions. 

\item \emph{MaxDom}~\cite{lin2007selecting} is a greedy algorithm 
that returns the $k$ representative skyline points which dominate the largest number 
of other points. For fairness, both MaxDom and MaxDif use the same straightforward algorithm to compute the skyline points, which 
checks every pair of points and every dimension for dominance (with early termination once dominance is detected). Skyline computation time 
is included in the running time reported.
\end{itemize}

\subsection{Results}\label{sec:results}

\begin{figure}[htp]
\centering
	\subfloat[MRR (Cobb-Douglas)]{
	\hspace{5mm}
		\begin{overpic}[scale=1.06]{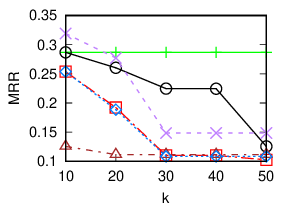}
		\end{overpic}
	}
	\subfloat[MRR (CES)]{
		\begin{overpic}[scale=1.06]{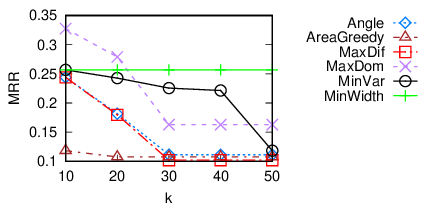}
		\end{overpic}
	}\\
		\vspace{-3mm}
	\subfloat[MRR (AUF)]{
		\hspace{-15mm}
		\begin{overpic}[scale=1.06]{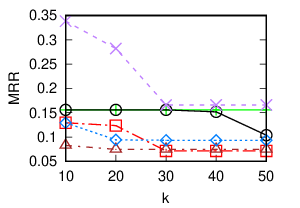}
		\end{overpic}
	}
	\subfloat[Running time]{
		\begin{overpic}[scale=1.06]{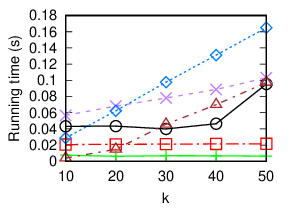}
		\end{overpic}
	}
	\vspace{-1mm}
	\caption{Varying $k$ (NBA) \label{fig:k_nba}}
\end{figure}

\textbf{Effect of $k$.}
We first test the effect of varying $k$. Figures~\ref{fig:k_nba} to~\ref{fig:k_random} show the result where $k$ 
is varied from 10 to 50 on the three real data sets and three synthetic data sets. 
In general, we can see decreasing maximum regret ratios (``MRR'' in the figures) as $k$ increases for all algorithms tested. Meanwhile, the algorithm 
running times increase. These are expected. A larger $k$ means more points are returned and a higher probability of satisfying more utility functions, 
which also take more time to compute. Note that the computation process and the answer set of each algorithm are independent of the different types of 
utility functions. Thus, the running times of the algorithms are independent of the utility function type, and we only report them once for each set of experiments.
The same answer set of each algorithm, however, may have different 
maximum regret ratios for different types of utility functions as shown in Fig.~\ref{fig:k_nba}, which is consistent with our motivating example in Section~\ref{sec:intro}.

\emph{Performance on real data sets.} 
Figures~\ref{fig:k_nba} to~\ref{fig:k_weather} show the results on the NBA, Stocks, and Weather data sets. 
We see that, between the two algorithms MinVar (denoted by`$\circ$' in the figure; same below) and MinWidth (`$+$') that have bounded maximum regret ratios, 
 the proposed algorithm MinVar has maximum regret ratios that are consistently lower than or equal to those of MinWidth.   
The advantage is more significant for larger $k$ values (e.g., up to 56\% lower for $k=50$ with Cobb-Douglas functions, cf. Fig.~\ref{fig:k_nba}a), 
as this allows more buckets to be created, and MinVar is designed 
to increase the probability of catching the optimal points by balancing the numbers of points across different buckets. 
This advantage comes at the expense of a higher running time than that of MinWidth.  
However, we argue that the running time of MinVar is manageable. 
As Fig.~\ref{fig:k_weather} shows,  MinVar can process the Weather data set which has over half a million data points in just 1.3 seconds. 
We also notice that MinWidth can process this data set in 0.3 seconds while producing the same maximum regret ratios. 
This shows that MinWidth is a highly competitive baseline algorithm, and it is not easy to outperform this algorithm in terms of the running time. 

\begin{figure}[htp]
\centering
	\subfloat[MRR (Cobb-Douglas)]{
		\hspace{5mm}
		\begin{overpic}[scale=1.06]{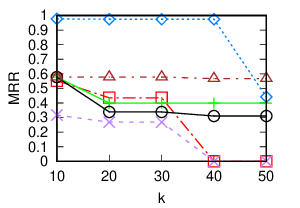}
		\end{overpic}
	}
	\subfloat[MRR (CES)]{
		\begin{overpic}[scale=1.06]{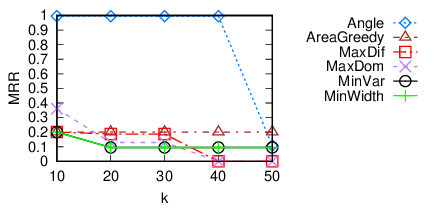}
		\end{overpic}
	}\\
	\vspace{-3mm}
	\subfloat[MRR (AUF)]{
		\hspace{-15mm}
		\begin{overpic}[scale=1.06]{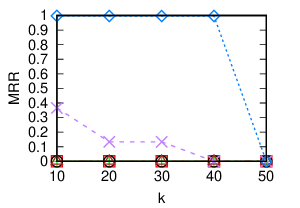}
		\end{overpic}
	}
	\subfloat[Running time]{
		\begin{overpic}[scale=1.08]{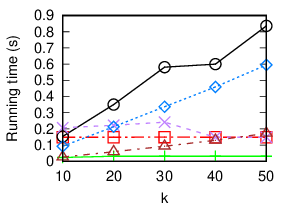}
		\end{overpic}
	}
	\vspace{-1mm}
	\caption{Varying $k$ (Stocks) \label{fig:k_stock}}
\end{figure}

\begin{figure}[htp]
\centering
	\subfloat[MRR (Cobb-Douglas)]{
		\hspace{5mm}
		\begin{overpic}[scale=1.06]{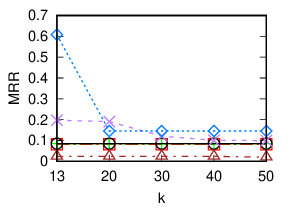}
		\end{overpic}
	}
	\subfloat[MRR (CES)]{
		\begin{overpic}[scale=1.06]{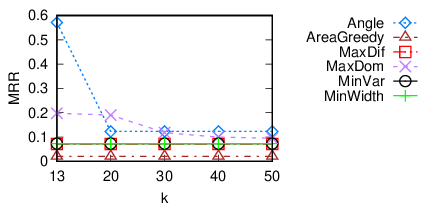}
		\end{overpic}
	}\\
	\vspace{-3mm}
	\subfloat[MRR (AUF)]{
		\hspace{-15mm}
		\begin{overpic}[scale=1.06]{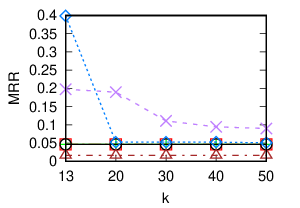}
		\end{overpic}
	}
	\subfloat[Running time]{
		\begin{overpic}[scale=1.07]{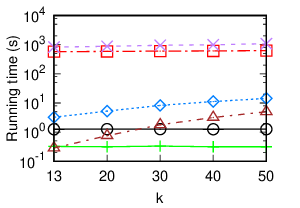}
		\end{overpic}
	}
	\vspace{-1mm}
	\caption{Varying $k$ (Weather) \label{fig:k_weather}}
\end{figure}

For the rest of the algorithms which do not have bounded maximum regret ratios, the proposed algorithm MaxDif (`$\Box$') has the most consistent performance in terms of 
the difference from the smallest maximum regret ratio achieved by any algorithm tested. 
The maximum regret ratios of MaxDif are no more than 0.24 higher than those of any algorithm tested (i.e., for Cobb-Douglas functions on the Stocks data set where $k=10$). 
MaxDif also has the smallest maximum regret ratios for the NBA data set where $k \ge 30$ and for the Stocks data set where $k \ge 40$. 
For the other heuristic algorithms, AreaGreedy (`$\triangle$') has maximum regret ratios that can be 
0.57 higher than that of MaxDif and MaxDom (`$\times$') on the Stocks data set where $k \ge 40$, although it also has 
the smallest maximum regret ratios on the Weather data set and for a few $k$ values on the NBA data set. 
Similarly, Angle (`$\diamond$') and MaxDom also suffer on the Stocks data set, where their maximum regret ratios are up to 0.99 and 0.36 higher than those 
of MaxDif, respectively.
The advantage of MaxDif is attributed to its point selection strategy. MaxDif adds the skyline point that differs the least from any of the unselected skyline points into the  answer set. By doing so, even if 
an unselected skyline point turns out to be optimal for some utility function, the optimality of the points in the answers set is not too much worse than that of the optimal point. 
This point selection strategy is also the reason why MaxDif may have larger maximum regret ratios than those of the other proposed algorithm 
MinVar in some cases (e.g., $k=$ 20 or 30 on the Stocks data set with Cobb-Douglas or CES  functions), 
where MinVar may happen to select an optimal skyline point or some non-skyline point close to it, while 
MaxDif may select a non-optimal skyline point. 

In terms of the running time, all the heuristic algorithms are slower than MinWidth. 
AreaGreedy and Angle require multiple scans over the data set to find the points that bound the maximum area and are the maximum 
towards different angles, respectively. MaxDif has a time complexity that is quadratic to the number of skyline points. Its running time is low when 
the number of skyline points is small, e.g., below 0.15 seconds for the NBA and Stocks data sets where the number of skyline points is below 100. 
This running time could become higher than those of AreaGreedy and Angle 
when the number of skyline points becomes larger, e.g., over 500 seconds for the Weather data set where  the number of skyline points is over 7,000.  
MaxDom checks every skyline point against all the data points and its running times are constantly higher than those of MaxDif.  It may take over 1,000 seconds to run on the Weather data set.   

\begin{figure}[htp]
\centering
	\subfloat[MRR (Cobb-Douglas) - NBA]{
		\hspace{5mm}
		\begin{overpic}[scale=1.06]{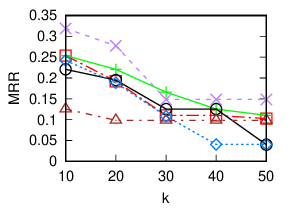}\label{fig:k_skyline_nba_mrr}
		\end{overpic}
	}
	\subfloat[Running time - NBA]{
		\begin{overpic}[scale=1.06]{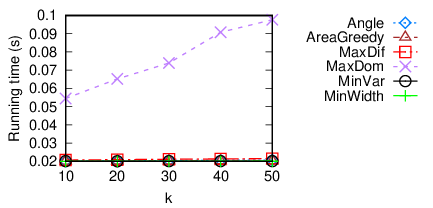}
		\end{overpic}
	}\\
		\vspace{-3mm}
	\hspace{-15mm}
	\subfloat[MRR (Cobb-Douglas) - Stocks]{
		\begin{overpic}[scale=1.06]{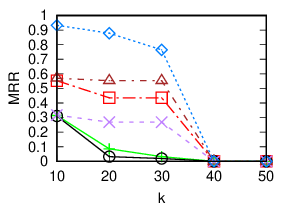}\label{fig:k_skyline_stock_mrr}
		\end{overpic}
	}
	\subfloat[Running time - Stocks]{
		\begin{overpic}[scale=1.07]{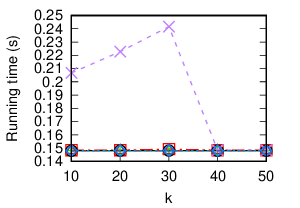}
		\end{overpic}
	}\\
		\vspace{-3mm}
		\hspace{-15mm}
	\subfloat[MRR (Cobb-Douglas) - Weather]{
		\begin{overpic}[scale=1.06]{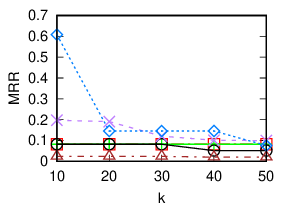}\label{fig:k_skyline_weather_mrr}
		\end{overpic}
	}
	\subfloat[Running time - Weather]{
		\begin{overpic}[scale=1.07]{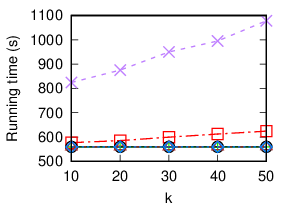}\label{fig:k_skyline_weather_time}
		\end{overpic}
	}
	\vspace{-1mm}
	\caption{Varying $k$ (skyline points of real data sets) \label{fig:k_skyline}}
\end{figure}

\emph{Performance on skyline points.}
The proposed algorithm MinVar and three baseline algorithms MinWidth, Angle, and AreaGreedy are designed to run on the entire data set, but they can also run 
on the set of of skyline points. We examine the performance of these four algorithms on skyline points in this set of experiments. 
Note that, for MinVar and MinWidth which have bounded maximum regret ratios, running them on skyline points 
does not impact their maximum regret ratio bounds for the following reason. We have shown that, for any MUF, its corresponding optimal point must be a skyline point. 
For any skyline point, a point is selected into the query answer from the bucket of the skyline point by MinVar and MinWidth. 
Further, the bucket size for the set of skyline points $\mathcal{P}$ must not exceed that  for the entire data set $\mathcal{D}$, 
since $\mathcal{P} \subseteq \mathcal{D}$. Thus, the bucket size based maximum regret ratio bounds still hold. We omit the full proof since it is straightforward. 

Figure~\ref{fig:k_skyline} shows the maximum regret ratios for Cobb-Douglas utility functions and running times when the algorithms are run on the skyline points of the three real datasets. 
We omit the MRR figures for CES and AUF utility functions, since the comparative performance of the algorithms running on skyline points with their counterparts running on the entire data set is similar to 
those shown in the MRR figures for Cobb-Douglas utility functions in Figs.~\ref{fig:k_nba} to~\ref{fig:k_skyline}. 

In terms of the maximum regret ratio, we see that MinVar (`$\circ$') and MinWidth (`$+$') benefit the most from running on the skyline points.  Their maximum regret ratios are now either the smallest (Fig.~\ref{fig:k_skyline_stock_mrr}) or very 
close to the smallest maximum regret ratios produced by any algorithm tested (Fig.~\ref{fig:k_skyline_nba_mrr} and Fig.~\ref{fig:k_skyline_weather_mrr}). Among these two algorithms, 
the proposed algorithm MinVar again obtains smaller maximum regret ratios because it can adaptively shrink the bucket size which leads to smaller  regret ratios. 
Angle (`$\diamond$') and AreaGreedy (`$\triangle$') benefit less. Their maximum regret ratios are still less stable across different data sets than those of the proposed algorithm MaxDif (`$\Box$'). 
They have the largest maximum regret ratios on the Stocks data set where $k \le 30$ (Fig.~\ref{fig:k_skyline_stock_mrr}), although AreaGreedy also has the smallest maximum regret ratios 
on the Weather data set (Fig.~\ref{fig:k_skyline_weather_mrr}). Focusing on the two proposed algorithms MaxDif and MinVar, 
MaxDif still produces smaller maximum regret ratios on the NBA data set where $20 \le k \le 40$ (Fig.~\ref{fig:k_skyline_nba_mrr}), while MinVar produces no larger 
maximum regret ratios in all other cases. This suggests that, while MaxDif is still an effective heuristic based algorithm, MinVar could be very competitive if it could be run on the skyline points. 

The smaller maximum regret ratios come at the cost of larger algorithm running times. Now all algorithms except MaxDom (`$\times$', which needs to run on the entire data set) 
have roughly the same running time, which is dominated by the time 
to compute the skyline points. For example, on the Weather data set (Fig.~\ref{fig:k_skyline_weather_time}), the running times of all algorithms except MaxDom are at about 560 seconds, among which 
 559.31 seconds are taken to compute the skyline points. Once the skyline points are computed, even  MaxDif  which 
has a quadratic running time on the number of skyline points only takes 64.88 seconds (i.e., a total of 624.19 seconds) to compute the query answer, whereas MinVar only takes 0.28 seconds 
to compute the query answer ($k=50$). 

In application scenarios where the data set is dynamic (e.g., online shopping services where new products keep arriving) 
and precomputing the skyline points is infeasible, the high cost of computing the skyline points 
would prevent running the algorithms on them. Thus, in the following experiments,  for MinVar, MinWidth, Angle, and AreaGreedy which 
do not have to run on the skyline points, we focus on their performance over the entire dataset.

\emph{Performance on synthetic data sets.} 
Similar performance patterns are observed on the three synthetic data sets, as shown in Figs.~\ref{fig:k_correlated} to~\ref{fig:k_random}.
MinVar (`$\circ$') has maximum regret ratios that are smaller than or equal to those of MinWidth~(`$+$') in almost all cases, except for 
when $k = 50$ on the Anti-correlated data set with CES functions (Fig.~\ref{fig:k_anticorrelated}b). The advantage in 
the maximum regret ratio is most significant on the Correlated data set. This can be explained by noting that the Correlated data set has a more skewed distribution, 
and MinVar is designed to obtain more balanced buckets for skewed data. The running times of MinVar are again larger than those of MinWidth, but are within 0.2 seconds, which is still 
reasonably small. Following studies~\cite{Miller:1968:RTM:1476589.1476628,Shneiderman:1984:RTD:2514.2517} 
on users' tolerable waiting times, we consider 2 seconds as the threshold of a ``reasonable'' running time.
MaxDif (`$\Box$')  has the smallest maximum regret ratios (up to 0.019) and running times (up to 0.030 seconds) 
on the Correlated data set except for when $k = 10$ (cf.~Fig.~\ref{fig:k_correlated}). 
This data set has 26 skyline points, which can be processed by MaxDif with a high efficiency. 
MaxDom (`$\times$') has the smallest maximum regret ratios on the Anti-correlated (up to 0.238) and Random data sets (up to 0.146), 
but its running times are also the largest (up to 569.3 and 9.8 seconds on the two data sets, respectively).  
MaxDif has the second smallest maximum regret ratios on these two data sets (up to 0.326 and 0.224 which are 37.0\% and 53.4\% higher, respectively) 
with the exception of AUFs on the Anti-correlated data set. Note that 
the optimization goal of MaxDif when selecting the data points is designed for MUFs, which may not be optimal for CES functions or AUFs. 
The running times of MaxDif are lower than those of MaxDom as well. It takes 1.7 seconds (82.7\% lower) to process the Random data set (1,068 skyline points) and 92.7 seconds (83.7\% lower)
to process the Anti-correlated data set (12,710 skyline points) when $k = 50$. For the Anti-correlated data set which has a large number of 
skyline points, MinVar (`$\circ$') and AreaGready (`$\triangle$')  offer the most competitive performance. Their maximum regret ratios are close to those of MaxDif, while their running times are within 0.1 seconds. 

\begin{figure}[htp]
\centering
	\subfloat[MRR (Cobb-Douglas)]{
				\hspace{5mm}
		\begin{overpic}[scale=1.06]{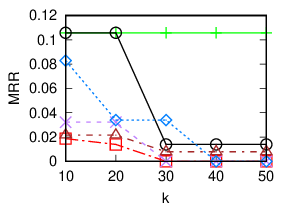}
		\end{overpic}
	}
	\subfloat[MRR (CES)]{
		\begin{overpic}[scale=1.06]{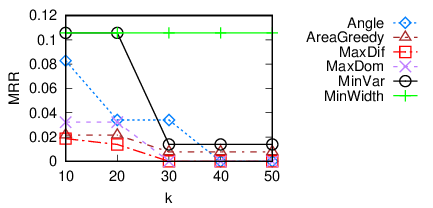}
		\end{overpic}
	}\\
	\vspace{-3mm}
	\subfloat[MRR (AUF)]{
				\hspace{-15mm}
		\begin{overpic}[scale=1.06]{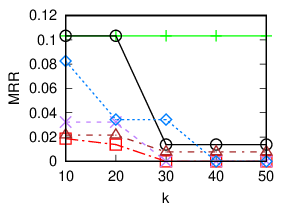}
		\end{overpic}
	}
	\subfloat[Running time]{
		\begin{overpic}[scale=1.06]{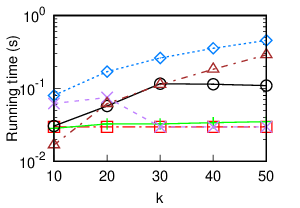}
		\end{overpic}
	}
	\vspace{-1mm}
	\caption{Varying $k$ (Correlated) \label{fig:k_correlated}}
\end{figure}

\begin{figure}[htp]
\centering
	\subfloat[MRR (Cobb-Douglas)]{
			\hspace{5mm}
		\begin{overpic}[scale=1.06]{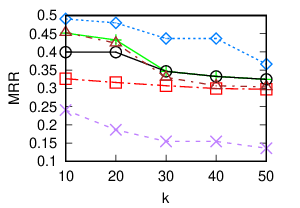}
		\end{overpic}
	}
	\subfloat[MRR (CES)]{
		\begin{overpic}[scale=1.06]{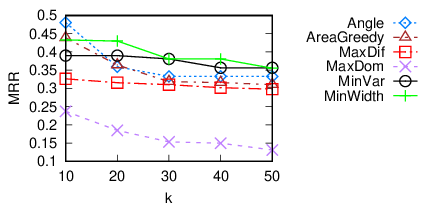}
		\end{overpic}
	}\\
	\vspace{-3mm}
	\subfloat[MRR (AUF)]{
			\hspace{-15mm}
		\begin{overpic}[scale=1.06]{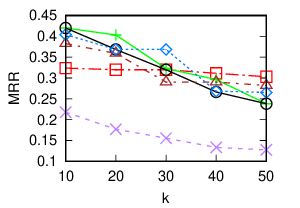}
		\end{overpic}
	}
	\subfloat[Running time]{
		\begin{overpic}[scale=1.06]{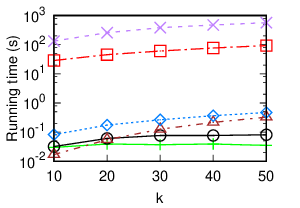}
		\end{overpic}
	}
	\vspace{-1mm}
	\caption{Varying $k$ (Anti-correlated) \label{fig:k_anticorrelated}}
\end{figure}

\begin{figure}[htp]
\centering
	\subfloat[MRR (Cobb-Douglas)]{
		\hspace{5mm}
		\begin{overpic}[scale=1.06]{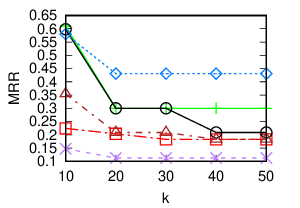}
		\end{overpic}
	}
	\subfloat[MRR (CES)]{
		\begin{overpic}[scale=1.06]{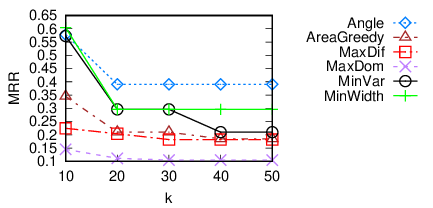}
		\end{overpic}
	}\\
	\vspace{-3mm}
	\subfloat[MRR (AUF)]{
		\hspace{-15mm}
		\begin{overpic}[scale=1.06]{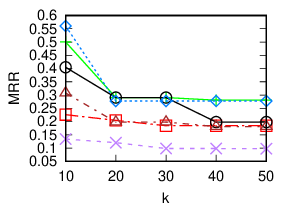}
		\end{overpic}
	}
	\subfloat[Running time]{
		\begin{overpic}[scale=1.06]{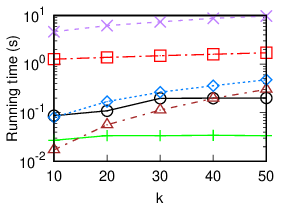}
		\end{overpic}
	}
	\vspace{-1mm}
	\caption{Varying $k$ (Random) \label{fig:k_random}}
\end{figure}

\begin{figure}[htp]
\centering
	\subfloat[MRR (Cobb-Douglas)]{
		\hspace{5mm}
		\begin{overpic}[scale=1.06]{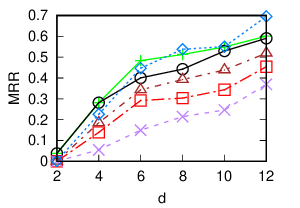}
		\end{overpic}
	}
	\subfloat[MRR (CES)]{
		\begin{overpic}[scale=1.06]{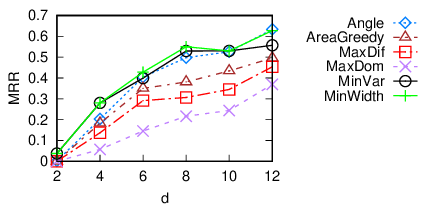}
		\end{overpic}
	}\\
	\vspace{-3mm}
	\subfloat[MRR (AUF)]{
			\hspace{-15mm}
		\begin{overpic}[scale=1.06]{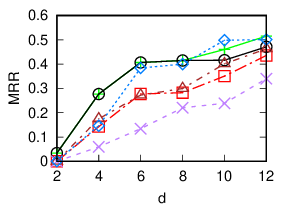}
		\end{overpic}
	}
	\subfloat[Running time]{
		\begin{overpic}[scale=1.06]{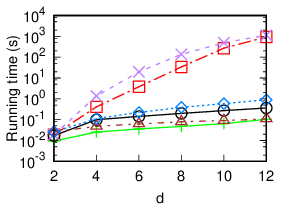}
		\end{overpic}
	}
	\vspace{-1mm}
	\caption{Varying $d$ (Random) \label{fig:d_random}}
\end{figure}

\begin{figure}[htp]
\centering
	\subfloat[MRR (Cobb-Douglas)]{
			\hspace{5mm}
		\begin{overpic}[scale=1.06]{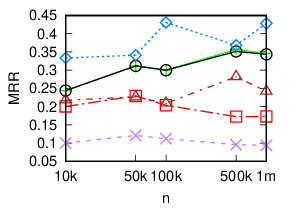}
		\end{overpic}
	}
	\subfloat[MRR (CES)]{
		\begin{overpic}[scale=1.06]{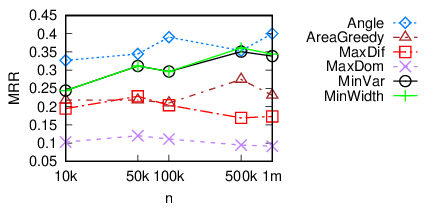}
		\end{overpic}
	}\\
	\vspace{-3mm}
	\subfloat[MRR (AUF)]{
		\hspace{-15mm}
		\begin{overpic}[scale=1.06]{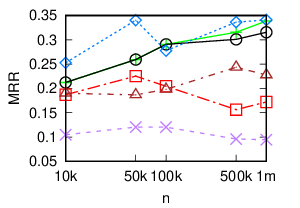}
		\end{overpic}
	}
	\subfloat[Running time]{
		\begin{overpic}[scale=1.06]{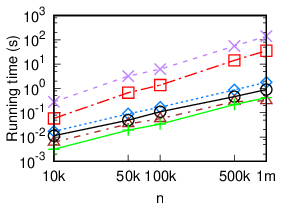}
		\end{overpic}
	}
	\vspace{-1mm}
	\caption{Varying $n$ (Random) \label{fig:n_random}}
\end{figure}

\textbf{Effect of $d$.}
Next, we test the algorithm scalability over the number of data dimensions. 
We vary the number of dimensions $d$ from 2 to 12 with synthetic data. 
Figure~\ref{fig:d_random} shows the result on Random data sets. 
As $d$ increases, the maximum regret ratios increase overall for all the algorithms. This confirms the bounds obtained and is expected, 
since the difference in the utilities of the optimal points in $\mathcal{D}$ 
and $\mathcal{S}$ accumulates when there are more dimensions. 
The increase in the maximum regret ratio is not linear to the increase in $d$, and there are fluctuations, e.g., the maximum regret 
ratios of MinWidth (`$+$') drops when $d$ increases from $8$ to $10$ for the CES functions. 
This is because adding extra dimensions changes the data distribution, the optimal data points, and the data points selected 
into the answer set. Such changes may allow a lower maximum regret ratio for a higher dimensional data set, e.g., a point with large utilities in the extra dimensions is selected 
into the answer set, which compensates the lower utilities in the previous dimensions.  

Similar to Fig.~\ref{fig:k_random}, on Random data sets, MaxDom~(`$\times$')  produces   
the smallest maximum regret ratios, while those of MaxDif (`$\Box$')  are the closest. 
The running times of these two algorithms become too high when $d$ reaches $8$ (33.9 and 137.0 seconds, respectively). For $d \ge 8 $, 
AreaGreedy becomes the most competitive algorithm, as its running time is within 0.1 seconds while its maximum regret ratios 
are close to those of MaxDif.  MinVar (`$\circ$')  has larger maximum regret ratios than those 
of the heuristic algorithms AreaGreedy and (in some cases) Angle (`$\diamond$'), but again 
its  maximum regret ratios do not exceed those of MinWidth (`$+$'). 
We note that the advantage of MinVar over MinWidth in terms of the maximum regret ratio is less significant on this set of experiments with random data, 
as MinVar is designed for handling more skewed data. 

Similar patterns are observed when the algorithms are run on Correlated and Anti-correlated data sets with different numbers of  
dimensions. The comparative performance of the algorithms 
is similar to those shown in Figs.~\ref{fig:k_correlated} and~\ref{fig:k_anticorrelated}.  On Correlated data sets where the numbers of 
skyline points are small, MaxDif and MaxDom are both fast and produce small maximum regret ratios. On Anti-correlated data sets where the numbers of 
skyline points are larger, MinVar and AreaGreeday both have relatively small maximum regret ratios, and they can terminate in a realistic time. We omit the figures 
for conciseness. The same applies for the rest of the experiments.

\textbf{Effect of $n$.}
We further vary the data set cardinality from 10,000 to 1,000,000.
Figure~\ref{fig:n_random} shows the result on Random data sets.
The comparative performance of the algorithms is again similar to that shown in Fig.~\ref{fig:k_random}.
MaxDom (`$\times$') has the smallest maximum regret ratios, while its running time (55.3 seconds) 
is not competitive for data sets with over half a million data points (1,715 skyline points).  
MaxDif (`$\Box$') again has the second smallest maximum regret ratios (except for the AUFs, for reasons discussed earlier), 
while it can process 1 million data points (2,090 skyline points)
within 35.6 seconds. When the number of data points reaches half a million, AreaGreedy (`$\triangle$') again becomes a more competitive heuristic
algorithm as its maximum regret ratios are close to those of MaxDif while its running times are within the 2-second threshold.  
As for the two algorithms MinVar~(`$\circ$')  and MinWidth (`$+$') with bounded maximum regret ratios, 
MinVar again has consistently non-greater maximum regret ratios, and the running times of both algorithms are 
within the 2-second threshold as well.

We also observe that, while the running times of all the algorithms increase with the data set cardinality, which is natural, the maximum regret ratios do not 
show a significant increase. This is because the number of points that yield large utilities for most utility functions increases with the data set cardinality,  
which leads to a higher probability of selecting and inserting one of such points into the answer set, and hence may produce a smaller maximum regret ratio.

\begin{figure}[htp]
\centering
	\subfloat[MRR (Cobb-Douglas)]{
		\begin{overpic}[scale=1.06]{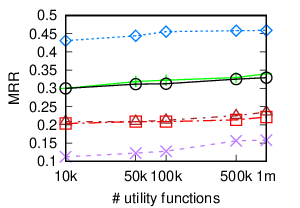}
		\end{overpic}
	}
	\subfloat[MRR (CES)]{
		\begin{overpic}[scale=1.06]{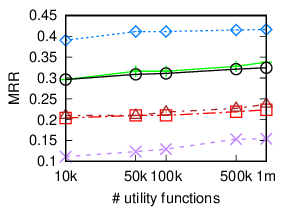}
		\end{overpic}
	}\\
	\vspace{-3mm}
	\subfloat[MRR (AUF)]{
		\begin{overpic}[scale=1.06]{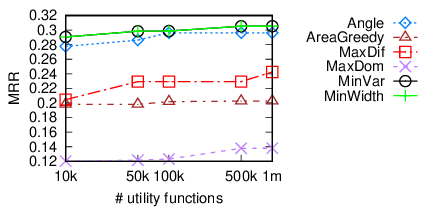}
		\end{overpic}
	}
	\vspace{-1mm}
	\caption{Varying the number of utility functions (Random) \label{fig:q_random}}
\end{figure}

\textbf{Effect of the number of utility functions.}
Figure~\ref{fig:q_random} shows the impact of varying the number of utility functions used in computing the maximum regret ratio 
from 10,000 to 1,000,000 (we generated more utility functions for this set of experiments). Since the algorithms are independent of 
any specific utility function, their output and running times do not change when the utility functions change.  
Only the maximum regret ratios may change. Thus, we  only show the maximum regret ratios and omit the running times for this set of experiments. 

The maximum regret ratios of the algorithms increase with the number of utility functions.
This is because, given the same answer set, when there are more utility functions, 
the probability of satisfying all the utility functions drops, and hence the maximum regret ratio may increase. 
We see from the figure that, while MaxDom (`$\times$')  has the smallest maximum regret ratio, its maximum regret ratio 
increases the most significantly with the number of utility functions. This is because MaxDom selects the set of skyline points 
that dominate the most non-skyline points. It does not consider the difference in the utilities of the selected and unselected skyline points. 
This difference can be large. When there are more utility functions, it is more likely to have an utility function for which 
an unselected skyline point has the largest utility, which can cause a large regret ratio for MaxDom. Our MaxDif  algorithm, on the other hand, 
is more robust to a larger number of utility functions (for the Cobb-Douglas utility functions), as it already considers the entire family of infinite multiplicative utility functions when computing the 
answer set. In particular, for the Cobb-Douglas utility functions, 
the maximum regret ratios of MaxDif (`$\Box$') have been kept steady at around 0.2 and become closer to those of MaxDom 
as more utility functions are used. For the CES and AUF functions, the maximum regret ratios of MaxDif and MaxDom do not become closer as fast, 
because the optimization function of MaxDif may not suit CES or AUF functions.

\begin{figure}[htp]
\centering
	\subfloat[MRR (Cobb-Douglas)]{
		\begin{overpic}[scale=1.06]{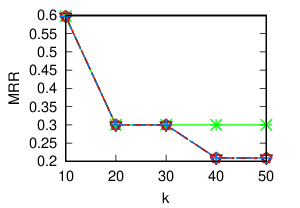}
		\end{overpic}
	}
	\subfloat[MRR (CES)]{
		\begin{overpic}[scale=1.06]{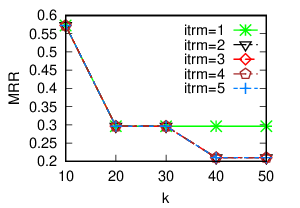}
		\end{overpic}
	}\\
	\vspace{-3mm}
	\subfloat[MRR (AUF)]{
		\begin{overpic}[scale=1.06]{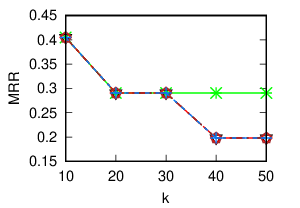}
		\end{overpic}
	}
	\subfloat[Running time]{
		\begin{overpic}[scale=1.06]{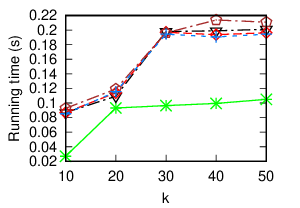}
		\end{overpic}
	}
	\vspace{-1mm}
	\caption{Varying $itr_m$ (Random) \label{fig:itrm_random}}
\end{figure}

\textbf{Effect of $itr_m$.}
Parameter $itr_m$ in MinVar controls the maximum number of iterations to try out different numbers of intervals in each dimension.
In Fig.~\ref{fig:itrm_random},  we run MinVar by 
varying $itr_m$ from 1 to 5 on the Random data set 
to study the impact of $itr_m$.  
We see that the maximum regret ratios decrease as $itr_m$ increases from 
1 to 2 but become stable when $itr_m$ increases further.  
This shows that $t+1$ intervals (i.e., second iteration) is sufficient to 
produce $k$ points to fill up the answer set, because the buckets are mostly non-empty for the Random data set. 
Only two iterations are run even if $itr_m > 2$.
Note that when $k = 20$, the algorithm running times are close for different values of $itr_m$. 
This is because $\displaystyle t = (k-d+1)^\frac{1}{d-1} = 2$ when $k = 20$, which creates  
$t^{d-1} = 16$ buckets and yields 16 points in one iteration. Adding in the four points with the largest coordinates in the first four dimensions, 
a total of 20 points are produced, which are sufficient to answer the query. One iteration is needed even for $itr_m > 1$.
Experiments on the other data sets reveal that $itr_m = 11$ is the largest needed for all data sets tested.  This supports the high efficiency 
of the MinVar algorithm.  

\begin{figure}[htp]
\centering
	\subfloat[MRR (Cobb-Douglas)]{
		\begin{overpic}[scale=1.06]{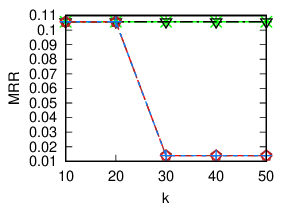}
		\end{overpic}
	}
	\subfloat[MRR (CES)]{
		\begin{overpic}[scale=1.06]{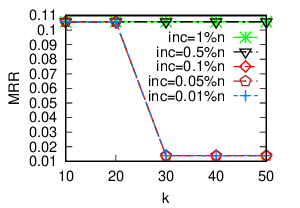}
		\end{overpic}
	}\\
	\vspace{-3mm}
	\subfloat[MRR (AUF)]{
		\begin{overpic}[scale=1.06]{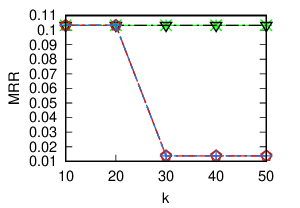}
		\end{overpic}
	}
	\subfloat[Running time]{
		\begin{overpic}[scale=1.06]{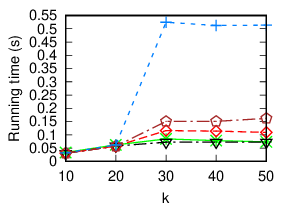}
		\end{overpic}
	}
	\vspace{-2mm}
	\caption{Varying $inc$ (Correlated) \label{fig:inc_correlated}}
	\vspace{-4mm}
\end{figure}

\textbf{Effect of $inc$.}
Parameter $inc$ in MinVar controls the increment step size  
of the sub-algorithm FindBreakPoints to optimize the interval size given $t$. 
We test the impact of $inc$ by varying it from $0.01\%n$ to $1\%n$. 
In Fig.~\ref{fig:inc_correlated}, we present the results on the Correlated data set, 
since the data space partitioning strategy of MinVar targets skewed data where $inc$ has the most impact. 
We see that the maximum regret ratios 
decrease when $inc$ reaches $0.1\%n$. 
This is because the decrease in $inc$ leads to buckets with more similar numbers of data points, which 
 increases the probability of the optimal points being selected.
However, when $inc$ decreases further, the benefit in the maximum regret ratios diminishes, 
while the running time keeps increasing as more iterations are needed.

In general, when $inc = 0.1\%n$, the MinVar algorithm achieves the best balance between the maximum regret ratio and the running time, 
which has been used by the default in the experiments.

\textbf{Discussion.} 
From the experiments, we see that MaxDom often produces small maximum regret ratios. However, \emph{MaxDom 
does not have a bound on the maximum regret ratio. Its maximum regret ratios grow as more utility functions  are used to evaluate the answer set. 
Further, the running time of MaxDom is quadratic over the 
number of data points} when the number of skyline points is close to the number of data points, which soon becomes unrealistic. 
The two proposed algorithms MaxDif and MinVar, on the other hand, have comparable maximum regret ratios, while their running times are lower. 
MinVar should be used if a bounded maximum regret ratio is needed. 
MaxDif does not have a bound but is designed to obtain small maximum regret ratios. It is robust to a large number of utility functions (e.g., 1,000,000).
MaxDif can produce query answers within 2 seconds when the number of skyline points is within 1,000 (for 5-dimensional data). MinVar can produce 
query answers within 1 second when the number of data points is within 1,000,000 (for 5-dimensional data). 
When the data set characteristics are unknown, in order to obtain the best results, a feasible approach is to run 
MaxDif and MinVar at the same time. If MaxDif terminates soon enough (e.g., in 2 seconds), 
its answer is returned. Otherwise, the answer produced by MinVar can be used.

\section{Conclusions}\label{sec:conclusions}

We studied $k$-regret queries with multiplicative utility functions which we showed to be more expressive in modeling 
the diminishing marginal rate of substitution than the additive utility functions studied in prior work~\cite{nanongkai2010regret}.
We presented two  algorithms MinVar and MaxDif for such queries. 
MinVar produces query answers with a bounded maximum regret ratio. 
When applied on $k$-regret queries with Cobb-Douglas functions, MinVar 
achieved a maximum regret ratio not exceeding $\displaystyle O(\ln(1+\frac{1}{k^{\frac{1}{d-1}}}))$;
 when applied on $k$-regret queries with Constant Elasticity of Substitution functions, 
MinVar achieved a  maximum regret ratio not exceeding $\displaystyle O(\frac{1}{k^{\frac{1}{d-1}}})$.
We also showed that, given an infinite set of data points, the maximum regret ratio of any algorithm for $k$-regret queries with multiplicative utility functions has a lower
bound of $\displaystyle \Omega(\frac{1}{k^2})$. 
When the data set is finite, the set of all skyline points 
has a maximum regret ratio of 0, which should be returned as the answer set when 
the number of skyline points does not exceed $k$. 
When there are more than $k$ skyline points, 
MaxDif computes a size-$k$ subset of skyline points to minimize the maximum regret ratio. 
We performed extensive experiments using both real and synthetic data.
The results showed that the regret ratios produced by MinVar and MaxDif are consistently small: 
MaxDif suits the case where the number of skyline points is not too large, 
while MinVar can produce small maximum regret ratios  in real-time as the number of skyline points becomes larger (e.g., over 1,000).

Future work involves exploring the behavior of 
$k$-regret queries with various other types of multiplicative utility functions and parameter settings such as the product of convex (concave) functions and when $\sum_{i=1}^{d}\alpha_i > 1$.
It would  also be interesting to see how $k$-regret queries with a mix of both additive and multiplicative utility functions 
can be answered with a bounded regret ratio.
 
 \begin{acks}
 This work is supported in part by Australian Research Council (ARC)  Discovery Projects DP180102050 and DP180103332 and the National Science
Foundation under Grant IIS-13-20791. We thank Dr.~Ashwin Lall for sharing the code of Angle, AreaGreedy, and MinWidth.
\end{acks}

\bibliographystyle{ACM-Reference-Format}
\small
\bibliography{reference}  


\begin{thebibliography}{00}


\ifx \showCODEN    \undefined \def \showCODEN     #1{\unskip}     \fi
\ifx \showDOI      \undefined \def \showDOI       #1{#1}\fi
\ifx \showISBNx    \undefined \def \showISBNx     #1{\unskip}     \fi
\ifx \showISBNxiii \undefined \def \showISBNxiii  #1{\unskip}     \fi
\ifx \showISSN     \undefined \def \showISSN      #1{\unskip}     \fi
\ifx \showLCCN     \undefined \def \showLCCN      #1{\unskip}     \fi
\ifx \shownote     \undefined \def \shownote      #1{#1}          \fi
\ifx \showarticletitle \undefined \def \showarticletitle #1{#1}   \fi
\ifx \showURL      \undefined \def \showURL       {\relax}        \fi
\providecommand\bibfield[2]{#2}
\providecommand\bibinfo[2]{#2}
\providecommand\natexlab[1]{#1}
\providecommand\showeprint[2][]{arXiv:#2}

\bibitem[\protect\citeauthoryear{??}{boo}{2018}]%
        {booking}
 \bibinfo{year}{2018}\natexlab{}.
\newblock \bibinfo{howpublished}{{http://booking.com}}.
  (\bibinfo{year}{2018}).
\newblock
\newblock
\shownote{Accessed: 2018-03-30.}


\bibitem[\protect\citeauthoryear{??}{ama}{2018}]%
        {amazon}
 \bibinfo{year}{2018}\natexlab{}.
\newblock \bibinfo{title}{{How many products does Amazon.com sell - January
  2018}}.
\newblock
  \bibinfo{howpublished}{{https://www.scrapehero.com/many-products-amazon-sell-january-2018/}}.
    (\bibinfo{year}{2018}).
\newblock
\newblock
\shownote{Accessed: 2018-03-30.}


\bibitem[\protect\citeauthoryear{Amir, Efrat, Indyk, and Samet}{Amir
  et~al\mbox{.}}{2001}]%
        {Amir01}
\bibfield{author}{\bibinfo{person}{A. Amir}, \bibinfo{person}{A. Efrat},
  \bibinfo{person}{P. Indyk}, {and} \bibinfo{person}{H. Samet}.}
  \bibinfo{year}{2001}\natexlab{}.
\newblock \showarticletitle{Efficient regular data structures and algorithms
  for dilation, location, and proximity problems}.
\newblock \bibinfo{journal}{{\em Algorithmica\/}} \bibinfo{volume}{30},
  \bibinfo{number}{2} (\bibinfo{year}{2001}), \bibinfo{pages}{164--187}.
\newblock


\bibitem[\protect\citeauthoryear{Ang, Samet, and Shaffer}{Ang
  et~al\mbox{.}}{1990}]%
        {Ang90}
\bibfield{author}{\bibinfo{person}{C.~H. Ang}, \bibinfo{person}{H. Samet},
  {and} \bibinfo{person}{C.~A. Shaffer}.} \bibinfo{year}{1990}\natexlab{}.
\newblock \showarticletitle{A new region expansion for quadtrees}.
\newblock \bibinfo{journal}{{\em IEEE Transactions on Pattern Analysis and
  Machine Intelligence\/}} \bibinfo{volume}{12}, \bibinfo{number}{7}
  (\bibinfo{year}{1990}), \bibinfo{pages}{682--686}.
\newblock


\bibitem[\protect\citeauthoryear{Aref and Samet}{Aref and Samet}{1997}]%
        {Aref97}
\bibfield{author}{\bibinfo{person}{Walid~G. Aref} {and} \bibinfo{person}{Hanan
  Samet}.} \bibinfo{year}{1997}\natexlab{}.
\newblock \showarticletitle{Efficient window block retrieval in quadtree-based
  spatial databases}.
\newblock \bibinfo{journal}{{\em Geoinformatica\/}} \bibinfo{volume}{1},
  \bibinfo{number}{1} (\bibinfo{year}{1997}), \bibinfo{pages}{59--91}.
\newblock
\showISSN{1384-6175}


\bibitem[\protect\citeauthoryear{Asudeh, Nazi, Zhang, and Das}{Asudeh
  et~al\mbox{.}}{2017}]%
        {Asudeh:2017:ECR:3035918.3035932}
\bibfield{author}{\bibinfo{person}{A. Asudeh}, \bibinfo{person}{A. Nazi},
  \bibinfo{person}{N. Zhang}, {and} \bibinfo{person}{G. Das}.}
  \bibinfo{year}{2017}\natexlab{}.
\newblock \showarticletitle{Efficient computation of regret-ratio minimizing
  set: A compact maxima representative}. In \bibinfo{booktitle}{{\em SIGMOD}}.
  \bibinfo{address}{Chicago, USA}, \bibinfo{pages}{821--834}.
\newblock


\bibitem[\protect\citeauthoryear{Berndt}{Berndt}{1976}]%
        {10.2307/1936009}
\bibfield{author}{\bibinfo{person}{E.~R. Berndt}.}
  \bibinfo{year}{1976}\natexlab{}.
\newblock \showarticletitle{Reconciling alternative estimates of the elasticity
  of substitution}.
\newblock \bibinfo{journal}{{\em The Review of Economics and Statistics\/}}
  \bibinfo{volume}{58}, \bibinfo{number}{1} (\bibinfo{year}{1976}),
  \bibinfo{pages}{59--68}.
\newblock


\bibitem[\protect\citeauthoryear{B{\"o}rzs{\"o}nyi, Kossmann, and
  Stocker}{B{\"o}rzs{\"o}nyi et~al\mbox{.}}{2001}]%
        {borzsony2001skyline}
\bibfield{author}{\bibinfo{person}{S. B{\"o}rzs{\"o}nyi}, \bibinfo{person}{D.
  Kossmann}, {and} \bibinfo{person}{K. Stocker}.}
  \bibinfo{year}{2001}\natexlab{}.
\newblock \showarticletitle{The skyline operator}. In \bibinfo{booktitle}{{\em
  ICDE}}. \bibinfo{address}{Heidelberg, Germany}, \bibinfo{pages}{421--430}.
\newblock


\bibitem[\protect\citeauthoryear{Braziunas and Boutilier}{Braziunas and
  Boutilier}{2007}]%
        {braziunas2007minimax}
\bibfield{author}{\bibinfo{person}{D. Braziunas} {and} \bibinfo{person}{C.
  Boutilier}.} \bibinfo{year}{2007}\natexlab{}.
\newblock \showarticletitle{Minimax regret based elicitation of generalized
  additive utilities}. In \bibinfo{booktitle}{{\em The 23rd Conference on
  Uncertainty in Artificial Intelligence}}. \bibinfo{address}{Vancouver,
  Canada}, \bibinfo{pages}{25--32}.
\newblock


\bibitem[\protect\citeauthoryear{Cao, Li, Wang, Wang, Wang, Wong, and Zhan}{Cao
  et~al\mbox{.}}{2017}]%
        {cao_et_al:LIPIcs:2017:7056}
\bibfield{author}{\bibinfo{person}{W. Cao}, \bibinfo{person}{J. Li},
  \bibinfo{person}{H. Wang}, \bibinfo{person}{K. Wang}, \bibinfo{person}{R.
  Wang}, \bibinfo{person}{R.~C.-W. Wong}, {and} \bibinfo{person}{W. Zhan}.}
  \bibinfo{year}{2017}\natexlab{}.
\newblock \showarticletitle{{K-regret minimizing set: efficient algorithms and
  hardness}}. In \bibinfo{booktitle}{{\em ICDT}}. \bibinfo{address}{Venice,
  Italy}, \bibinfo{pages}{11:1--11:19}.
\newblock


\bibitem[\protect\citeauthoryear{Cass}{Cass}{1965}]%
        {cass1965optimum}
\bibfield{author}{\bibinfo{person}{D. Cass}.} \bibinfo{year}{1965}\natexlab{}.
\newblock \showarticletitle{Optimum growth in an aggregative model of capital
  accumulation}.
\newblock \bibinfo{journal}{{\em The Review of Economic Studies\/}}
  \bibinfo{volume}{32}, \bibinfo{number}{3} (\bibinfo{year}{1965}),
  \bibinfo{pages}{233--240}.
\newblock


\bibitem[\protect\citeauthoryear{Caves and Christensen}{Caves and
  Christensen}{1980}]%
        {10.2307/1805230}
\bibfield{author}{\bibinfo{person}{D.~W. Caves} {and} \bibinfo{person}{L.~R.
  Christensen}.} \bibinfo{year}{1980}\natexlab{}.
\newblock \showarticletitle{Global properties of flexible functional forms}.
\newblock \bibinfo{journal}{{\em The American Economic Review\/}}
  \bibinfo{volume}{70}, \bibinfo{number}{3} (\bibinfo{year}{1980}),
  \bibinfo{pages}{422--432}.
\newblock


\bibitem[\protect\citeauthoryear{Chan, Jagadish, Tan, Tung, and Zhang}{Chan
  et~al\mbox{.}}{2006}]%
        {chan2006high}
\bibfield{author}{\bibinfo{person}{C.-Y. Chan}, \bibinfo{person}{H.~V.
  Jagadish}, \bibinfo{person}{K.-L. Tan}, \bibinfo{person}{A.~K.~H. Tung},
  {and} \bibinfo{person}{Z. Zhang}.} \bibinfo{year}{2006}\natexlab{}.
\newblock \showarticletitle{On high dimensional skylines}. In
  \bibinfo{booktitle}{{\em EDBT}}. \bibinfo{address}{Munich, Germany},
  \bibinfo{pages}{478--495}.
\newblock


\bibitem[\protect\citeauthoryear{Chen, Cong, Cao, and Tan}{Chen
  et~al\mbox{.}}{2015}]%
        {DBLP:conf/icde/ChenCCT15}
\bibfield{author}{\bibinfo{person}{L. Chen}, \bibinfo{person}{G. Cong},
  \bibinfo{person}{X. Cao}, {and} \bibinfo{person}{K.{-}L. Tan}.}
  \bibinfo{year}{2015}\natexlab{}.
\newblock \showarticletitle{Temporal spatial-keyword top-k publish/subscribe}.
  In \bibinfo{booktitle}{{\em ICDE}}. \bibinfo{address}{Seoul, South Korea},
  \bibinfo{pages}{255--266}.
\newblock


\bibitem[\protect\citeauthoryear{Chester, Thomo, Venkatesh, and
  Whitesides}{Chester et~al\mbox{.}}{2014}]%
        {chester2014computing}
\bibfield{author}{\bibinfo{person}{S. Chester}, \bibinfo{person}{A. Thomo},
  \bibinfo{person}{S Venkatesh}, {and} \bibinfo{person}{S. Whitesides}.}
  \bibinfo{year}{2014}\natexlab{}.
\newblock \showarticletitle{Computing $k$-regret minimizing sets}.
\newblock \bibinfo{journal}{{\em PVLDB\/}} \bibinfo{volume}{7},
  \bibinfo{number}{5} (\bibinfo{year}{2014}), \bibinfo{pages}{389--400}.
\newblock


\bibitem[\protect\citeauthoryear{Cobb and Douglas}{Cobb and Douglas}{1928}]%
        {10.2307/1811556}
\bibfield{author}{\bibinfo{person}{C.~W. Cobb} {and} \bibinfo{person}{P.~H.
  Douglas}.} \bibinfo{year}{1928}\natexlab{}.
\newblock \showarticletitle{A theory of production}.
\newblock \bibinfo{journal}{{\em The American Economic Review\/}}
  \bibinfo{volume}{18}, \bibinfo{number}{1} (\bibinfo{year}{1928}),
  \bibinfo{pages}{139--165}.
\newblock


\bibitem[\protect\citeauthoryear{Diamond}{Diamond}{1965}]%
        {diamond1965national}
\bibfield{author}{\bibinfo{person}{P.~A Diamond}.}
  \bibinfo{year}{1965}\natexlab{}.
\newblock \showarticletitle{National debt in a neoclassical growth model}.
\newblock \bibinfo{journal}{{\em The American Economic Review\/}}
  \bibinfo{volume}{55}, \bibinfo{number}{5} (\bibinfo{year}{1965}),
  \bibinfo{pages}{1126--1150}.
\newblock


\bibitem[\protect\citeauthoryear{Diamond, Helms, and Mirrlees}{Diamond
  et~al\mbox{.}}{1980}]%
        {DIAMOND19801}
\bibfield{author}{\bibinfo{person}{P.~A. Diamond}, \bibinfo{person}{L.~J.
  Helms}, {and} \bibinfo{person}{J.~A. Mirrlees}.}
  \bibinfo{year}{1980}\natexlab{}.
\newblock \showarticletitle{Optimal taxation in a stochastic economy}.
\newblock \bibinfo{journal}{{\em Journal of Public Economics\/}}
  \bibinfo{volume}{14}, \bibinfo{number}{1} (\bibinfo{year}{1980}),
  \bibinfo{pages}{1--29}.
\newblock


\bibitem[\protect\citeauthoryear{Esperan\c{c}a and Samet}{Esperan\c{c}a and
  Samet}{2002}]%
        {Espe02}
\bibfield{author}{\bibinfo{person}{Claudio Esperan\c{c}a} {and}
  \bibinfo{person}{Hanan Samet}.} \bibinfo{year}{2002}\natexlab{}.
\newblock \showarticletitle{Experience with SAND-Tcl: a scripting tool for
  spatial databases}.
\newblock \bibinfo{journal}{{\em Journal of Visual Languages and Computing\/}}
  \bibinfo{volume}{13}, \bibinfo{number}{2} (\bibinfo{year}{2002}),
  \bibinfo{pages}{229--255}.
\newblock


\bibitem[\protect\citeauthoryear{Huang, Wen, Qi, Zhang, Chen, and He}{Huang
  et~al\mbox{.}}{2011}]%
        {DBLP:conf/cikm/HuangWQZCH11}
\bibfield{author}{\bibinfo{person}{J. Huang}, \bibinfo{person}{Z. Wen},
  \bibinfo{person}{J. Qi}, \bibinfo{person}{R. Zhang}, \bibinfo{person}{J.
  Chen}, {and} \bibinfo{person}{Z. He}.} \bibinfo{year}{2011}\natexlab{}.
\newblock \showarticletitle{Top-k most influential locations selection}. In
  \bibinfo{booktitle}{{\em CIKM}}. \bibinfo{address}{Glasgow, United Kingdom},
  \bibinfo{pages}{2377--2380}.
\newblock


\bibitem[\protect\citeauthoryear{Ilyas, Beskales, and Soliman}{Ilyas
  et~al\mbox{.}}{2008}]%
        {Ilyas:2008:STK:1391729.1391730}
\bibfield{author}{\bibinfo{person}{I.~F. Ilyas}, \bibinfo{person}{G. Beskales},
  {and} \bibinfo{person}{M.~A. Soliman}.} \bibinfo{year}{2008}\natexlab{}.
\newblock \showarticletitle{A survey of top-k query processing techniques in
  relational database systems}.
\newblock \bibinfo{journal}{{\em {ACM Computing Surveys}\/}}
  \bibinfo{volume}{40}, \bibinfo{number}{4} (\bibinfo{year}{2008}),
  \bibinfo{pages}{11:1--11:58}.
\newblock


\bibitem[\protect\citeauthoryear{Katehakis and Veinott}{Katehakis and
  Veinott}{1987}]%
        {10.2307/3689689}
\bibfield{author}{\bibinfo{person}{M.~N. Katehakis} {and}
  \bibinfo{person}{A.~F. Veinott}.} \bibinfo{year}{1987}\natexlab{}.
\newblock \showarticletitle{The multi-armed bandit problem: decomposition and
  computation}.
\newblock \bibinfo{journal}{{\em Mathematics of Operations Research\/}}
  \bibinfo{volume}{12}, \bibinfo{number}{2} (\bibinfo{year}{1987}),
  \bibinfo{pages}{262--268}.
\newblock


\bibitem[\protect\citeauthoryear{Keeney and Raiffa}{Keeney and Raiffa}{1993}]%
        {keeney1993decisions}
\bibfield{author}{\bibinfo{person}{R.~L. Keeney} {and} \bibinfo{person}{H.
  Raiffa}.} \bibinfo{year}{1993}\natexlab{}.
\newblock \bibinfo{booktitle}{{\em Decisions with multiple objectives:
  preferences and value trade-offs}}.
\newblock \bibinfo{publisher}{Cambridge University Press},
  \bibinfo{address}{Cambridge, United Kingdom}.
\newblock


\bibitem[\protect\citeauthoryear{Kessler~Faulkner, Brackenbury, and
  Lall}{Kessler~Faulkner et~al\mbox{.}}{2015}]%
        {kessler2015k}
\bibfield{author}{\bibinfo{person}{T. Kessler~Faulkner}, \bibinfo{person}{W.
  Brackenbury}, {and} \bibinfo{person}{A. Lall}.}
  \bibinfo{year}{2015}\natexlab{}.
\newblock \showarticletitle{$K$-regret queries with nonlinear utilities}.
\newblock \bibinfo{journal}{{\em PVLDB\/}} \bibinfo{volume}{8},
  \bibinfo{number}{13} (\bibinfo{year}{2015}), \bibinfo{pages}{2098--2109}.
\newblock


\bibitem[\protect\citeauthoryear{Khalefa, Mokbel, and Levandoski}{Khalefa
  et~al\mbox{.}}{2008}]%
        {DBLP:conf/icde/KhalefaML08}
\bibfield{author}{\bibinfo{person}{M.~E. Khalefa}, \bibinfo{person}{M.~F.
  Mokbel}, {and} \bibinfo{person}{J.~J. Levandoski}.}
  \bibinfo{year}{2008}\natexlab{}.
\newblock \showarticletitle{Skyline query processing for incomplete data}. In
  \bibinfo{booktitle}{{\em ICDE}}. \bibinfo{address}{Canc{\'{u}}n,
  M{\'{e}}xico}, \bibinfo{pages}{556--565}.
\newblock


\bibitem[\protect\citeauthoryear{Khalefa, Mokbel, and Levandoski}{Khalefa
  et~al\mbox{.}}{2010}]%
        {DBLP:conf/cikm/KhalefaML10}
\bibfield{author}{\bibinfo{person}{M.~E. Khalefa}, \bibinfo{person}{M.~F.
  Mokbel}, {and} \bibinfo{person}{J.~J. Levandoski}.}
  \bibinfo{year}{2010}\natexlab{}.
\newblock \showarticletitle{Skyline query processing for uncertain data}. In
  \bibinfo{booktitle}{{\em CIKM}}. \bibinfo{address}{Toronto, Canada},
  \bibinfo{pages}{1293--1296}.
\newblock


\bibitem[\protect\citeauthoryear{Ligett and Piliouras}{Ligett and
  Piliouras}{2011}]%
        {ligett2011beating}
\bibfield{author}{\bibinfo{person}{K. Ligett} {and} \bibinfo{person}{G.
  Piliouras}.} \bibinfo{year}{2011}\natexlab{}.
\newblock \showarticletitle{Beating the best {N}ash without regret.}
\newblock \bibinfo{journal}{{\em SIGecom Exchanges\/}} \bibinfo{volume}{10},
  \bibinfo{number}{1} (\bibinfo{year}{2011}), \bibinfo{pages}{23--26}.
\newblock


\bibitem[\protect\citeauthoryear{Lin, Yuan, Zhang, and Zhang}{Lin
  et~al\mbox{.}}{2007}]%
        {lin2007selecting}
\bibfield{author}{\bibinfo{person}{X. Lin}, \bibinfo{person}{Y. Yuan},
  \bibinfo{person}{Q. Zhang}, {and} \bibinfo{person}{Y. Zhang}.}
  \bibinfo{year}{2007}\natexlab{}.
\newblock \showarticletitle{Selecting stars: The $k$ most representative
  skyline operator}. In \bibinfo{booktitle}{{\em ICDE}}.
  \bibinfo{address}{Istanbul, Turkey}, \bibinfo{pages}{86--95}.
\newblock


\bibitem[\protect\citeauthoryear{Miller}{Miller}{2008}]%
        {miller2008}
\bibfield{author}{\bibinfo{person}{E. Miller}.}
  \bibinfo{year}{2008}\natexlab{}.
\newblock \showarticletitle{An assessment of CES and Cobb-Douglas production
  functions}.
\newblock \bibinfo{journal}{{\em Congressional Budget Office\/}}
  (\bibinfo{year}{2008}).
\newblock


\bibitem[\protect\citeauthoryear{Miller}{Miller}{1968}]%
        {Miller:1968:RTM:1476589.1476628}
\bibfield{author}{\bibinfo{person}{Robert~B. Miller}.}
  \bibinfo{year}{1968}\natexlab{}.
\newblock \showarticletitle{Response time in man-computer conversational
  transactions}. In \bibinfo{booktitle}{{\em Proceedings of AFIPS Fall Joint
  Computer Conference}}. \bibinfo{pages}{267--277}.
\newblock


\bibitem[\protect\citeauthoryear{Nanongkai, Lall, Sarma, and Makino}{Nanongkai
  et~al\mbox{.}}{2012}]%
        {nanongkai2012interactive}
\bibfield{author}{\bibinfo{person}{D. Nanongkai}, \bibinfo{person}{A. Lall},
  \bibinfo{person}{A.~D. Sarma}, {and} \bibinfo{person}{K. Makino}.}
  \bibinfo{year}{2012}\natexlab{}.
\newblock \showarticletitle{Interactive regret minimization}. In
  \bibinfo{booktitle}{{\em SIGMOD}}. \bibinfo{address}{Scottsdale, USA},
  \bibinfo{pages}{109--120}.
\newblock


\bibitem[\protect\citeauthoryear{Nanongkai, Sarma, Lall, Lipton, and
  Xu}{Nanongkai et~al\mbox{.}}{2010}]%
        {nanongkai2010regret}
\bibfield{author}{\bibinfo{person}{D. Nanongkai}, \bibinfo{person}{A.~D.
  Sarma}, \bibinfo{person}{A. Lall}, \bibinfo{person}{R.~J. Lipton}, {and}
  \bibinfo{person}{J. Xu}.} \bibinfo{year}{2010}\natexlab{}.
\newblock \showarticletitle{Regret-minimizing representative databases}.
\newblock \bibinfo{journal}{{\em PVLDB\/}} \bibinfo{volume}{3},
  \bibinfo{number}{1-2} (\bibinfo{year}{2010}), \bibinfo{pages}{1114--1124}.
\newblock


\bibitem[\protect\citeauthoryear{Nutanong, Jacox, and Samet}{Nutanong
  et~al\mbox{.}}{2011}]%
        {Nuta11}
\bibfield{author}{\bibinfo{person}{S. Nutanong}, \bibinfo{person}{E.~H. Jacox},
  {and} \bibinfo{person}{H. Samet}.} \bibinfo{year}{2011}\natexlab{}.
\newblock \showarticletitle{An incremental {H}ausdorff distance calculation
  algorithm}.
\newblock \bibinfo{journal}{{\em PVLDB\/}} \bibinfo{volume}{4},
  \bibinfo{number}{8} (\bibinfo{year}{2011}), \bibinfo{pages}{506--517}.
\newblock


\bibitem[\protect\citeauthoryear{Papadias, Tao, Fu, and Seeger}{Papadias
  et~al\mbox{.}}{2003}]%
        {papadias2003optimal}
\bibfield{author}{\bibinfo{person}{D. Papadias}, \bibinfo{person}{Y. Tao},
  \bibinfo{person}{G. Fu}, {and} \bibinfo{person}{B. Seeger}.}
  \bibinfo{year}{2003}\natexlab{}.
\newblock \showarticletitle{An optimal and progressive algorithm for skyline
  queries}. In \bibinfo{booktitle}{{\em SIGMOD}}. \bibinfo{address}{San Diego,
  USA}, \bibinfo{pages}{467--478}.
\newblock


\bibitem[\protect\citeauthoryear{Pei, Jiang, Lin, and Yuan}{Pei
  et~al\mbox{.}}{2007}]%
        {pei2007probabilistic}
\bibfield{author}{\bibinfo{person}{J. Pei}, \bibinfo{person}{B. Jiang},
  \bibinfo{person}{X. Lin}, {and} \bibinfo{person}{Y. Yuan}.}
  \bibinfo{year}{2007}\natexlab{}.
\newblock \showarticletitle{Probabilistic skylines on uncertain data}. In
  \bibinfo{booktitle}{{\em VLDB}}. \bibinfo{address}{Vienna, Austria},
  \bibinfo{pages}{15--26}.
\newblock


\bibitem[\protect\citeauthoryear{Peng and Wong}{Peng and Wong}{2014}]%
        {peng2014geometry}
\bibfield{author}{\bibinfo{person}{P. Peng} {and} \bibinfo{person}{R.~C.-W.
  Wong}.} \bibinfo{year}{2014}\natexlab{}.
\newblock \showarticletitle{Geometry approach for $k$-regret query}. In
  \bibinfo{booktitle}{{\em ICDE}}. \bibinfo{address}{Chicago, USA},
  \bibinfo{pages}{772--783}.
\newblock


\bibitem[\protect\citeauthoryear{Peng and Wong}{Peng and Wong}{2015}]%
        {peng2015k}
\bibfield{author}{\bibinfo{person}{P. Peng} {and} \bibinfo{person}{R.~C.-W.
  Wong}.} \bibinfo{year}{2015}\natexlab{}.
\newblock \showarticletitle{K-hit query: top-$k$ query with probabilistic
  utility function}. In \bibinfo{booktitle}{{\em SIGMOD}}.
  \bibinfo{address}{Melbourne, Australia}, \bibinfo{pages}{577--592}.
\newblock


\bibitem[\protect\citeauthoryear{Samet}{Samet}{2006}]%
        {Same04}
\bibfield{author}{\bibinfo{person}{H. Samet}.} \bibinfo{year}{2006}\natexlab{}.
\newblock \bibinfo{booktitle}{{\em Foundations of multidimensional and metric
  data structures}}.
\newblock \bibinfo{publisher}{Morgan-Kaufmann}, \bibinfo{address}{San
  Francisco}.
\newblock
\newblock
\shownote{(Translated to Chinese ISBN 978-7-302-22784-7).}


\bibitem[\protect\citeauthoryear{Samet, Alborzi, Brabec, Esperan\c{c}a,
  Hjaltason, Morgan, and Tanin}{Samet et~al\mbox{.}}{2003}]%
        {Same03}
\bibfield{author}{\bibinfo{person}{Hanan Samet}, \bibinfo{person}{Houman
  Alborzi}, \bibinfo{person}{Franti\v{s}ek Brabec}, \bibinfo{person}{Claudio
  Esperan\c{c}a}, \bibinfo{person}{G\'{\i}sli~R. Hjaltason},
  \bibinfo{person}{Frank Morgan}, {and} \bibinfo{person}{Egemen Tanin}.}
  \bibinfo{year}{2003}\natexlab{}.
\newblock \showarticletitle{Use of the SAND spatial browser for digital
  government applications}.
\newblock \bibinfo{journal}{{\em Commun. ACM\/}} \bibinfo{volume}{46},
  \bibinfo{number}{1} (\bibinfo{year}{2003}), \bibinfo{pages}{61--64}.
\newblock


\bibitem[\protect\citeauthoryear{Samet, Shaffer, Nelson, Huang, and
  Rosenfeld}{Samet et~al\mbox{.}}{1987}]%
        {Same87a}
\bibfield{author}{\bibinfo{person}{H. Samet}, \bibinfo{person}{C.~A. Shaffer},
  \bibinfo{person}{R.~C. Nelson}, \bibinfo{person}{Y.~G. Huang}, {and}
  \bibinfo{person}{A. Rosenfeld}.} \bibinfo{year}{1987}\natexlab{}.
\newblock \showarticletitle{Recent developments in linear quadtree-based
  geographic information systems}.
\newblock \bibinfo{journal}{{\em Image Vision Comput.\/}} \bibinfo{volume}{5},
  \bibinfo{number}{3} (\bibinfo{year}{1987}), \bibinfo{pages}{187--197}.
\newblock


\bibitem[\protect\citeauthoryear{Shneiderman}{Shneiderman}{1984}]%
        {Shneiderman:1984:RTD:2514.2517}
\bibfield{author}{\bibinfo{person}{B. Shneiderman}.}
  \bibinfo{year}{1984}\natexlab{}.
\newblock \showarticletitle{Response time and display rate in human performance
  with computers}.
\newblock \bibinfo{journal}{{\it Comput. Surveys}} \bibinfo{volume}{16},
  \bibinfo{number}{3} (\bibinfo{year}{1984}), \bibinfo{pages}{265--285}.
\newblock


\bibitem[\protect\citeauthoryear{Soliman, Ilyas, and Chang}{Soliman
  et~al\mbox{.}}{2008}]%
        {soliman2008probabilistic}
\bibfield{author}{\bibinfo{person}{M.~A. Soliman}, \bibinfo{person}{I.~F.
  Ilyas}, {and} \bibinfo{person}{K.~C.-C. Chang}.}
  \bibinfo{year}{2008}\natexlab{}.
\newblock \showarticletitle{Probabilistic top-k and ranking-aggregate queries}.
\newblock \bibinfo{journal}{{\em ACM Transactions on Database Systems\/}}
  \bibinfo{volume}{33}, \bibinfo{number}{3} (\bibinfo{year}{2008}),
  \bibinfo{pages}{13:1--13:54}.
\newblock


\bibitem[\protect\citeauthoryear{Tan, Eng, and Ooi}{Tan et~al\mbox{.}}{2001}]%
        {tan2001efficient}
\bibfield{author}{\bibinfo{person}{K.-L. Tan}, \bibinfo{person}{P.-K. Eng},
  {and} \bibinfo{person}{B.~C. Ooi}.} \bibinfo{year}{2001}\natexlab{}.
\newblock \showarticletitle{Efficient progressive skyline computation}. In
  \bibinfo{booktitle}{{\em VLDB}}. \bibinfo{address}{Roma, Italy},
  \bibinfo{pages}{301--310}.
\newblock


\bibitem[\protect\citeauthoryear{Tao, Ding, Lin, and Pei}{Tao
  et~al\mbox{.}}{2009}]%
        {tao2009distance}
\bibfield{author}{\bibinfo{person}{Y. Tao}, \bibinfo{person}{L. Ding},
  \bibinfo{person}{X. Lin}, {and} \bibinfo{person}{J. Pei}.}
  \bibinfo{year}{2009}\natexlab{}.
\newblock \showarticletitle{Distance-based representative skyline}. In
  \bibinfo{booktitle}{{\em ICDE}}. \bibinfo{address}{Shanghai, China},
  \bibinfo{pages}{892--903}.
\newblock


\bibitem[\protect\citeauthoryear{Uzawa}{Uzawa}{1962}]%
        {uzawa1962production}
\bibfield{author}{\bibinfo{person}{H. Uzawa}.} \bibinfo{year}{1962}\natexlab{}.
\newblock \showarticletitle{Production functions with constant elasticities of
  substitution}.
\newblock \bibinfo{journal}{{\em The Review of Economic Studies\/}}
  \bibinfo{volume}{29}, \bibinfo{number}{4} (\bibinfo{year}{1962}),
  \bibinfo{pages}{291--299}.
\newblock


\bibitem[\protect\citeauthoryear{Varian}{Varian}{1992}]%
        {varian1992microeconomic}
\bibfield{author}{\bibinfo{person}{H.~R. Varian}.}
  \bibinfo{year}{1992}\natexlab{}.
\newblock \bibinfo{booktitle}{{\em Microeconomic analysis}}.
\newblock \bibinfo{publisher}{Norton \& Company}, \bibinfo{address}{New York,
  USA}.
\newblock


\bibitem[\protect\citeauthoryear{V{\^\i}lcu and V{\^\i}lcu}{V{\^\i}lcu and
  V{\^\i}lcu}{2011}]%
        {vilcu2011some}
\bibfield{author}{\bibinfo{person}{A.~D. V{\^\i}lcu} {and}
  \bibinfo{person}{G.~E. V{\^\i}lcu}.} \bibinfo{year}{2011}\natexlab{}.
\newblock \showarticletitle{On some geometric properties of the generalized
  {CES} production functions}.
\newblock \bibinfo{journal}{{\em {Applied Mathematics and Computation}\/}}
  \bibinfo{volume}{218}, \bibinfo{number}{1} (\bibinfo{year}{2011}),
  \bibinfo{pages}{124--129}.
\newblock


\bibitem[\protect\citeauthoryear{V{\^\i}lcu}{V{\^\i}lcu}{2011}]%
        {vilcu2011geometric}
\bibfield{author}{\bibinfo{person}{G.~E. V{\^\i}lcu}.}
  \bibinfo{year}{2011}\natexlab{}.
\newblock \showarticletitle{A geometric perspective on the generalized
  {C}obb-{D}ouglas production functions}.
\newblock \bibinfo{journal}{{\em Applied Mathematics Letters\/}}
  \bibinfo{volume}{24}, \bibinfo{number}{5} (\bibinfo{year}{2011}),
  \bibinfo{pages}{777--783}.
\newblock


\bibitem[\protect\citeauthoryear{Wu, Yiu, Cong, and Jensen}{Wu
  et~al\mbox{.}}{2012}]%
        {DBLP:journals/tkde/WuYCJ12}
\bibfield{author}{\bibinfo{person}{D. Wu}, \bibinfo{person}{M.~L. Yiu},
  \bibinfo{person}{G. Cong}, {and} \bibinfo{person}{C.~S. Jensen}.}
  \bibinfo{year}{2012}\natexlab{}.
\newblock \showarticletitle{Joint top-k spatial keyword query processing}.
\newblock \bibinfo{journal}{{\em IEEE Transactions on Knowledge and Data
  Engineering\/}} \bibinfo{volume}{24}, \bibinfo{number}{10}
  (\bibinfo{year}{2012}), \bibinfo{pages}{1889--1903}.
\newblock


\bibitem[\protect\citeauthoryear{Xia, Zhang, and Tao}{Xia
  et~al\mbox{.}}{2008}]%
        {xia2008skylining}
\bibfield{author}{\bibinfo{person}{T. Xia}, \bibinfo{person}{D. Zhang}, {and}
  \bibinfo{person}{Y. Tao}.} \bibinfo{year}{2008}\natexlab{}.
\newblock \showarticletitle{On skylining with flexible dominance relation}. In
  \bibinfo{booktitle}{{\em ICDE}}. \bibinfo{address}{Canc{\'{u}}n,
  M{\'{e}}xico}, \bibinfo{pages}{1397--1399}.
\newblock


\bibitem[\protect\citeauthoryear{Zabalza}{Zabalza}{1983}]%
        {10.2307/2232795}
\bibfield{author}{\bibinfo{person}{A. Zabalza}.}
  \bibinfo{year}{1983}\natexlab{}.
\newblock \showarticletitle{The Ces utility function, non-linear budget
  constraints and labour supply. Results on female participation and hours}.
\newblock \bibinfo{journal}{{\em The Economic Journal\/}} \bibinfo{volume}{93},
  \bibinfo{number}{370} (\bibinfo{year}{1983}), \bibinfo{pages}{312--330}.
\newblock


\bibitem[\protect\citeauthoryear{Zeighami and Wong}{Zeighami and Wong}{2016}]%
        {zeighami2016minimizing}
\bibfield{author}{\bibinfo{person}{S. Zeighami} {and} \bibinfo{person}{R.~C.-W.
  Wong}.} \bibinfo{year}{2016}\natexlab{}.
\newblock \showarticletitle{Minimizing average regret ratio in database}. In
  \bibinfo{booktitle}{{\em SIGMOD}}. \bibinfo{address}{San Francisco, USA},
  \bibinfo{pages}{2265--2266}.
\newblock


\end{thebibliography}

\end{document}